\newtheorem{theorem}{Theorem}[section]
\newtheorem{lemma}[theorem]{Lemma}
\newtheorem{corollary}[theorem]{Corollary}
\newtheorem{proposition}[theorem]{Proposition}
\newcommand{\braket}[2]{\left< #1 \vphantom{#2} \middle| #2 \vphantom{#1} \right>} 
\DeclarePairedDelimiter\rbra{\lparen}{\rparen}
\DeclarePairedDelimiter\sbra{\lbrack}{\rbrack}
\DeclarePairedDelimiter\cbra{\{}{\}}
\DeclarePairedDelimiter\abs{\lvert}{\rvert}
\DeclarePairedDelimiter\Abs{\lVert}{\rVert}
\DeclarePairedDelimiter\ket{\lvert}{\rangle}
\DeclarePairedDelimiter\bra{\langle}{\rvert}
\newcommand{\tr} {\operatorname{tr}}
\newcommand{\poly} {\operatorname{poly}}
\newcommand{\polylog} {\operatorname{polylog}}
\newcommand{\footremember}[2]{%
    \footnote{#2}
    \newcounter{#1}
    \setcounter{#1}{\value{footnote}}%
}
\title{Optimal Trace Distance and Fidelity Estimations \\ for Pure Quantum States}
\author{
    Qisheng Wang \footremember{1}{Qisheng Wang is with the Graduate School of Mathematics, Nagoya University, Nagoya 464-8602, Japan (e-mail: \href{mailto:QishengWang1994@gmail.com}{\nolinkurl{QishengWang1994@gmail.com}}).}
}
\date{}
\begin{document}

\maketitle

\begin{abstract}
    Measuring the distinguishability between quantum states is a basic problem in quantum information theory.
         In this paper, we develop \textit{optimal} quantum algorithms that estimate both the trace distance and the (square root) fidelity between pure states to within additive error $\varepsilon$ using $\Theta(1/\varepsilon)$ queries to their state-preparation circuits, quadratically improving the long-standing folklore $O(1/\varepsilon^2)$. 
         At the heart of our construction, is an algorithmic tool for quantum square root amplitude estimation, which generalizes the well-known quantum amplitude estimation.
\end{abstract}

\textbf{Keywords: quantum computing, quantum algorithms, trace distance, pure states, square root fidelity, quantum query complexity, quantum amplitude estimation.}

\newpage

\tableofcontents
\newpage
\section{Introduction}

    The distinguishability between quantum states is an important topic in quantum information theory. 
    Trace distance and fidelity are the most common measures of the closeness of two quantum states (cf.\ \cite{NC10}). 
    For two mixed quantum states $\rho$ and $\sigma$, the trace distance between them is defined by (cf.\ \cite[Equation (9.11)]{NC10})
    \begin{equation}
        \mathrm{T}\rbra{\rho, \sigma} = \frac 1 2 \tr\rbra{\abs{\rho - \sigma}}.
    \end{equation}
    The fidelity between $\rho$ and $\sigma$ is defined by (cf.\ \cite[Equation (9.53)]{NC10})
    \begin{equation}
        \mathrm{F} \rbra{\rho, \sigma} = \tr\rbra*{\sqrt{\sqrt{\sigma} \rho \sqrt{\sigma}}}.
    \end{equation}
    An alternative definition of fidelity also appears in the literature, defined by (cf.\ \cite{Uhl76,Joz94})
    \begin{equation}
        \mathrm{F}^2 \rbra{\rho, \sigma} = \rbra*{ \tr\rbra*{\sqrt{\sqrt{\sigma} \rho \sqrt{\sigma}}} }^2,
    \end{equation}
    which is equal to the square of $\mathrm{F} \rbra{\rho, \sigma}$. 
    To avoid confusion, throughout this paper, we call $\mathrm{F} \rbra{\rho, \sigma}$ the square root fidelity, and call $\mathrm{F}^2 \rbra{\rho, \sigma}$ the squared fidelity. 

    The trace distance and fidelity indicate how close two quantum states are. 
    When $\rho$ and $\sigma$ are close, the value of $\mathrm{T}\rbra{\rho, \sigma}$ is close to $0$ while the value of $\mathrm{F} \rbra{\rho, \sigma}$ is close to $1$. 
    A quantitative relationship between trace distance and fidelity was given in \cite[Theorem 1]{FvdG99} that 
    \begin{equation}
        1 - \mathrm{F} \rbra{\rho, \sigma} \leq \mathrm{T} \rbra{\rho, \sigma} \leq \sqrt{1 - \mathrm{F}^2 \rbra{\rho, \sigma}}.
    \end{equation}
    When $\rho = \ket{\varphi} \bra{\varphi}$ and $\sigma = \ket{\psi} \bra{\psi}$ are pure quantum states, the relationship between the two measures turns out to be simpler (cf.\ \cite[Equation (9.173)]{Wil13}):
    \begin{equation} \label{eq:relation-td-fi}
        \mathrm{T} \rbra{\ket{\varphi}, \ket{\psi}} = \sqrt{1 - \mathrm{F}^2 \rbra{\ket{\varphi}, \ket{\psi}}},
    \end{equation}
    where
    \begin{equation}
        \mathrm{F} \rbra{\ket{\varphi}, \ket{\psi}} = \abs*{\braket{\varphi}{\psi}}.
    \end{equation}

    An important question is how to estimate the value of the trace distance and fidelity between two quantum states so that we can analyze their closeness quantitatively.
    This is not solely a theoretical question, but also has potential in testing the equivalence of quantum circuits, the validity of quantum devices, the correctness of quantum programs, etc. 

    \subsection{Related Work} \label{sec:related-work}

    The investigation in estimating and testing the closeness of quantum states reveals better understanding of the power and limitations of quantum computing, and they have been shown useful both in theory and in practice. 
    There are two types of quantum input models that are commonly employed:
    \begin{itemize}
        \item Quantum query access model. In this model, quantum query access to quantum unitary oracles that prepare the quantum states to be tested is given.\footnote{When given quantum query access to a quantum unitary oracle $U$, we also assume (by default) that quantum query access to $U^\dag$, controlled-$U$, and controlled-$U^\dag$ is available. For the formal definition, please refer to \cref{sec:def-model}. 
        In addition to the access to $U$, it is necessary to also assume access to the controlled version and the inverse of $U$.
        This is because, when only provided with access to an unknown unitary oracle $U$, implementing controlled-$U$ is impossible \cite{AFCB14}, and implementing $U^\dag$ is challenging \cite{QDS+19}.
        Algorithms in the quantum query access model are particularly useful in the case when the circuit implementation of $U$ is known. 
        In such cases, the circuit implementations of $U^\dag$, controlled-$U$, and controlled-$U^\dag$ can be straightforwardly derived from that of $U$. This scenario is common, as most quantum algorithms are described by their circuit implementations.} 
        Such quantum unitary oracles are also known as the state-preparation circuits.
        The complexity in this model is measured by the number of queries to the oracles (including their inverses and controlled versions), called quantum query complexity.
        Research in this model includes quantum algorithms (e.g., \cite{GL20,vACGN23}) and quantum computational complexity (e.g., \cite{Wat02}). 
        \item Quantum sample access model. In this model, quantum sample access to independent and identical copies of the quantum states to be tested is given. 
        The complexity in this model is measured by the number of samples of the quantum states, called quantum sample complexity. 
        Quantum algorithms in this model include, e.g., \cite{HHJ+17,OW16,BOW19}.
    \end{itemize}

    There are a few approaches for pure quantum states in the literature. 
    One of the earliest approaches is the SWAP test \cite{BCWdW01}, which can be used to estimate the squared fidelity between pure quantum states to within additive error $\varepsilon$ with sample complexity $O\rbra{1/\varepsilon^2}$ or with query complexity $O\rbra{1/\varepsilon}$.
    The SWAP test also implies folklore methods for estimating the trace distance and the square root fidelity between pure quantum states (see \cref{sec:folklore} for further explanations). 
    In \cite{FL11}, the direct squared fidelity estimation for pure states was proposed under the restriction that only Pauli measurements are allowed.
    Another method called entanglement witness (cf.\ \cite[Section 6]{GT09}) can be used for squared fidelity estimation for some specific pure states with few measurements.
    Recently, a distributed quantum algorithm for pure-state squared fidelity estimation with independent and identical samples of quantum states as input was proposed in \cite{ALL22}.

    There are a number of approaches for mixed quantum states. 
    In \cite{BOW19}, the sample complexity of the closeness testing between mixed quantum states of rank $r$ was shown to be $\Theta\rbra{r/\varepsilon^2}$ with respect to trace distance and $\Theta\rbra{r/\varepsilon}$ with respect to square root fidelity. 
    When an $n$-dimensional mixed quantum state is not low-rank, we can just replace $r$ with $n$ in the complexity. 
    In \cite{GL20}, they showed that given the quantum query oracles that prepare the purifications of $n$-dimensional mixed quantum states, the query complexity for the closeness testing between them with respect to trace distance is $O\rbra{n/\varepsilon}$. 
    For square root fidelity estimation, the query complexity was shown to be $\widetilde O\rbra{r^{12.5}/\varepsilon^{13.5}}$ in \cite{WZC+23},\footnote{$\widetilde O\rbra{\cdot}$ suppresses polylogarithmic factors.} and was later improved to $\widetilde O\rbra{r^{6.5}/\varepsilon^{7.5}}$ in \cite{WGL+22} and $\widetilde O\rbra{r^{2.5}/\varepsilon^5}$ in \cite{GP22}; moreover, in \cite{GP22}, they showed that the sample complexity is $\widetilde O\rbra{r^{5.5}/\varepsilon^{12}}$.
    In addition, if two quantum states $\rho$ and $\sigma$ are well-conditioned, i.e., $\rho, \sigma \geq I/\kappa$ for some known $\kappa > 1$, then the query complexity for estimating their fidelity was shown to be $\widetilde O\rbra{\kappa^4/\varepsilon}$ in \cite{LWWZ24}.
    For trace distance estimation, the query complexity was shown to be $\widetilde O\rbra{r^5/\varepsilon^6}$ in \cite{WGL+22}, and was later improved to $\widetilde O\rbra{r/\varepsilon^2}$ \cite{WZ23}; moreover, in \cite{WZ23}, they showed that the sample complexity is $\widetilde O\rbra{r^2/\varepsilon^5}$. 
    In addition to the general approaches, there are also approaches for trace distance estimation in some practical scenarios proposed in \cite{ZRC19,ZR23}. 
    Several variational quantum algorithms for trace distance and fidelity estimations were proposed in \cite{CPCC20,CSZW22,TV21}.

    In computational complexity theory, quantum state discrimination with respect to trace distance (resp.\ fidelity), namely, the decision version of its estimation for mixed quantum states, is known to be $\mathsf{QSZK}$-complete in certain parameter regime \cite{Wat02,Wat09}.
    If the quantum states are guaranteed to be pure, then the quantum state discrimination is $\mathsf{BQP}$-complete \cite{RASW23,WZ23}. 
    Moreover, if the quantum states are prepared by polynomial-size quantum circuits acting on logarithmically many qubits, then the quantum state discrimination is $\mathsf{BQL}$-complete \cite{LGLW23}. 

    \subsection{Main Results}

    As quantum states are usually considered as the quantum analog of classical probability distributions, 
    the estimations of trace distance and (square root) fidelity of quantum states can be seen as quantum analogs of the closeness estimation of classical probability distributions.
    Optimal estimators have been known for closeness measures of probability distributions such as total variation distance and entropy \cite{VV17}.
    However, to the best of our knowledge, we are only aware of the folklore approaches for estimating the trace distance and square root fidelity between pure states which are based on the SWAP test \cite{BCWdW01} and have query complexity $O\rbra{1/\varepsilon^2}$ for additive error $\varepsilon$ (see \cref{lemma:folklore-td} and \cref{lemma:folklore-sqrt-fi}).\footnote{In the special case where the amplitudes of the pure states are guaranteed to be real numbers, a quantum query algorithm for square root fidelity estimation with query complexity $O\rbra{1/\varepsilon}$ was implied in \cite[Theorem 2.1]{KLLP19}. However, their method does not directly apply to the general complex-valued case.} 

    In this paper, we propose quantum query algorithms for estimating the trace distance and square root fidelity between pure quantum states with query complexity $\Theta\rbra{1/\varepsilon}$, assuming that their state-preparation circuits are given as quantum unitary oracles. 
    Moreover, we show that they are \textit{optimal} by providing matching lower bounds on the quantum query complexity of pure-state trace distance and square root fidelity estimations. 
    The input model adopted in our quantum algorithms is commonly employed in the study of quantum query algorithms regarding quantum states; for example, the query complexity of quantum state tomography was studied in \cite{vACGN23}.
    See \cref{sec:input-model} for the formal definition of this input model.
    
    We state our main results in the following theorem. 

    \begin{theorem} [Optimal pure-state trace distance and square root fidelity estimations] \label{thm:td-intro} \
        \begin{itemize}
            \item \textbf{Upper Bounds} (\cref{thm:td} and \cref{thm:fi} combined): There is a quantum query algorithm that estimates the trace distance and the square root fidelity between two pure quantum states to within additive error $\varepsilon$ with probability at least $2/3$ with query complexity $O\rbra{1/\varepsilon}$. 
            \item \textbf{Lower Bounds} (\cref{thm:qlower-td} and \cref{thm:qlb-sqrt-fi} combined): Any quantum query algorithm that estimates the trace distance or the square root fidelity between two pure quantum states to within additive error $\varepsilon$ with probability at least $2/3$ has query complexity $\Omega\rbra{1/\varepsilon}$. 
        \end{itemize}
    \end{theorem}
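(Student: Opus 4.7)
My plan is to reduce both problems to a single quantum primitive: given a unitary $V$ that marks a subspace with amplitude $\alpha \in [0, 1]$, estimate $\alpha$ to additive error $\varepsilon$ using $O(1/\varepsilon)$ queries to $V$. This is what the paper announces as \emph{square root amplitude estimation}. Conceptually, it is phase estimation on the Grover operator, whose rotation angle $\theta$ satisfies $\sin\theta = \alpha$; an $O(1/\varepsilon)$-precision estimate of $\theta$ automatically yields an $O(\varepsilon)$-precision estimate of $\sin\theta = \alpha$ since $\sin$ is $1$-Lipschitz. For the square root fidelity, I would take $V := U_\psi^\dag U_\varphi$ and note that $\bra{0} V \ket{0} = \braket{\psi}{\varphi}$, so the marked amplitude is exactly $\mathrm{F}(\ket{\varphi}, \ket{\psi})$. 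For the trace distance, I would instead use the SWAP test circuit on $\ket{\varphi}, \ket{\psi}$, whose control qubit reads $1$ with probability $(1 - \mathrm{F}^2)/2 = \mathrm{T}^2/2$ by the pure-state identity \eqref{eq:relation-td-fi}, so the marked amplitude is $\mathrm{T}/\sqrt{2}$ and the same primitive returns $\mathrm{T}$ to additive error $\varepsilon$ in $O(1/\varepsilon)$ queries.

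\textbf{Lower bounds.} I would reduce from the $\Omega(1/\varepsilon)$ quantum query lower bound for estimating the bias $p$ of an amplitude oracle $U$ with $U\ket{0} = \sqrt{p}\ket{0}\ket{g} + \sqrt{1-p}\ket{1}\ket{b}$. Setting $\ket{\varphi} := \ket{0}\ket{g}$ (trivially prepared by the identity) and $\ket{\psi} := U\ket{0}$, one has $\mathrm{F}(\ket{\varphi}, \ket{\psi}) = \sqrt{p}$ and $\mathrm{T}(\ket{\varphi}, \ket{\psi}) = \sqrt{1 - p}$. Restricting $p$ to $[1/4, 3/4]$, both maps $p \mapsto \sqrt{p}$ and $p \mapsto \sqrt{1-p}$ are bi-Lipschitz, so any $\varepsilon$-additive estimate of $\mathrm{F}$ or $\mathrm{T}$ yields an $O(\varepsilon)$-additive estimate of $p$, thereby transferring the known lower bound.

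\textbf{Main obstacle.} The delicate point is delivering the $O(1/\varepsilon)$-query guarantee \emph{uniformly} across all amplitudes, including the boundary regimes $\mathrm{F} \to 0$ and $\mathrm{F} \to 1$. Standard amplitude estimation returns a probability estimate $\tilde p$ with $\abs{\tilde p - p} = O(\sqrt{p(1-p)}/T + 1/T^2)$, which translates to an $O(1/T)$-precision square-root estimate only when $p$ is bounded away from $0$; when $p \lesssim 1/T^2$ one instead has $\sqrt{p} \lesssim 1/T$ and must output $0$, so a naive wrapper becomes a piecewise procedure. Phrasing everything at the level of the angle $\tilde\theta$ sidesteps this case analysis, but requires careful handling of the phase-estimation boundary ambiguities at $\theta \in \cbra{0, \pi/2}$ and of the standard median-boosting needed to promote the weak success probability of phase estimation to constant confidence while preserving the angular precision. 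Crystallizing these details into one uniform square root amplitude estimation primitive is, I expect, where the bulk of the technical work lies.
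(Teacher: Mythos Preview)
Your proposal is correct. The square root amplitude estimation primitive you describe is exactly the paper's \cref{thm:sqrt-ampl-est}, and your fidelity reduction via $U_\psi^\dag U_\varphi$ is the paper's construction. A few points of comparison are worth noting.

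\textbf{Trace distance upper bound.} You route through the SWAP test, obtaining a marked amplitude of $\mathrm{T}/\sqrt{2}$. The paper instead reuses the same $U_\varphi^\dag U_\psi$ circuit and simply flips which outcome is marked: the probability of landing outside $\ket{0}$ is $1-\abs{\braket{\varphi}{\psi}}^2 = \mathrm{T}^2$, so the marked amplitude is $\mathrm{T}$ exactly (see \cref{eq:def-W-intro} and the proof of \cref{thm:td}). Both work; the paper's version avoids doubling the state register and the stray $\sqrt{2}$.

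\textbf{Lower bounds.} Your reduction from amplitude estimation is valid, but be careful with ``$\ket{\varphi}:=\ket{0}\ket{g}$ prepared by the identity'': this only makes sense if $\ket{g}$ is a fixed known state, so you are implicitly invoking the single-qubit rotation lower bound. The paper takes a different route: it reduces from Belovs' $\Omega(1/d_{\textup{H}}(p,q))$ bound for distinguishing probability distributions (\cref{thm:qlower-prob-distri}), instantiated with the pair $p^\pm(j)=(1\pm(-1)^j 2\varepsilon)/n$. For trace distance it distinguishes $\ket{\psi^+}$ from $\ket{\psi^-}$ by estimating $\mathrm{T}(\ket{\psi},\ket{\psi^+})$, which is either $0$ or $2\varepsilon$; for fidelity it first proves a squared-fidelity lower bound and then lifts it. Your approach certifies hardness when $\mathrm{T},\mathrm{F}$ are bounded away from $0$ and $1$; the paper's trace-distance instance certifies hardness near $\mathrm{T}=0$.

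\textbf{On the ``main obstacle''.} You overestimate the difficulty. The paper's proof of \cref{thm:sqrt-ampl-est} is short and uniform: phase estimation returns $\tilde\varphi$ close (mod $1$) to either $\theta_p/\pi$ or $1-\theta_p/\pi$, and the single observation $\abs{\,\abs{\sin(\pi\tilde\varphi)}-\abs{\sin\theta_p}\,}\le \pi\abs{\tilde\varphi-\theta_p/\pi}$ (using that $\abs{\sin(\pi\,\cdot\,)}$ is $\pi$-Lipschitz and has period $1$ with the symmetry $\theta\leftrightarrow 1-\theta$) handles all boundary cases at once. No piecewise argument or median boosting is needed.
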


    In a nutshell, \cref{thm:td-intro} means that the query complexities for both pure-state trace distance estimation and pure-state square root fidelity estimation are $\Theta\rbra{1/\varepsilon}$, which quadratically improve the folklore results of $O\rbra{1/\varepsilon^2}$ that are based on the SWAP test (see \cref{lemma:folklore-td} and \cref{lemma:folklore-sqrt-fi}).
    Moreover, we show that $\Omega \rbra{1/\varepsilon}$ queries are necessary for pure-state trace distance and square root fidelity estimations, meaning that our quantum algorithm in \cref{thm:td-intro} is optimal only up to a constant factor.
    We also note that the quantum query algorithm given in \cref{thm:td-intro} can also be used to reproduce the pure-state squared fidelity estimation that is originally obtained by combining the SWAP test \cite{BCWdW01} and quantum amplitude estimation \cite{BHMT02} (see \cref{thm:sqr-fi}). 
    As the optimal estimation for pure-state squared fidelity based on the SWAP test is already known and has a wide range of applications, the readers may wonder how the pure-state trace distance and square root fidelity estimations given in \cref{thm:td-intro} can be useful.
    In \cref{sec:app}, we present concrete applications in which trace distance and square root fidelity estimations perform significantly better than the squared fidelity estimation based on the SWAP test. 

\begin{table*}[!htp]
\centering
\caption{Quantum query and sample complexities for pure-state closeness estimations.}
\label{tab:cmp}
\begin{tabular}{cccc}
\toprule
Bounds             & Trace Distance                                                                                                      & Square Root Fidelity                                                                                                      & Squared Fidelity                                    \\ \midrule
Query Upper Bound  & \begin{tabular}[c]{@{}c@{}}$O(1/\varepsilon^2)$ folklore\\ $O(1/\varepsilon)$ \cref{thm:td} \end{tabular} & \begin{tabular}[c]{@{}c@{}}$O(1/\varepsilon^2)$ folklore\\ $O(1/\varepsilon)$ \cref{thm:fi} \end{tabular} & $O(1/\varepsilon)$ \cite{BCWdW01,BHMT02}               \\ \midrule
Query Lower Bound  & $\Omega(1/\varepsilon)$ \cref{thm:qlower-td}                                                                   & $\Omega(1/\varepsilon)$ \cref{thm:qlb-sqrt-fi}                                                                    & $\Omega(1/\varepsilon)$ \cite{BBC+01,NW99}      \\ \midrule
Sample Upper Bound & $O(1/\varepsilon^4)$ folklore                                                                     & $O(1/\varepsilon^4)$ folklore                                                                     & $O(1/\varepsilon^2)$ \cite{BCWdW01}        \\ \midrule
Sample Lower Bound & $\Omega(1/\varepsilon^2)$ \cref{thm:slb-td}                                                                  & $\Omega(1/\varepsilon^2)$ \cref{thm:slb-sqrt-fi}                                                                  & $\Omega(1/\varepsilon^2)$ \cite{ALL22} \\ \bottomrule
\end{tabular}
\end{table*}

    In \cref{tab:cmp}, we compare the quantum query complexity and sample complexities for pure-state trace distance, square root fidelity, and squared fidelity estimations.
    For completeness, we provide folklore approaches concerning both query complexity and sample complexity in \cref{sec:folklore} (with the complexity for squared fidelity estimation given in \cref{lemma:folklore-sqr-fi}), and quantum sample lower bounds in \cref{sec:qslb}. 

    To better illustrate our results, we first mention its potential applications in \cref{sec:app}, and then introduce our techniques in \cref{sec:tech}: upper bounds in \cref{sec:tech-ub} and lower bounds in \cref{sec:tech-lb}. 
    The hardness in computational complexity theory will be discussed in \cref{sec:hardness}.
    Finally, we will give a brief discussion in \cref{sec:discussion}.

    \subsection{Applications} \label{sec:app}

    We present three concrete and representative applications of \cref{thm:td-intro} that could be of broad interest. 
    In the following, in addition to trace distance, we will also use the distance-like measure \textit{infidelity} to quantify the closeness between quantum states, defined by
    $\mathrm{I}\rbra{\rho, \sigma} = 1 - \mathrm{F}\rbra{\rho, \sigma}$ (note that the infidelity is defined in terms of square root fidelity).

    \subsubsection{Quantum State Tomography}
    Quantum state tomography \cite{HHJ+17,OW16} is a fundamental problem in quantum computing. 
    The task is to estimate an unknown quantum state to certain precision. 
    Existing approaches mainly measure errors in trace distance and infidelity. 
    In particular, the tomography of a $d$-dimensional pure quantum state can be done optimally by using $\Theta\rbra{d/\varepsilon^2}$ samples to error $\varepsilon$ in trace distance \cite{OW16}, and by using $\widetilde \Theta\rbra{d/\delta}$ samples to error $\delta$ in infidelity \cite{HHJ+17}. 
    When testing a concrete implementation for quantum state tomography, one may wish to see if its output is correct (i.e., with high enough precision) provided that the input state for testing is known in advance. 
    A simple way is to check if the trace distance (resp. infidelity) between the input and output states is small enough. 
    
    Suppose that the input state $\ket{\psi_{\textup{in}}}$ can be prepared by a known quantum circuit of size $T_{\textup{in}}$, and the classical description of the output state $\ket{\psi_{\textup{out}}}$ is given such that $\ket{\psi_{\textup{out}}}$ can be prepared with time complexity $\polylog\rbra{d} = \widetilde O\rbra{1}$.\footnote{It was shown in \cite{ZLY22,STY+23,Ros21,YZ23} that any $d$-dimensional (pure) quantum state can be prepared by a quantum circuit of depth $\Theta\rbra{\log\rbra{d}}$. Moreover, if the classical description of a pure quantum state is stored in QRAM (quantum random access memory) \cite{GLM08} equipped with a specific tree data structure \cite{KP17}, then the pure state can be prepared with time complexity $\polylog\rbra{d}$ on a quantum computer. It is noted that the controlled version of the state-preparation and its inverse can also be implemented with the same time complexity.} 
    We can estimate the trace distance and the infidelity between $\ket{\psi_{\textup{in}}}$ and $\ket{\psi_{\textup{out}}}$ on a quantum computer through the pure-state trace distance and square root fidelity estimations given in \cref{thm:td-intro}.
    To assess with high probability whether the tomography is successful, one can employ the known state $\ket{\psi_{\textup{in}}}$ as the input for the tomography. Subsequently, the tomography produces the resulting state $\ket{\psi_{\textup{out}}}$ represented as classical data.
    Following this approach, $\ket{\psi_{\textup{out}}}$ can be efficiently prepared after certain preprocessing, e.g., storing its amplitudes in a QRAM.
    Finally, the assessment is completed by estimating the closeness between $\ket{\psi_{\textup{in}}}$ and $\ket{\psi_{\textup{out}}}$.
    The time complexity is given as follows. 
    \begin{proposition} [Testing quantum state tomography] \label{prop:test-qst}
        With the assumptions mentioned above, we can estimate the trace distance and the infidelity between $\ket{\psi_{\textup{in}}}$ and $\ket{\psi_{\textup{out}}}$ to within additive error $\varepsilon$ on a quantum computer with time complexity $\widetilde O\rbra{T_{\textup{in}} / \varepsilon}$. 
    \end{proposition}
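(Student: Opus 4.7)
The plan is to reduce the proposition directly to the upper-bound part of \cref{thm:td-intro} by bounding the per-query cost of the two state-preparation oracles. First, since $\ket{\psi_{\textup{in}}}$ is produced by an explicit quantum circuit of size $T_{\textup{in}}$, its inverse, controlled, and controlled-inverse versions can each be implemented by circuits of size $O\rbra{T_{\textup{in}}}$ using the standard circuit transformations (reversing the gate order and taking adjoints for the inverse, and adding a single control wire to each gate for the controlled version). Second, by the preprocessing remark in the footnote concerning $\ket{\psi_{\textup{out}}}$, the classical description can be loaded into a QRAM-backed tree data structure so that the state-preparation unitary for $\ket{\psi_{\textup{out}}}$, as well as its inverse and controlled versions, can each be implemented with time complexity $\polylog\rbra{d} = \widetilde O\rbra{1}$.

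Next, I would apply the upper-bound part of \cref{thm:td-intro} with these two state-preparation oracles as input. This produces an estimate of $\mathrm{T}\rbra{\ket{\psi_{\textup{in}}},\ket{\psi_{\textup{out}}}}$ (respectively $\mathrm{F}\rbra{\ket{\psi_{\textup{in}}},\ket{\psi_{\textup{out}}}}$) to within additive error $\varepsilon$ with success probability at least $2/3$, using $O\rbra{1/\varepsilon}$ queries to the two oracles and their controlled/inverse variants. Each such query can be executed in time $O\rbra{T_{\textup{in}}} + \widetilde O\rbra{1} = \widetilde O\rbra{T_{\textup{in}}}$, so the total running time is $\widetilde O\rbra{T_{\textup{in}}/\varepsilon}$. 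Since $\mathrm{I}\rbra{\ket{\psi_{\textup{in}}},\ket{\psi_{\textup{out}}}} = 1 - \mathrm{F}\rbra{\ket{\psi_{\textup{in}}},\ket{\psi_{\textup{out}}}}$, an additive $\varepsilon$-estimate of the square root fidelity immediately yields an additive $\varepsilon$-estimate of the infidelity by subtracting from $1$, without any further cost.

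There is essentially no hard step: the argument is a direct composition of \cref{thm:td-intro} with a per-query cost accounting. The only subtlety to flag is that the assumed circuit for $\ket{\psi_{\textup{in}}}$ and the QRAM-based preparation of $\ket{\psi_{\textup{out}}}$ both support the controlled and inverse versions at the claimed cost, which is what allows the query complexity from \cref{thm:td-intro} to translate cleanly into time complexity; this is exactly the justification given in the footnote attached to the input model and in the footnote about QRAM-based state preparation.
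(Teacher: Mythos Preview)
Your proposal is correct and matches the paper's intended argument: the paper does not give a separate proof of this proposition, treating it as an immediate consequence of \cref{thm:td-intro} together with the per-query time costs $O\rbra{T_{\textup{in}}}$ and $\polylog\rbra{d}$ spelled out in the surrounding text and footnotes. Your accounting of the controlled and inverse versions, and the conversion from a square root fidelity estimate to an infidelity estimate, is exactly what is needed to fill in the details.
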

    It is worth noting that the approach in \cref{prop:test-qst} has a linear dependence on $\varepsilon$. 
    By comparison, any approach based on the squared fidelity estimation will only result in a time complexity of $\widetilde O\rbra{T_{\textup{in}} / \varepsilon^2}$.
    It can be seen that both pure-state trace distance and square root fidelity estimations are useful in testing quantum state tomography, with a quadratic speedup in the parameter $\varepsilon$ over the prior best approach based on squared fidelity estimation (i.e., the SWAP test). 
    
    \subsubsection{Quantum State Discrimination}
    We consider two situations of quantum state discrimination (cf.\ \cite{Che00,BC09,BK15}). 

    \paragraph{Minimum-error quantum hypothesis testing.}
    In quantum hypothesis testing, a pure quantum state $\ket{\phi}$ is given such that the two cases hold with equal probability: (i) $\ket{\phi} = \ket{\varphi}$, and (ii) $\ket{\phi} = \ket{\psi}$, where $\ket{\varphi}$ and $\ket{\psi}$ are two known pure quantum states.
    A basic problem is to find the minimum error probability for distinguishing the two cases.
    By the Helstrom-Holevo bound \cite{Hel67,Hol73}, the minimum error is given in terms of trace distance:
    \begin{equation}
        p_{\textup{err}} = \frac 1 2 - \frac 1 2 \mathrm{T}\rbra{\ket{\varphi}, \ket{\psi}}.
    \end{equation}
    However, it is necessary to know the minimum error for specific circumstances in practice. 
    For the two known quantum states $\ket{\varphi}$ and $\ket{\psi}$, once we have their state-preparation circuits, we can compute the probability $p_{\textup{err}}$ efficiently on a quantum computer through the pure-state trace distance estimation given in \cref{thm:td-intro}.
    \begin{proposition} [Computing the minimum error of quantum hypothesis testing] \label{corollary:min-err-qht}
        Suppose that we are given two quantum circuits of size $T_{\varphi}$ and $T_{\psi}$ that prepare the two known $d$-dimensional pure states $\ket{\varphi}$ and $\ket{\psi}$, respectively. 
        Then, we can compute the minimum error probability of the quantum hypothesis testing for $\ket{\varphi}$ and $\ket{\psi}$ to within additive error $\varepsilon$ on a quantum computer with time complexity $\widetilde O\rbra{\rbra{T_{\varphi}+T_{\psi}}/\varepsilon}$. 
    \end{proposition}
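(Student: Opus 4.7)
The plan is to derive this proposition as a direct corollary of the pure-state trace distance estimation in \cref{thm:td-intro}, rescaling the precision parameter to account for the factor of $1/2$ in the Helstrom--Holevo formula and translating the query complexity into time complexity using the given circuit sizes.

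First, I would note that by the Helstrom--Holevo bound stated just before the proposition, estimating $p_{\textup{err}}$ to additive error $\varepsilon$ is equivalent to estimating $\mathrm{T}(\ket{\varphi},\ket{\psi})$ to additive error $2\varepsilon$. I would therefore invoke the algorithm from \cref{thm:td-intro} with precision parameter $\varepsilon' = 2\varepsilon$ and success probability at least $2/3$. This algorithm uses $O(1/\varepsilon') = O(1/\varepsilon)$ queries to the state-preparation oracles $U_\varphi$ and $U_\psi$ (along with their inverses and controlled versions).

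Next, I would convert queries into time. Since $U_\varphi$ and $U_\psi$ are given as explicit quantum circuits of sizes $T_\varphi$ and $T_\psi$, their inverses and controlled versions can be implemented with only a constant-factor blowup, so each oracle call costs $O(T_\varphi + T_\psi)$ gates. Multiplying through, the running time of one execution of the trace distance estimator is $O((T_\varphi + T_\psi)/\varepsilon)$. To boost the success probability from $2/3$ to, say, $1 - o(1)$, I would repeat the estimator $O(\log(1/\delta))$ times and output the median; this adds only a logarithmic factor, which is absorbed into the $\widetilde{O}$ notation. The final output is $\tfrac{1}{2} - \tfrac{1}{2}\widehat{T}$, where $\widehat{T}$ is the median estimate, giving overall time complexity $\widetilde{O}((T_\varphi + T_\psi)/\varepsilon)$ as claimed.

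There is no real technical obstacle here, since all the heavy lifting sits inside \cref{thm:td-intro}; the proposition is essentially a packaging statement. The only mild care required is the bookkeeping that relates the additive error of $p_{\textup{err}}$ to that of $\mathrm{T}$, and the standard observation that having the circuit descriptions of $U_\varphi$ and $U_\psi$ (rather than black-box access) automatically yields access to $U_\varphi^\dag$, $U_\psi^\dag$, and the controlled variants required by the query model, as discussed in the footnote of \cref{sec:related-work}.
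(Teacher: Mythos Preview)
Your proposal is correct and matches the paper's approach: the paper presents this proposition as an immediate application of the trace distance estimator in \cref{thm:td-intro} without giving a separate formal proof, and your write-up supplies exactly the standard details (rescale the precision via the Helstrom--Holevo formula, convert queries to time using the given circuit sizes, and absorb the logarithmic overhead into $\widetilde O$). One minor remark: the $\widetilde O$ also hides gate overhead beyond the oracle calls---e.g., the inverse QFT in phase estimation and the $(k{-}1)$-controlled-NOT in the construction of $W$ from \cref{eq:def-W-intro}---not just the median amplification, but this does not affect the correctness of your argument.
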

    It is worth noting that the algorithm in \cref{corollary:min-err-qht} achieves a linear dependence on the precision $\varepsilon$, and prior to this, the folklore approach based on squared fidelity estimation (i.e., the SWAP test) will only result in a quadratic dependence on $\varepsilon$. 

    \paragraph{Sample complexity for quantum state discrimination.}
    When considering the sample complexity of quantum state discrimination, one is given with independent and identical samples of an unknown quantum state $\ket{\phi}$. 
    The task is to determine whether $\ket{\phi} = \ket{\varphi}$ or $\ket{\phi} = \ket{\psi}$ for two known states $\ket{\varphi}$ and $\ket{\psi}$, promised that it is in either case. 
    The sample complexity of quantum state discrimination, denoted as $\mathsf{S}\rbra{\ket{\varphi}, \ket{\psi}}$, is the minimum positive integer $S$ such that with high probability (say, greater than $2/3$) one can distinguish the two cases from the state $\ket{\phi}^{\otimes S}$. 
    It is known (cf.\ \cite{Wot81,HHJ+17}) that the growth of the sample complexity can be fully characterized by (the inverse of) infidelity:
    \begin{equation} \label{eq:bound-s}
        C_1/\mathrm{I}\rbra{\ket{\varphi}, \ket{\psi}} \leq \mathsf{S}\rbra{\ket{\varphi}, \ket{\psi}} \leq C_2/\mathrm{I}\rbra{\ket{\varphi}, \ket{\psi}}
    \end{equation}
    for some constant $C_1, C_2 > 0$. 
    For specific circumstances, one may wish to find reasonable bounds on the sample complexity. 
    Suppose that we know the quantum circuits that prepare the two known pure states $\ket{\varphi}$ and $\ket{\psi}$.
    By the pure-state square root fidelity estimation in \cref{thm:td-intro}, we can estimate $1/\mathrm{I}\rbra{\ket{\varphi}, \ket{\psi}}$ with relative error $\varepsilon$ on a quantum computer, and thus obtain upper and lower bounds on $\mathsf{S}\rbra{\ket{\varphi}, \ket{\psi}}$ with relative error $\varepsilon$ according to \cref{eq:bound-s}.

    \begin{proposition} [Bounding the sample complexity for quantum state discrimination] \label{corollary:sc-qsd}
        Suppose that we are given two quantum circuits of size $T_{\varphi}$ and $T_{\psi}$ that prepare the two known $d$-dimensional pure states $\ket{\varphi}$ and $\ket{\psi}$, respectively, with $B \leq \mathrm{I}\rbra{\ket{\varphi}, \ket{\psi}}$. 
        Then, we can estimate $1/\mathrm{I}\rbra{\ket{\varphi}, \ket{\psi}}$ with relative error $\varepsilon$ on a quantum computer with time complexity $\widetilde O\rbra{\rbra{T_{\varphi}+T_{\psi}}/\rbra{B\varepsilon}}$. 
    \end{proposition}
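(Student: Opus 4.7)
The plan is to reduce to the pure-state square root fidelity estimator from \cref{thm:td-intro} and then invert the resulting estimate, using the promised lower bound $B$ to translate additive precision into relative precision. First I would run the square root fidelity algorithm of \cref{thm:td-intro} with additive precision $\delta := \varepsilon B / 3$, obtaining an estimate $\widehat{F}$ with $\abs{\widehat{F} - \mathrm{F}\rbra{\ket{\varphi}, \ket{\psi}}} \leq \delta$ with probability at least $2/3$. Setting $\widehat{I} := 1 - \widehat{F}$ immediately yields an estimate of $\mathrm{I}\rbra{\ket{\varphi}, \ket{\psi}}$ with the same additive accuracy, and the reported estimate of $1/\mathrm{I}\rbra{\ket{\varphi}, \ket{\psi}}$ is then simply $1/\widehat{I}$.

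The key computation is
\[
    \abs*{ \frac{1}{\widehat{I}} - \frac{1}{\mathrm{I}\rbra{\ket{\varphi}, \ket{\psi}}} } = \frac{\abs{\widehat{I} - \mathrm{I}\rbra{\ket{\varphi}, \ket{\psi}}}}{\widehat{I} \cdot \mathrm{I}\rbra{\ket{\varphi}, \ket{\psi}}} \leq \frac{\delta}{\widehat{I} \cdot \mathrm{I}\rbra{\ket{\varphi}, \ket{\psi}}}.
\]
Since $\mathrm{I}\rbra{\ket{\varphi}, \ket{\psi}} \geq B$, we also have $\widehat{I} \geq B - \delta \geq 2B/3$ whenever $\varepsilon \leq 1$, so the right-hand side is at most $\frac{\varepsilon B/3}{(2B/3)\,\mathrm{I}\rbra{\ket{\varphi}, \ket{\psi}}} = \frac{\varepsilon}{2\,\mathrm{I}\rbra{\ket{\varphi}, \ket{\psi}}}$, which is within the target relative error $\varepsilon/\mathrm{I}\rbra{\ket{\varphi}, \ket{\psi}}$.

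For the cost, by \cref{thm:td-intro}, estimating $\mathrm{F}\rbra{\ket{\varphi}, \ket{\psi}}$ to additive error $\delta$ uses $O\rbra{1/\delta} = O\rbra{1/(\varepsilon B)}$ queries to the state-preparation unitaries (together with their controlled and inverse versions), each implementable in time $O\rbra{T_\varphi + T_\psi}$. Boosting the success probability from $2/3$ to high probability costs only a $\polylog$ factor via standard median-of-independent-runs amplification, which is absorbed into the $\widetilde O$, yielding the claimed $\widetilde O\rbra{\rbra{T_\varphi + T_\psi}/\rbra{B\varepsilon}}$ time complexity. The only nontrivial point is the translation between additive error on $\mathrm{I}\rbra{\ket{\varphi}, \ket{\psi}}$ and relative error on its reciprocal, where the lower bound $B$ is essential; without such a bound, any nontrivial relative accuracy for $1/\mathrm{I}\rbra{\ket{\varphi}, \ket{\psi}}$ would be information-theoretically impossible, so the dependence on $B$ is unavoidable.
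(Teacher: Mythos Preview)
Your proposal is correct and is exactly the approach the paper intends: the paper states this proposition as an application without a detailed proof, simply remarking that the pure-state square root fidelity estimation of \cref{thm:td-intro} yields an estimate of $1/\mathrm{I}\rbra{\ket{\varphi}, \ket{\psi}}$ with the stated relative error. Your argument correctly fills in the additive-to-relative error conversion using the promised bound $B$ and accounts for the query and time costs.
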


    Similar to \cref{corollary:min-err-qht}, the algorithm in \cref{corollary:sc-qsd} achieves a linear dependence on the precision $\varepsilon$.
    By comparison, the prior best approach is based on squared fidelity estimation (i.e., the SWAP test), which will only result in a quadratic dependence on $\varepsilon$. 

    \subsection{Techniques} \label{sec:tech}

    \subsubsection{Upper Bounds} \label{sec:tech-ub}

    A straightforward approach for pure-state trace distance estimation is to first estimate the fidelity between pure quantum states and then compute the trace distance by \cref{eq:relation-td-fi} using arithmetic operations, which will result in a query complexity of $O\rbra{1/\varepsilon^2}$ (see \cref{sec:folklore} for detailed explanations).
    Here, the difficulty comes from the numerical stability when taking the square root (see \cref{prop:sqrt-stable}). 
    A possible way to improve this approach is to skip these square root operations. 
    In fact, this can be accomplished through the very nature of quantum computing.

    At a high level, square roots are essentially ubiquitous in the basis of quantum computing.
    As an illustration, when we measure a pure quantum state $\ket{\psi} = \alpha \ket{0} + \beta \ket{1}$ in the computational basis, the outcome will be $0$ with probability $p_0 = \abs{\alpha}^2$ and $1$ with probability $p_1 = \abs{\beta}^2$. 
    As shown in this example, \textit{the absolute values of the amplitudes actually store the square root of the probabilities}, as $\abs{\alpha} = \sqrt{p_0}$ and $\abs{\beta} = \sqrt{p_1}$. 
    The quantum amplitude estimation \cite{BHMT02} allows us to estimate the probability $p_0$ to within additive error $\varepsilon$ with query complexity $O\rbra{1/\varepsilon}$, which means that we can estimate $\sqrt{p_0}$ to within additive error $\sqrt{\varepsilon}$ if we directly take the square root of the estimated value of $p_0$ (the numerical error is guaranteed by \cref{prop:sqrt-stable}). 
    Regarding this, a natural question is:
    \begin{equation} \tag{$*$} \label{eq:question}
    \begin{gathered}
    \textit{Can we directly learn the square root of probabilities} \\
    \textit{rather than just the probabilities themselves} \\
    \textit{on a quantum computer?}
    \end{gathered}
    \end{equation}
    
    If this could be done efficiently, the task of pure-state trace distance estimation (as well as pure-state fidelity estimation) would be a candidate application that is noteworthy. 
    To see this, suppose that $U_{\varphi}$ and $U_{\psi}$ are two quantum unitary operators that prepare two $k$-qubit pure quantum states $\ket{\varphi} = U_{\varphi} \ket{0}$ and $\ket{\psi} = U_{\psi} \ket{0}$, respectively. 
    Then, we can construct a unitary operator 
    \begin{equation} \label{eq:def-W-intro}
    \begin{aligned}
        W = & \rbra*{X_{\mathsf{A}} \otimes \ket{0}_{\mathsf{B}} \bra{0} + \sum_{j = 1}^{2^k - 1} I_{\mathsf{A}} \otimes \ket{j}_{\mathsf{B}} \bra{j}} \cdot \rbra*{I_{\mathsf{A}} \otimes \rbra*{U_{\varphi}^\dag U_{\psi}}_{\mathsf{B}}},
    \end{aligned}
    \end{equation}
    which prepares a pure quantum state with the trace distance $\mathrm{T} \rbra{\ket{\varphi}, \ket{\psi}}$ encoded in one of its amplitudes.
    More precisely, 
    \begin{equation}
        W\ket{0}_{\mathsf{A}}\ket{0}_{\mathsf{B}} = \sqrt{p} \ket{0}_{\mathsf{A}} \ket{\phi_0}_{\mathsf{B}} + \sqrt{1 - p} \ket{1}_{\mathsf{A}} \ket{\phi_1}_{\mathsf{B}}
    \end{equation}
    for some normalized pure quantum states $\ket{\phi_0}$ and $\ket{\phi_1}$, where $\sqrt{p} = \mathrm{T} \rbra{\ket{\varphi}, \ket{\psi}}$.
    See the proof of \cref{thm:td} for more details about pure-state trace distance estimation. 
    The aforementioned approach for pure-state trace distance estimation can also be adjusted to pure-state square root fidelity estimation effortlessly by noting that $\sqrt{1 - p} = \mathrm{F} \rbra{\ket{\varphi}, \ket{\psi}}$. 

    \paragraph{Square Root Amplitude Estimation.}
    
    Lastly, we give a positive answer to the question (\ref{eq:question}) by providing a quantum query algorithm for square root amplitude estimation, which generalizes the well-known quantum amplitude estimation proposed in \cite{BHMT02}. 
    As the name implies, the square root amplitude estimation allows us to directly estimate the square root of the probability $p_0$ of the measurement outcome $0$, whereas the original amplitude estimation \cite{BHMT02} only allows us to estimate the value of $p_0$ itself. 
    This algorithmic tool is formally stated as follows, which, we believe, could be used as a basic subroutine in numerous applications in future research. 

    \begin{theorem} [Square root amplitude estimation, \cref{thm:sqrt-ampl-est} restated] \label{thm:sqrt-amp-est}
        Suppose that $U$ is a quantum unitary oracle such that
        \begin{equation}
        U\ket{0}_{\mathsf{A}}\ket{0}_{\mathsf{B}} = \sqrt{p} \ket{0}_{\mathsf{A}}\ket{\phi_0}_{\mathsf{B}} + \sqrt{1-p} \ket{1}_{\mathsf{A}}\ket{\phi_1}_{\mathsf{B}},
        \end{equation}
        where $\ket{\phi_0}$ and $\ket{\phi_1}$ are normalized pure states. 
        Then, for every $\varepsilon \in \rbra{0, 1}$, we can estimate the value of $\sqrt{p}$ to within additive error $\varepsilon$, using $O\rbra{1/\varepsilon}$ queries to $U$.
    \end{theorem}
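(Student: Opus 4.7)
My approach is to reuse the quantum amplitude estimation machinery of \cite{BHMT02} and change only the final classical readout. Writing $\sqrt{p} = \sin\theta$ for the unique $\theta \in [0, \pi/2]$, the first step is to construct the Grover-type operator
\begin{equation}
    Q \;=\; -\,U\rbra*{I - 2\ket{0}_{\mathsf{A}}\ket{0}_{\mathsf{B}}\bra{0}_{\mathsf{A}}\bra{0}_{\mathsf{B}}}U^\dag\rbra*{I - 2\,\ket{0}_{\mathsf{A}}\bra{0}_{\mathsf{A}} \otimes I_{\mathsf{B}}},
\end{equation}
which costs $O(1)$ queries to $U$ and $U^\dag$. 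A direct calculation (as in \cite{BHMT02}) shows that $Q$ preserves the two-dimensional subspace spanned by $\ket{0}_{\mathsf{A}}\ket{\phi_0}_{\mathsf{B}}$ and $\ket{1}_{\mathsf{A}}\ket{\phi_1}_{\mathsf{B}}$, acting on it as rotation by $2\theta$; consequently $U\ket{0}_{\mathsf{A}}\ket{0}_{\mathsf{B}}$ decomposes into the two eigenvectors of $Q$ with eigenvalues $e^{\pm 2 i \theta}$.

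Next, I apply quantum phase estimation to $Q$ with $M$ controlled applications, which yields an outcome $\tilde y \in \cbra*{0, 1, \dots, M - 1}$. Setting $\tilde\theta := \pi\min\cbra*{\tilde y,\, M - \tilde y}/M \in [0, \pi/2]$ and invoking the standard analysis gives $\abs*{\tilde\theta - \theta} \leq \pi/M$ with probability at least $8/\pi^2 > 2/3$. The sole departure from \cite{BHMT02} comes at the readout: I return $\sin\tilde\theta$ as an estimate of $\sqrt{p}$ rather than $\sin^2\tilde\theta$ as an estimate of $p$. Since $\sin$ is $1$-Lipschitz on $[0, \pi/2]$,
\begin{equation}
    \abs*{\sin\tilde\theta - \sqrt{p}} \;=\; \abs*{\sin\tilde\theta - \sin\theta} \;\leq\; \abs*{\tilde\theta - \theta} \;\leq\; \pi/M,
\end{equation}
so taking $M = \ceil*{\pi/\varepsilon}$ delivers the claimed $O(1/\varepsilon)$ query bound.

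The main obstacle here is conceptual rather than technical: the lossy step in the folklore pipeline is entirely the final classical square root, whose derivative diverges near $0$, whereas extracting $\sin\tilde\theta$ directly from phase estimation transfers the angle precision to $\sqrt{p}$ verbatim because $\sin$ has Lipschitz constant $1$ uniformly on $[0, \pi/2]$. The remaining housekeeping is routine: the ambiguity between the eigenvalues $e^{\pm 2 i \theta}$ that phase estimation cannot resolve is absorbed by the folding $\tilde y \mapsto \min\cbra*{\tilde y,\, M - \tilde y}$ together with the identity $\sin(\pi - x) = \sin x$, and the degenerate endpoints $p \in \cbra*{0, 1}$---where $\ket{\phi_0}$ or $\ket{\phi_1}$ is undefined---are handled by noting that $Q$ fixes $U\ket{0}_{\mathsf{A}}\ket{0}_{\mathsf{B}}$, so phase estimation returns a deterministic $\tilde y$ and the output is exact.
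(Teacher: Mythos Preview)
Your proposal is correct and follows essentially the same approach as the paper: build the Grover operator $Q$ from $U$, run phase estimation to obtain an estimate of $\theta_p = \arcsin(\sqrt{p})$, and then output $\sin$ of the estimated angle rather than $\sin^2$, using the $1$-Lipschitz property of $\sin$ (the paper writes this as returning $\abs{\sin(\pi\tilde\varphi)}$, which absorbs the $\pm$ eigenvalue ambiguity in the same way your folding $\tilde y \mapsto \min\{\tilde y, M-\tilde y\}$ does). Your explicit treatment of the degenerate endpoints $p\in\{0,1\}$ is a nice addition that the paper leaves implicit.
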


    The square root amplitude estimation given in \cref{thm:sqrt-amp-est} can be quadratically faster than the original amplitude estimation in \cite{BHMT02} (which will result in query complexity $O\rbra{1/\varepsilon^2}$) when the goal is to estimate the square root probability $\sqrt{p}$ (see explanations in \cref{sec:ampl-est}). 
    In spite of the quadratic speedup, our algorithm for square root amplitude estimation essentially builds on the framework of the original amplitude estimation \cite{BHMT02}, with a simple observation.\footnote{In \cite[Appendix C]{dW19}, a similar observation was noted: 
    the differences between $\sqrt{p}$, $\sin\rbra{\sqrt{p}}$, and $\arcsin\rbra{\sqrt{p}}$ are negligible when $p \ll 1$.
    This serves as a hint for Exercise 8 in Chapter 7 of \cite{dW19}.}
    To introduce the idea, let us review the main process of amplitude estimation \cite{BHMT02}. 
    Let
    \begin{equation}
    \begin{aligned}
        Q = 
        & -U \cdot \rbra*{I_{\mathsf{A}} \otimes I_{\mathsf{B}} - 2 \ket{0}_{\mathsf{A}}\bra{0} \otimes \ket{0}_{\mathsf{B}}\bra{0}} \cdot U^\dag \cdot \rbra*{I_{\mathsf{A}} \otimes I_{\mathsf{B}} - 2\ket{0}_{\mathsf{A}}\bra{0} \otimes I_{\mathsf{B}}},
    \end{aligned}
    \end{equation}
    which uses $2$ queries to $U$ (and $U^\dag$). 
    Then, we can find a pair of eigenvectors $\ket{\psi_{\pm}}$ of $Q$ with eigenvalues $e^{\pm i 2 \theta_p}$, where $\theta_p = \arcsin\rbra{\sqrt{p}}$.
    After we obtain an estimate $\tilde{\theta}_p$ of $\theta_p$ to within additive error $\varepsilon$ by quantum phase estimation \cite{Kit95}, the value of $\sin^2\rbra{\tilde \theta_p}$ is then an estimate of $p$ to within additive error $\Theta\rbra{\varepsilon}$. 
    Our observation in addition to this is that $\abs{\sin\rbra{\tilde \theta_p}}$ is an estimate of $\sqrt{p}$ to within additive error $\Theta\rbra{\varepsilon}$ by noting that the function $\abs{\sin\rbra{\cdot}}$ is $O\rbra{1}$-Lipschitz. 
    With this observation, we can estimate $\sqrt{p}$ to within additive error $\varepsilon$ with query complexity $O\rbra{1/\varepsilon}$, which is the same (up to a constant factor) query complexity for estimating $p$ to the same precision. 
    
    Moreover, the square root amplitude estimation can reproduce the query complexity for amplitude estimation for unknown $p$; and this is why we previously said that the former generalizes the latter. 
    The reduction is simple: first obtain an estimate $\tilde x$ of $\sqrt{p}$ to within additive error $\varepsilon/2$ by the square root amplitude estimation with query complexity $O\rbra{1/\varepsilon}$, and then return $\tilde x^2$ as the estimate of $p$. See \cref{sec:sqrt-ampl-est-algo} for more details. 

    \subsubsection{Lower Bounds} \label{sec:tech-lb}

    Our quantum query lower bound for pure-state trace distance estimation is obtained by a reduction from distinguishing probability distributions, with the quantum query lower bound $\Omega\rbra{1/d_{\textup{H}}\rbra{p, q}}$ for distinguishing two probability distributions $p$ and $q$ given in \cite{Bel19}, where $d_{\textup{H}}\rbra{p, q}$ is the Hellinger distance. 
    In our reduction, we employ a pair of $n$-dimensional probability distributions $p^\pm$ defined by
    \begin{equation} \label{eq:def-prob-distri}
        p^\pm \rbra{j} = \frac{1 \pm \rbra{-1}^j 2 \varepsilon}{n},
    \end{equation}
    with their Hellinger distance upper bounded by $d_{\textup{H}}\rbra{p^+, p^-} \leq 2\varepsilon$. 
    Then, we consider the problem of distinguishing the two probability distributions $p^+$ and $p^-$ that are encoded in two quantum unitary oracles $U_{p^{\pm}}$ such that
    \begin{equation}
        U_{p^{\pm}} \ket{0} = \ket{\psi^\pm} = \sum_{j \in \sbra{n}} \sqrt{p^\pm\rbra{j}} \ket{j},
    \end{equation}
    where $\sbra{n} = \cbra{0, 1, 2, \dots, n - 1}$.
    This is essentially a problem of distinguishing pure quantum states: given an unknown state $\ket{\psi}$, determine whether it is $\ket{\psi^+}$ or $\ket{\psi^-}$, promised that it is in either case. 
    Actually, this can be done by conditioning on whether the trace distance $\mathrm{T}\rbra{\ket{\psi}, \ket{\psi^+}}$ between the tested state $\ket{\psi}$ and the known state $\ket{\psi^+}$ is $0$ (in which case $\ket{\psi} = \ket{\psi^+}$) or $\mathrm{T}\rbra{\ket{\psi^+}, \ket{\psi^-}} = 2\varepsilon$ (in which case $\ket{\psi} = \ket{\psi^-}$). 
    Since any estimate of $\mathrm{T}\rbra{\ket{\psi}, \ket{\psi^+}}$ to within additive error $\varepsilon$ works for the distinguishing problem, any quantum query algorithm for pure-state trace distance estimation to within additive error $\varepsilon$ requires query complexity $\Omega\rbra{1/d_{\textup{H}}\rbra{p^+, p^-}} = \Omega\rbra{1/\varepsilon}$.
    See \cref{thm:qlower-td} for more details. 
    
    A matching quantum query lower bound $\Omega\rbra{1/\varepsilon}$ for pure-state squared fidelity estimation can be obtained similarly, though it was previously known from the quantum query lower bound for quantum counting \cite{BBC+01,NW99}; moreover, it also implies a matching quantum query lower bound for pure-state square root fidelity estimation. 
    See \cref{thm:qlower-sqr-fi} and \cref{thm:qlb-sqrt-fi} for more details. 
    
    We note that the probability distributions defined in \cref{eq:def-prob-distri} were ever employed in proving lower bounds for testing the uniformity of probability distributions in \cite{Pan08} and \cite{LWL23} on classical sample complexity and quantum query complexity, respectively.
    Moreover, such probability distributions were also adapted to proving quantum sample lower bounds for testing the uniformity of mixed quantum states \cite{OW21}. 
    Here, we note the difference between the proof of \cite{OW21} and ours:
    the probability distributions are related to the amplitudes of the pure quantum states in our proof, while they are related to the eigenvalues of the mixed quantum states in the proof of \cite{OW21}.

    \paragraph{Extensions.}
    
    We also extend this method to proving quantum sample lower bounds for these tasks. As they are not the main subject of this paper, we put them in \cref{sec:qslb}. 
    Nevertheless, it is worth noting that the quantum sample lower bound $\Omega\rbra{1/\varepsilon^2}$ for pure-state trace distance estimation (see \cref{thm:slb-td}) is new, in spite of the quantum sample lower bound $\Omega\rbra{r/\varepsilon^2}$ for the closeness testing of mixed quantum states of rank $r$ with respect to trace distance given in \cite{BOW19,OW21}. 
    This is because their proofs do not cover the case of $r = 1$ (i.e., pure quantum states) due to the use of the probability distributions in \cref{eq:def-prob-distri} (which requires $r \geq 2$). 
    
    \subsection{Hardness} \label{sec:hardness}

    In \cref{tab:cmp}, we present lower bounds for various scenarios, illustrating the fine-grained amount of resources that are required to solve the pure-state estimations of trace distance, square root fidelity, and squared fidelity.
    From the perspective of computational complexity theory, these estimation tasks are known to be $\mathsf{BQP}$-hard. 
    For completeness, we include the hardness results presented in \cite{RASW23} and \cite{WZ23} as follows. 

    \begin{theorem} [$\mathsf{BQP}$-completeness, adapted from {\cite[Theorem 12]{RASW23}} and {\cite[Theorem IV.1]{WZ23}}] \label{thm:hardness}
        Given the classical description of two quantum circuits of size $\poly\rbra{n}$ that prepare two $n$-qubit pure quantum states respectively, it is $\mathsf{BQP}$-complete to determine whether the trace distance, the square root fidelity, or the squared fidelity between the two pure quantum states is less than $\alpha$ or greater than $\beta$ for any $2^{-\poly\rbra{n}} \leq \alpha < \beta \leq 1 - 2^{-\poly\rbra{n}}$ with $\beta - \alpha \geq 1/\poly\rbra{n}$.
    \end{theorem}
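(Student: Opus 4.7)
The plan is to establish containment in $\mathsf{BQP}$ and $\mathsf{BQP}$-hardness separately, both following the blueprint of the cited works.

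For containment, I would invoke \cref{thm:td-intro} directly (together with its counterpart for squared fidelity). Given the classical descriptions of the two size-$\poly\rbra{n}$ state-preparation circuits, each query to a state-preparation oracle is implementable in time $\poly\rbra{n}$, so the $O\rbra{1/\varepsilon}$-query algorithms for all three closeness measures yield $\poly\rbra{n}$-time procedures once we set $\varepsilon = \rbra{\beta - \alpha}/3 \geq 1/\poly\rbra{n}$. A standard majority-vote amplification then boosts the success probability from $2/3$ to $1 - 2^{-\poly\rbra{n}}$ with only a logarithmic blow-up in repetitions, which is enough to conclude membership in $\mathsf{BQP}$ for each of the three promise problems.

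For the hardness direction, I would reduce from an arbitrary $\mathsf{BQP}$ promise problem. Starting from a standard $\mathsf{BQP}$ verifier, I amplify it to obtain a size-$\poly\rbra{n}$ circuit $U$ whose acceptance probability $p$ is at least $1 - 2^{-\poly\rbra{n}}$ on YES instances and at most $2^{-\poly\rbra{n}}$ on NO instances. I then construct two pure states whose chosen closeness measure is a fixed monotone function of $p$: for instance, taking $\ket{\varphi} = \ket{0}^{\otimes n}$ as a canonical reference and $\ket{\psi} = U \ket{0}^{\otimes n}$ makes the squared fidelity equal to $\abs{\bra{0^{\otimes n}} U \ket{0^{\otimes n}}}^2$, which by a standard padding of $U$ with a flag qubit encodes $p$ up to a fixed affine reparametrization. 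The flag qubit's weight then acts as an interpolation knob that moves the YES/NO values of the closeness measure to any prescribed thresholds $\alpha < \beta$ inside the stated window $2^{-\poly\rbra{n}} \leq \alpha < \beta \leq 1 - 2^{-\poly\rbra{n}}$ with gap at least $1/\poly\rbra{n}$. Because \cref{eq:relation-td-fi} converts among trace distance, square root fidelity, and squared fidelity by a smooth invertible reparametrization, the same reduction handles all three measures simultaneously after adjusting $\alpha$ and $\beta$.

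The main obstacle is the bookkeeping of the parameter regime: one must verify that the interpolation construction, combined with the change of variable between $\mathrm{T}$, $\mathrm{F}$, and $\mathrm{F}^2$, sends the window $[2^{-\poly\rbra{n}}, 1 - 2^{-\poly\rbra{n}}]$ to itself while preserving a $1/\poly\rbra{n}$ gap. This amounts to elementary estimates using the Lipschitz continuity of $x \mapsto \sqrt{1 - x^2}$ away from the endpoints, together with the exponential amplification of the original $\mathsf{BQP}$ gap. No new ideas beyond \cite{RASW23,WZ23} are needed for this step.
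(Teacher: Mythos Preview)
The paper does not actually prove \cref{thm:hardness}: it is stated without proof as a result ``adapted from'' \cite{RASW23} and \cite{WZ23}, so there is no in-paper argument to compare against. Your proposal is a reasonable reconstruction of the standard proof one finds in those references: the containment direction via the estimation algorithms of \cref{thm:td-intro} plus majority voting is exactly right, and the hardness direction via an amplified $\mathsf{BQP}$ verifier whose acceptance probability is encoded in the closeness of two efficiently preparable pure states is the expected route.

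One remark on the hardness sketch: the specific choice $\ket{\varphi} = \ket{0}^{\otimes n}$ and $\ket{\psi} = U\ket{0}^{\otimes n}$ does not by itself make $\abs{\braket{\varphi}{\psi}}^2$ equal to the acceptance probability $p$ (the latter is $\Abs{\rbra{\ket{1}\bra{1}\otimes I} U\ket{0}^{\otimes n}}^2$, not the return amplitude). You correctly flag that a ``flag-qubit padding'' is needed, but it would be worth spelling out the actual construction---e.g., compare $U\ket{0}^{\otimes n}$ with $\rbra{X\otimes I}U\ket{0}^{\otimes n}$, or use a controlled-$U$ between two reference branches---so that the overlap is a known monotone function of $p$. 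Once that is pinned down, your interpolation-knob argument for hitting arbitrary $\alpha,\beta$ in the stated window, and the transfer between $\mathrm{T}$, $\mathrm{F}$, $\mathrm{F}^2$ via \cref{eq:relation-td-fi}, go through as you describe.
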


    If we remove the restriction that the quantum states are pure, the estimation of trace distance, square root fidelity, and squared fidelity (between mixed quantum states) is known to be $\mathsf{QSZK}$-complete for certain regime of $\alpha$ and $\beta$ \cite{Wat02,Wat09}.
    In addition, if the mixed quantum states are low-rank, i.e., they are of rank $r$ where $r = \polylog \rbra{n}$, these estimation tasks fall directly into $\mathsf{BQP}$. 
    In other words, (the decision versions of) the estimations of trace distance, square root fidelity, and squared fidelity between low-rank quantum states are $\mathsf{BQP}$-complete, where the $\mathsf{BQP}$-hardness is due to \cref{thm:hardness} while the $\mathsf{BQP}$-containment is due to the polynomial-time quantum algorithms proposed in \cite{WZC+23,WGL+22,GP22,WZ23}. 

    \subsection{Discussion} \label{sec:discussion}

    In this paper, we provide \textit{optimal} quantum query algorithms for pure-state trace distance and square root fidelity estimations, which are obtained via the quantum algorithmic tool --- square root amplitude estimation given in \cref{thm:sqrt-amp-est}.
    Our results, together with prior results concerning squared fidelity \cite{BCWdW01,BBC+01,NW99}, reveal that the quantum query complexity for pure-state ``closeness'' estimation is $\Theta\rbra{1/\varepsilon}$. 
    However, there are still gaps between the upper and lower bounds on the quantum sample complexity for these estimation tasks, except for the $\Theta\rbra{1/\varepsilon^2}$ sample complexity for pure-state squared fidelity estimation due to \cite{BCWdW01} and \cite{ALL22}.
    An important problem is to close the gap between the upper and lower bounds on the sample complexity.

    As mentioned in \cref{sec:related-work}, there are a series of works towards testing and learning the closeness of mixed quantum states.
    Except for the sample complexity of quantum state certification with respect to both trace distance and square root fidelity \cite{BOW19}, many of them (either in query complexity or sample complexity) are far from being optimal.\footnote{We are only aware of a quantum query algorithm in {\cite{LWWZ24}} that estimates the fidelity between well-conditioned mixed quantum states, achieving an almost optimal dependence on the additive error $\varepsilon$.}
    It would be interesting to develop new techniques for improving these upper and lower bounds. 

    Not only limited to the tasks considered in this paper, we hope that the quantum square root amplitude estimation given in \cref{thm:sqrt-amp-est} brings new ideas to quantum computing and that it could be used as a subroutine in the design of quantum algorithms. 

    \subsection{Organization of This Paper}

    In \cref{sec:quantum-query-model}, we introduce the quantum query model, especially the input model for pure quantum states. 
    In \cref{sec:sqrt-ampl-est}, we provide an efficient quantum query algorithm for square root amplitude estimation. 
    Then, in \cref{sec:td-est}, using the square root amplitude estimation, we present an optimal quantum query algorithm for pure-state trace distance and square root fidelity estimations, and reproduce the pure-state squared fidelity estimation. 
    Finally, in \cref{sec:quantum-lower-bounds}, lower bounds on the quantum query complexity for pure-state trace distance and square root fidelity estimations are proved.

    \section{Quantum Query Model} \label{sec:quantum-query-model}

    In this section, we briefly introduce the quantum query model, especially the input query model for pure quantum states that is used in this paper.

    \subsection{The General Model} \label{sec:def-model}

    In quantum computing, a quantum unitary oracle $U$ is a unitary operator on a finite-dimensional Hilbert space such that $U^\dag U = U U^\dag = I$, where $I$ is the identity operator and $U^\dag$ is the Hermitian conjugate of $U$. 
    A quantum query algorithm $\mathcal{A}$ using $Q$ queries to quantum oracle $U$ is described by a quantum circuit composed of quantum gates and the quantum oracle $U$ (and its inverse $U^\dag$), namely, 
    \begin{equation}
    \mathcal{A} = G_Q \cdot U_Q \cdot \cdots \cdot G_2 \cdot U_2 \cdot G_1 \cdot U_1 \cdot G_0,
    \end{equation}
    where $G_j$ is a quantum gate composed of elementary one- and two-qubit quantum gates that does not depend on $U$, and $U_j$ is either (controlled-)$U$ or (controlled-)$U^\dag$.
    Here, $Q$ is called the (quantum) query complexity of $\mathcal{A}$. 
    To run the algorithm $\mathcal{A}$, we first prepare a (multi-qubit) pure quantum state $\ket{0}$, and then apply the unitary operators $G_0, U_1, G_1, \dots, U_Q, G_Q$ one by one in this order.
    After that, the pure quantum state becomes 
    \begin{equation}
        \mathcal{A} \ket{0} = G_Q  U_Q  \cdots  G_2  U_2  G_1  U_1  G_0 \ket{0}.
    \end{equation}
    To fetch (classical) information from the execution of quantum query algorithm $\mathcal{A}$, we usually make a quantum projective measurement $M = \cbra{P_m}$ in the computational basis of (a subset of) the qubits that $\mathcal{A}$ acts on. 
    Here, we note that $P_m$ are projectors, i.e., $P_m^\dag = P_m$ and $P_m^2 = P_m$, and they satisfy the completeness equation $\sum_m P_m = I$. 
    Then, $\mathcal{A}$ outputs $m$ with probability $\Abs{P_m \mathcal{A}\ket{0} }^2$.

    \subsection{The Input Model for Pure Quantum States} \label{sec:input-model}

    For quantum query algorithms regarding pure quantum states, we assume that their state-preparation circuits are given as quantum unitary oracles. 
    Strictly speaking, quantum query access to a pure quantum state $\ket{\psi}$ means a quantum unitary oracle $U_\psi$ such that 
    \begin{equation}
        \ket{\psi} = U_\psi \ket{0}. 
    \end{equation}
    This input model was employed in quantum state tomography in \cite{vACGN23}, where they show that the quantum query complexity of obtaining an $\varepsilon$-$\ell_2$-approximation of an $n$-dimensional pure quantum state is $\widetilde \Theta\rbra{n/\varepsilon}$. 
    This input model was also employed as an adapter for problems that are not directly related to pure quantum states, e.g., encoding probability distributions (cf.\ \cite{BHH11}) and preparing purifications of mixed quantum states (cf.\ \cite{GL20}). 

    \section{Square Root Amplitude Estimation} \label{sec:sqrt-ampl-est}

    In this section, we provide a quantum query algorithm for square root amplitude estimation (see \cref{thm:sqrt-ampl-est}), which generalizes the quantum amplitude estimation proposed in \cite{BHMT02}. 
    For a unitary operator $U\ket{0}_{\mathsf{A}}\ket{0}_{\mathsf{B}} = \sqrt{p} \ket{0}_{\mathsf{A}}\ket{\phi_0}_{\mathsf{B}} + \sqrt{1-p} \ket{1}_{\mathsf{A}}\ket{\phi_1}_{\mathsf{B}}$ with $p \in \sbra{0, 1}$, the quantum amplitude estimation in \cite{BHMT02} allows us to estimate the value of $p$ (to within an additive error). 
    In comparison, the square root amplitude estimation provided in \cref{thm:sqrt-ampl-est} allows us to estimate the square root of $p$, which is more efficient than just estimating the value of $p$ simply by amplitude estimation and then computing the square root of the estimated value.

    This section is organized as follows. 
    In \cref{sec:ampl-est}, we recall the quantum amplitude estimation proposed in \cite{BHMT02} and analyze the complexity if we simply use it to estimate the value of $\sqrt{p}$.
    In \cref{sec:sqrt-ampl-est-algo}, we present the quantum query algorithm for square root amplitude estimation, and we additionally show how the square root amplitude estimation can reproduce the quantum amplitude estimation in \cite{BHMT02} when no prior knowledge of $p$ is known.

    \subsection{Amplitude Estimation} \label{sec:ampl-est}

    The well-known quantum query algorithm for amplitude estimation was proposed in \cite{BHMT02}.
    We recall a simple version of it when no prior knowledge of $p$ is known, given as follows.

    \begin{theorem} [Quantum amplitude estimation, {\cite[Theorem 12]{BHMT02}}] \label{thm:amplitude-estimation}
        Suppose that $U$ is a unitary operator such that 
        \begin{equation}
        U\ket{0}_{\mathsf{A}}\ket{0}_{\mathsf{B}} = \sqrt{p} \ket{0}_{\mathsf{A}}\ket{\phi_0}_{\mathsf{B}} + \sqrt{1-p} \ket{1}_{\mathsf{A}}\ket{\phi_1}_{\mathsf{B}},
        \end{equation}
        where $\ket{\phi_0}$ and $\ket{\phi_1}$ are normalized pure states, and $p \in \sbra{0, 1}$. 
        For every $\delta \in \rbra{0, 1}$, there is a quantum query algorithm that outputs $\tilde x \in \sbra{0, 1}$ with probability at least $2/3$ such that
        \begin{equation}
        \abs*{\tilde x - p} < \delta,
        \end{equation}
        using $O\rbra{1/\delta}$ queries to $U$. 
    \end{theorem}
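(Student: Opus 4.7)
The plan is to follow the standard amplitude-estimation construction of \cite{BHMT02}: assemble a Grover-type walk operator $Q$ from $U$ whose eigenphases encode $\sqrt{p}$, apply quantum phase estimation to extract an estimate of that phase, and map the phase back to an estimate of $p$ through $\sin^2(\cdot)$. This is precisely the blueprint already previewed in \cref{sec:tech-ub}, so my task is essentially to assemble the pieces and track the error.

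First I would define the walk operator
\begin{equation*}
Q = -U \cdot \rbra*{I_{\mathsf{A}\mathsf{B}} - 2 \ket{0}_{\mathsf{A}}\bra{0} \otimes \ket{0}_{\mathsf{B}}\bra{0}} \cdot U^\dag \cdot \rbra*{I_{\mathsf{A}\mathsf{B}} - 2 \ket{0}_{\mathsf{A}}\bra{0} \otimes I_{\mathsf{B}}},
\end{equation*}
which costs $2$ queries to $U$. I would then verify that the two-dimensional subspace $\mathcal{S} = \spanspace\cbra{\ket{0}_{\mathsf{A}}\ket{\phi_0}_{\mathsf{B}}, \ket{1}_{\mathsf{A}}\ket{\phi_1}_{\mathsf{B}}}$ is invariant under $Q$, and that within $\mathcal{S}$ the operator $Q$ acts as a rotation by angle $2\theta_p$, where $\theta_p = \arcsin\rbra{\sqrt{p}} \in \sbra{0, \pi/2}$. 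Consequently $Q|_{\mathcal{S}}$ has eigenvalues $e^{\pm i 2\theta_p}$ with eigenvectors $\ket{\psi_\pm}$, and the initial state $U\ket{0}_{\mathsf{A}}\ket{0}_{\mathsf{B}}$ decomposes as a balanced superposition of $\ket{\psi_\pm}$ inside $\mathcal{S}$.

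Next, I would invoke standard quantum phase estimation on $Q$, initialized with the state $U\ket{0}_{\mathsf{A}}\ket{0}_{\mathsf{B}}$ and using $O\rbra{1/\eta}$ controlled applications of $Q$. With constant success probability this returns some $\tilde\theta$ satisfying $\abs{\tilde\theta - \theta_p} \leq \eta$ (modulo the sign ambiguity $\pm \theta_p$, which is harmless because $\sin^2$ is even). The total query cost is $O\rbra{1/\eta}$ queries to $U$ and $U^\dag$. I would then output $\tilde x = \sin^2\rbra{\tilde\theta} \in \sbra{0,1}$; since $\sin^2$ is $O\rbra{1}$-Lipschitz on $\mathbb{R}$, I obtain
\begin{equation*}
\abs*{\tilde x - p} = \abs*{\sin^2\rbra{\tilde\theta} - \sin^2\rbra{\theta_p}} = O\rbra{\eta}.
\end{equation*}
Setting $\eta = \Theta\rbra{\delta}$ yields $\abs{\tilde x - p} < \delta$ with $O\rbra{1/\delta}$ queries, and boosting the success probability to $2/3$ requires only $O(1)$ rounds of median amplification.

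The main subtlety is the usual one for Grover-based estimation: at the boundary values $p \in \cbra{0,1}$ the two eigenphases $\pm 2\theta_p$ coincide and the subspace $\mathcal{S}$ can collapse to dimension one, so the eigenvector decomposition has to be handled as a limiting case. However, because the final estimator $\sin^2\rbra{\tilde\theta}$ depends only on $\theta_p$ through a Lipschitz function, the same error bound goes through, and standard treatments of phase estimation (as in \cite{BHMT02}) already package this carefully. No separate argument is needed beyond citing the existing analysis.
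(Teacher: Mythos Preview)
Your sketch is correct and is exactly the standard \cite{BHMT02} argument. Note, however, that the paper does not give its own proof of this theorem: it is stated purely as a citation of \cite[Theorem 12]{BHMT02}. The construction you describe (the Grover operator $Q$, the eigenphases $e^{\pm i2\theta_p}$, phase estimation, and reading off $\sin^2$) is precisely what the paper later spells out in the proof of \cref{thm:sqrt-ampl-est}, where the only change is outputting $\abs{\sin(\pi\tilde\varphi)}$ instead of $\sin^2(\pi\tilde\varphi)$. The paper also remarks, at the end of \cref{sec:sqrt-ampl-est-algo}, that \cref{thm:amplitude-estimation} can alternatively be recovered from \cref{thm:sqrt-ampl-est} by squaring a $\delta/2$-estimate of $\sqrt{p}$; this is a slightly different packaging but ultimately rests on the same phase-estimation machinery you outlined.
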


    \cref{thm:amplitude-estimation} allows us to estimate the squared amplitude $p$ to within additive error $\delta$ with query complexity $O\rbra{1/\delta}$. 
    If one needs to estimate the square root of $p$ to within additive error $\delta'$, then one has to set $\delta = \delta'^2$ and the total query complexity is $O\rbra{1/\delta'^2}$. 
    This is because the error could become larger when taking additional arithmetic operations on the estimated value.
    To better explain this, we provide the following proposition concerning the numerical stability of square root. 

    \begin{proposition} \label{prop:sqrt-stable}
        Let $x, \tilde x \geq 0$ be two real numbers such that $\abs{x - \tilde x} < \varepsilon$ for some $\varepsilon > 0$. 
        Then, $\abs{\sqrt{x} - \sqrt{\tilde x}} < \sqrt{\varepsilon}$, and the equality holds when $x = 0$ and $\tilde x = \varepsilon$. 
    \end{proposition}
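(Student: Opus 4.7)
The plan is to prove the proposition by a direct algebraic manipulation exploiting monotonicity and the AM–GM–style inequality $\sqrt{x\tilde x}\geq\min(x,\tilde x)$. Without loss of generality I will assume $\tilde x\geq x\geq 0$, so that $\sqrt{\tilde x}\geq\sqrt{x}$ and the absolute value signs may be removed.

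First I would expand the square
\begin{equation*}
\bigl(\sqrt{\tilde x}-\sqrt{x}\bigr)^{2} = \tilde x + x - 2\sqrt{\tilde x\, x}.
\end{equation*}
Since $\tilde x\geq x\geq 0$, we have $\sqrt{\tilde x\, x}\geq\sqrt{x\cdot x}=x$, and hence
\begin{equation*}
\bigl(\sqrt{\tilde x}-\sqrt{x}\bigr)^{2}\leq\tilde x + x - 2x = \tilde x - x = \abs{x-\tilde x} < \varepsilon.
\end{equation*}
Taking square roots of both sides yields $\bigl|\sqrt{x}-\sqrt{\tilde x}\bigr|<\sqrt{\varepsilon}$, as required.

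For the tightness statement, I would simply substitute $x=0$ and $\tilde x=\varepsilon$ into $\abs{\sqrt{x}-\sqrt{\tilde x}}$ to obtain $\sqrt{\varepsilon}$, showing that the inequality cannot be improved (equality being achieved in the limiting regime where $\abs{x-\tilde x}\to\varepsilon$ from below, or under the weakened hypothesis $\abs{x-\tilde x}\leq\varepsilon$).

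There is no genuine obstacle here: the result is a two-line consequence of the identity for $(\sqrt{\tilde x}-\sqrt{x})^{2}$ together with the monotonicity of the square root. The only subtlety worth flagging is the role of the hypothesis $x,\tilde x\geq 0$, which is essential for $\sqrt{\tilde x\, x}$ to be real and for the chain of inequalities to make sense; beyond that the proof is a short calculation.
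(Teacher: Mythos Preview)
Your proof is correct and takes essentially the same approach as the paper: both assume WLOG $\tilde x\geq x$, reduce to the inequality $(\sqrt{\tilde x}-\sqrt{x})^2\leq \tilde x-x$, and verify it via an elementary observation (you use $\sqrt{\tilde x\,x}\geq x$, the paper equivalently squares $\sqrt{x+\delta}\leq\sqrt{x}+\sqrt{\delta}$ to get $\sqrt{x\delta}\geq 0$). The tightness argument is identical.
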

    \begin{proof}
        Without loss of generality, we assume that $\tilde x \geq x$ and let $\delta = \tilde x - x \in \rbra{0, \varepsilon}$. 
        Then, to show $\abs{\sqrt{x} - \sqrt{\tilde x}} < \sqrt{\varepsilon}$, we only need to demonstrate that for any $x \geq 0$ and $\delta > 0$, the following inequality holds:
        \begin{equation}
            \sqrt{x + \delta} - \sqrt{x} \leq \sqrt{\delta}.
        \end{equation}
        This is straightforward because we note that $\sqrt{x+\delta} \leq \sqrt{x} + \sqrt{\delta}$. If we square both sides, we obtain $x + \delta \leq x + \delta + 2\sqrt{x\delta}$, which simplifies to $\sqrt{x\delta} \geq 0$. 
        
        Therefore, it always holds that $\abs{\sqrt{x} - \sqrt{\tilde x}} < \sqrt{\varepsilon}$. In addition, it is easy to verify that the equality holds when $x = 0$ and $\tilde x = \varepsilon$, meaning that the inequality is tight.
    \end{proof}

    As demonstrated in \cref{prop:sqrt-stable}, one cannot hope to estimate $\sqrt{p}$ to within additive error $\delta'$ with query complexity better than $O\rbra{1/\delta'^2}$ by simply using the quantum amplitude estimation given in \cref{thm:amplitude-estimation}.

    \subsection{The Algorithm} \label{sec:sqrt-ampl-est-algo}

    In this subsection, we will present a quantum query algorithm for square root amplitude estimation. 
    This algorithm improves upon the quantum amplitude estimation in \cite{BHMT02}. 
    For better illustration, we need a textbook quantum algorithm for phase estimation, which was originally proposed in \cite{Kit95}. 

    \begin{theorem} [Quantum phase estimation, {\cite[Section 5.2]{NC10}}] \label{thm:PhE}
        Suppose that $U$ is a unitary operator with spectral decomposition
        \begin{equation}
            U = \sum_{j} e^{i2\pi\lambda_j} \ket{v_j},
        \end{equation}
        where $\cbra{\ket{v_j}}$ is an orthonormal basis and $\lambda_j \in [0, 1)$. 
        For every $\varepsilon \in \rbra{0, 1}$ and $\delta \in \rbra{0, 1}$, there is a quantum circuit $\mathsf{PhE}^U_{\varepsilon,\delta}$ using $O\rbra{1/\varepsilon\delta}$ queries to controlled-$U$ that performs the transform
        \begin{equation}
            \mathsf{PhE}^U_{\varepsilon,\delta} \colon \sum_{j} \alpha_j \ket{v_j} \ket{0} \mapsto \sum_{j} \alpha_j \ket{v_j} \ket{\tilde \varphi_j}
        \end{equation}
        for any coefficients $\alpha_j \in \mathbb{C}$ with $\sum_j \abs{\alpha_j}^2 = 1$, such that if we measure $\ket{\tilde \varphi_j}$ in the computational basis, then with probability at least $1-\varepsilon$ we will obtain a real number $\tilde \varphi_j \in [0, 1)$ satisfying 
        \begin{equation}
            \min\cbra*{ \abs*{\tilde \varphi_j - \lambda_j}, 1 - \abs*{\tilde \varphi_j - \lambda_j} } < \delta.
        \end{equation}
    \end{theorem}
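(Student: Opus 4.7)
The plan is to invoke the standard Kitaev-style phase-estimation circuit essentially verbatim, since the statement is the textbook version from \cite{NC10}. I would introduce an ancilla register of $t$ qubits with $t = \lceil \log_2\rbra{1/\delta} \rceil + \lceil \log_2\rbra{1/\varepsilon} \rceil + O(1)$, so that $2^t = \Theta\rbra{1/(\varepsilon\delta)}$, and proceed in four stages: prepare the ancilla as the uniform superposition $\frac{1}{\sqrt{2^t}}\sum_{k=0}^{2^t-1}\ket{k}$ via Hadamards; apply the controlled powers $U^{2^m}$ for $m=0,1,\dots,t-1$ via the binary-exponent decomposition; apply the inverse quantum Fourier transform on the ancilla; and relabel the ancilla register as $\ket{\tilde\varphi_j}$. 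The second stage costs $\sum_{m=0}^{t-1} 2^m = 2^t - 1 = O\rbra{1/(\varepsilon\delta)}$ queries to controlled-$U$, matching the stated bound.

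Correctness is by linearity in the eigenbasis $\cbra{\ket{v_j}}$. On a single-eigenvector input $\ket{v_j}\ket{0}$, the controlled powers imprint phases to give $\ket{v_j}\otimes \frac{1}{\sqrt{2^t}}\sum_k e^{i 2\pi \lambda_j k}\ket{k}$, and the inverse QFT then produces an ancilla whose amplitude at computational-basis state $\ket{y}$ equals the geometric sum $\frac{1}{2^t}\sum_{k=0}^{2^t-1} e^{i 2\pi(\lambda_j - y/2^t)k}$. Extending to the general input $\sum_j \alpha_j\ket{v_j}\ket{0}$ by linearity yields the claimed entangled form $\sum_j \alpha_j \ket{v_j}\ket{\tilde\varphi_j}$, because none of the four stages touches the eigenvector register except through phases diagonal in this basis.

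For the measurement guarantee, I would invoke the standard bound $\abs{1-e^{i\theta}} \geq 2\abs{\theta}/\pi$ on the denominator of the closed-form geometric sum to show that the total probability mass on outcomes $y$ whose circular distance from $2^t\lambda_j$ exceeds $2^t\delta$ is at most $O\rbra{1/(2^t\delta)}$, which is at most $\varepsilon$ by our choice of $t$. The use of circular distance (rather than absolute distance on $\mathbb{R}$) handles wrap-around when $\lambda_j$ is near $0$ or $1$, and this is precisely what delivers the $\min\cbra{\abs{\tilde\varphi_j-\lambda_j},\, 1-\abs{\tilde\varphi_j-\lambda_j}}$ clause in the statement.

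The main obstacle, if any, is not conceptual: the proof is a direct transcription of the textbook argument. The only care required is in the precision--probability accounting so as to hit $O\rbra{1/(\varepsilon\delta)}$ queries, which is achieved simply by oversizing the ancilla register (using $t \approx \log\rbra{1/\delta}+\log\rbra{1/\varepsilon}$ qubits) and applying the crude geometric-sum tail bound, rather than the sharper median-of-means amplification that would give $O\rbra{\log\rbra{1/\varepsilon}/\delta}$ queries but is unnecessary here.
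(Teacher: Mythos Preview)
Your proposal is correct and follows the standard textbook argument. Note, however, that the paper does not actually prove this statement: it is stated as a known result with a citation to \cite[Section 5.2]{NC10} and used as a black box, so your sketch in fact supplies more detail than the paper itself provides.
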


    In the following, we present the quantum query algorithm for square root amplitude estimation. 

    \begin{theorem} [Quantum square root amplitude estimation] \label{thm:sqrt-ampl-est}
        Suppose that $U$ is a unitary operator such that 
        \begin{equation}
        U\ket{0}_{\mathsf{A}}\ket{0}_{\mathsf{B}} = \sqrt{p} \ket{0}_{\mathsf{A}}\ket{\phi_0}_{\mathsf{B}} + \sqrt{1-p} \ket{1}_{\mathsf{A}}\ket{\phi_1}_{\mathsf{B}},
        \end{equation}
        where $\ket{\phi_0}$ and $\ket{\phi_1}$ are normalized pure states, and $p \in \sbra{0, 1}$. 
        For every $\delta \in \rbra{0, 1}$, there is a quantum algorithm that outputs $\tilde x = \mathsf{SqrtAmpEst}\rbra{U, \delta} \in \sbra{0, 1}$ such that it holds with probability at least $2/3$ that
        \begin{equation}
        \abs*{\tilde x - \sqrt{p}} < \delta,
        \end{equation}
        using $O\rbra{1/\delta}$ queries to controlled-$U$ and controlled-$U^\dag$. 
    \end{theorem}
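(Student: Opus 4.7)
The plan is to follow the framework of the Brassard--H\o yer--Mosca--Tapp amplitude estimation from \cite{BHMT02}, replacing only the final classical post-processing step with one that exploits the Lipschitz continuity of $\abs{\sin\rbra{\cdot}}$. First I would define the Grover-type operator
\begin{equation*}
    Q = -U \cdot \rbra*{I_{\mathsf{A}} \otimes I_{\mathsf{B}} - 2\ket{0}_{\mathsf{A}}\bra{0} \otimes \ket{0}_{\mathsf{B}}\bra{0}} \cdot U^\dag \cdot \rbra*{I_{\mathsf{A}} \otimes I_{\mathsf{B}} - 2\ket{0}_{\mathsf{A}}\bra{0} \otimes I_{\mathsf{B}}},
\end{equation*}
which costs $2$ queries to (controlled-)$U$ and (controlled-)$U^\dag$. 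A direct calculation (identical to the one in \cite{BHMT02}) shows that $Q$ preserves the two-dimensional subspace spanned by $\ket{0}_{\mathsf{A}}\ket{\phi_0}_{\mathsf{B}}$ and $\ket{1}_{\mathsf{A}}\ket{\phi_1}_{\mathsf{B}}$, on which it acts as a rotation by angle $2\theta_p$ with $\theta_p = \arcsin\rbra{\sqrt{p}} \in \sbra{0, \pi/2}$. Hence $Q$ has a pair of eigenvectors $\ket{\psi_{\pm}}$ with eigenvalues $e^{\pm i 2\theta_p}$, and the state $U\ket{0}_{\mathsf{A}}\ket{0}_{\mathsf{B}}$ decomposes as an equal-weight superposition of $\ket{\psi_+}$ and $\ket{\psi_-}$ up to global phases.

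Next, I would invoke quantum phase estimation (\cref{thm:PhE}) on controlled-$Q$ starting from $U\ket{0}_{\mathsf{A}}\ket{0}_{\mathsf{B}}$, with target precision $\delta' = \delta / (2\pi)$ and a small constant failure probability (say $\varepsilon = 1/3$). Since $Q$ has eigenvalues of the form $e^{i 2\pi \lambda_{\pm}}$ with $\lambda_{\pm} \in \cbra{\theta_p/\pi, 1 - \theta_p/\pi}$, the phase estimation subroutine returns (with probability at least $2/3$) an output $\tilde \varphi \in [0, 1)$ satisfying
\begin{equation*}
    \min\cbra*{ \abs*{\tilde \varphi - \theta_p/\pi}, \abs*{\tilde\varphi - (1 - \theta_p/\pi)}, \dots } < \delta/(2\pi),
\end{equation*}
where the additional entries account for the circular wrap-around in \cref{thm:PhE}. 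This costs $O\rbra{1/\rbra{\varepsilon \delta'}} = O\rbra{1/\delta}$ queries to controlled-$Q$, hence $O\rbra{1/\delta}$ queries to controlled-$U$ and controlled-$U^\dag$.

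The final step is to output $\tilde x = \abs{\sin\rbra{\pi \tilde \varphi}}$. The key observation, already flagged in the excerpt, is twofold: (i) the function $\abs{\sin\rbra{\cdot}}$ is $1$-Lipschitz, and (ii) $\abs{\sin\rbra{\pi\rbra{1 - y}}} = \abs{\sin\rbra{\pi y}}$, so the two eigenvalue branches $\pm \theta_p$ collapse to the same estimate and the circular ambiguity in phase estimation is harmless. Combining these with $\abs{\sin\rbra{\theta_p}} = \sqrt{p}$ gives
\begin{equation*}
    \abs*{\tilde x - \sqrt{p}} = \abs*{\abs*{\sin\rbra{\pi\tilde \varphi}} - \abs*{\sin\rbra{\theta_p}}} \leq \pi \cdot \delta/(2\pi) < \delta,
\end{equation*}
as required.

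The proof is essentially an exercise in packaging rather than a technical obstacle, and the central ``trick'' is the Lipschitz observation that sidesteps the square-root blowup of \cref{prop:sqrt-stable}. The only bookkeeping I would be careful about is (a) tracking the constant $\pi$ when converting between the angle $\theta_p$ and the phase-estimation output in $[0,1)$, and (b) verifying that the absolute value $\abs{\sin\rbra{\pi \cdot}}$ genuinely identifies the two ambiguous branches returned by \cref{thm:PhE}, so that a success in either branch suffices and the overall success probability remains at least $2/3$.
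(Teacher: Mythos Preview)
Your proposal is correct and follows essentially the same approach as the paper: define the Grover-type operator $Q$, apply quantum phase estimation (\cref{thm:PhE}) to the superposition $U\ket{0}_{\mathsf{A}}\ket{0}_{\mathsf{B}}$ of the two eigenvectors $\ket{\psi_\pm}$, and output $\abs{\sin(\pi\tilde\varphi)}$, using the $1$-Lipschitz property of $\abs{\sin(\cdot)}$ together with the symmetry $\abs{\sin(\pi(1-y))}=\abs{\sin(\pi y)}$ to collapse both eigenvalue branches. The paper carries out exactly this argument with only cosmetic differences in the constants (it rescales $\delta\gets\delta/\pi$ at the end rather than choosing $\delta'=\delta/(2\pi)$ up front).
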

    \begin{proof}
        The idea of the proof follows that in \cite{BHMT02}. 
        Let
        \begin{equation}
        Q = -U \rbra*{I_{\mathsf{AB}} - 2 \ket{0}_{\mathsf{A}}\bra{0} \otimes \ket{0}_{\mathsf{B}}\bra{0}} U^\dag \rbra*{I_{\mathsf{AB}} - 2\ket{0}_{\mathsf{A}}\bra{0} \otimes I_{\mathsf{B}}}.
        \end{equation}
        According to the analysis of \cite[Equation (6)]{BHMT02}, 
        \begin{equation}
            U \ket{0}_{\mathsf{A}} \ket{0}_{\mathsf{B}} = - \frac{i}{\sqrt{2}} \rbra*{ e^{i\theta_p} \ket{\psi_+}_{\mathsf{AB}} - e^{-i\theta_p} \ket{\psi_-}_{\mathsf{AB}} },
        \end{equation}
        where 
        \begin{align}
            \theta_p & = \arcsin \rbra*{\sqrt{p}} \in \sbra*{0, \frac{\pi}{2}}, \\
            \ket{\psi_\pm}_{\mathsf{AB}} & = \frac{1}{\sqrt{2}} \rbra*{ \ket{0}_{\mathsf{A}}\ket{\phi_0}_{\mathsf{B}} \pm i \ket{1}_{\mathsf{A}}\ket{\phi_1}_{\mathsf{B}} }.
        \end{align}
        Note that $\ket{\psi_\pm}_{\mathsf{AB}}$ are eigenvectors of $Q$ such that $Q\ket{\psi_\pm}_{\mathsf{AB}} = e^{\pm i2\theta_p} \ket{\psi_\pm}_{\mathsf{AB}}$. 

        By \cref{thm:PhE}, if we perform $\mathsf{PhE}^Q_{\varepsilon,\delta}$ on the state $U \ket{0}_{\mathsf{A}} \ket{0}_{\mathsf{B}} \otimes \ket{0}_\mathsf{C}$, then we obtain
        \begin{equation}
        \begin{aligned}
            \mathsf{PhE}^Q_{\varepsilon,\delta} \rbra*{ U \ket{0}_{\mathsf{A}} \ket{0}_{\mathsf{B}} \otimes \ket{0}_\mathsf{C} } = - \frac{i}{\sqrt{2}} \rbra*{ e^{i\theta_p} \ket{\psi_+}_{\mathsf{AB}} \ket{\tilde \varphi_+}_{\mathsf{C}} - e^{-i\theta_p} \ket{\psi_-}_{\mathsf{AB}} \ket{\tilde \varphi_-}_{\mathsf{C}} },
        \end{aligned}
        \end{equation}
        where if we measure $\ket{\tilde \varphi_\pm}_\mathsf{C}$ in the computational basis, then with probability at least $1-\varepsilon$ we will obtain a real number $\tilde \varphi_\pm \in [0, 1)$ satisfying 
        \begin{equation} \label{eq:tilde_pm}
        \begin{aligned}
            \min\cbra*{ \abs*{\tilde \varphi_+ - \frac{\theta_p}{\pi}}, 1 - \abs*{\tilde \varphi_+ - \frac{\theta_p}{\pi}} } < \delta, \\
            \min\cbra*{ \abs*{\tilde \varphi_- - \rbra*{1-\frac{\theta_p}{\pi}}}, 1 - \abs*{\tilde \varphi_- - \rbra*{1-\frac{\theta_p}{\pi}}} } < \delta.
        \end{aligned}
        \end{equation}

        Finally, we measure $\mathsf{PhE}^Q_{\varepsilon,\delta} \rbra*{ U \ket{0}_{\mathsf{A}} \ket{0}_{\mathsf{B}} \otimes \ket{0}_\mathsf{C} }$ in the computational basis of system $\mathsf{C}$, and let $\tilde \varphi \in [0, 1)$ be the output real number. 
        Then, it can be verified that $\abs{\abs{\sin\rbra{\pi \tilde \varphi}} - \sqrt{p}} < \pi\delta$. 
        To see this, we note that with probability at least $1 - \varepsilon$, the real number $\tilde \varphi$ satisfies the condition for either $\tilde \varphi_+$ or $\tilde \varphi_-$ in \cref{eq:tilde_pm}.
        No matter which is the case, we have 
        \begin{align}
            \abs[\big]{ \abs*{\sin\rbra{\pi \tilde \varphi}} - \sqrt{p} } 
            & = \abs[\big]{\abs*{\sin\rbra{\pi \tilde \varphi}} - \abs*{\sin\rbra{\theta_p}}} \\
            & \leq \min \big\{ \abs*{ \pi \tilde \varphi - \theta_p}, \pi - \abs*{ \pi \tilde \varphi - \theta_p}, \abs*{ \pi \tilde \varphi - \rbra{\pi - \theta_p} }, \pi - \abs*{ \pi \tilde \varphi - \rbra{\pi - \theta_p}} \big\} \\
            & = \pi \min \Bigg\{ \abs*{ \tilde \varphi - \frac{\theta_p}{\pi}}, 1 - \abs*{ \tilde \varphi - \frac{\theta_p}{\pi}}, \abs*{\tilde \varphi - \rbra*{1-\frac{\theta_p}{\pi}}}, 1 - \abs*{\tilde \varphi - \rbra*{1-\frac{\theta_p}{\pi}}} \Bigg\} \\
            & < \pi \delta. 
        \end{align}

        According to the above analysis, we conclude that with probability at least $1-\varepsilon$, $\abs{\sin\rbra{\pi\tilde \varphi}}$ is $\pi\delta$-close to $\sqrt{p}$. 
        By letting $\varepsilon = 2/3$ and rescaling $\delta \gets \delta/\pi$, we obtain a quantum algorithm that estimates the value of $\sqrt{p}$ to within additive error $\delta$ with success probability at least $2/3$.
    \end{proof}

    The square root amplitude estimation given in \cref{thm:sqrt-ampl-est} essentially employs the same idea as the amplitude estimation in \cite{BHMT02}, with an extra observation that we can estimate the square root of $p$ directly by $\abs{\sin\rbra{\theta_p}}$ and note that $\abs{\sin\rbra{\cdot}}$ is $O\rbra{1}$-Lipschitz.

    To conclude this section, we explain how square root amplitude estimation can reproduce the quantum amplitude estimation in \cite{BHMT02}.
    Suppose that $U$ is a unitary operator such that $U\ket{0}_{\mathsf{A}}\ket{0}_{\mathsf{B}} = \sqrt{p} \ket{0}_{\mathsf{A}}\ket{\phi_0}_{\mathsf{B}} + \sqrt{1-p} \ket{1}_{\mathsf{A}}\ket{\phi_1}_{\mathsf{B}}$ and our goal is to estimate the value of $p$ by \cref{thm:sqrt-ampl-est} with the same (up to a constant factor) query complexity as in \cref{thm:amplitude-estimation}. 
    By \cref{thm:sqrt-ampl-est}, we can obtain (with high probability) a $\delta$-estimate $\tilde x$ of $\sqrt{p}$ with query complexity $O\rbra{1/\delta}$, i.e., $\abs{\tilde x - \sqrt{p}} < \delta$. 
    Then, we claim that $\tilde x^2$ is a $2\delta$-estimate of $p$. 
    This is seen by the fact that $\abs{\tilde x^2 - p} = \abs{\tilde x + \sqrt{p}} \cdot \abs{\tilde x - \sqrt{p}} < 2\delta$, where $0 \leq \tilde x + \sqrt{p} \leq 2$.
    By rescaling $\delta \gets \delta/2$, we can estimate the value of $p$ to within additive error $\delta$ with quantum query complexity $O\rbra{1/\delta}$. 

    \section{Pure-State Closeness Estimations} \label{sec:td-est}

    In this section, we first present a quantum query algorithm for estimating the trace distance between pure quantum states in \cref{sec:algo}. 
    Then, we present a quantum query algorithm for estimating the square root fidelity between pure quantum states in \cref{sec:algo-fi}. 
    Finally, in \cref{sec:sqr-fi-reprod}, we reproduce the pure-state squared fidelity estimation.

    \subsection{Trace Distance} \label{sec:algo}

    We present a quantum query algorithm for pure-state trace distance estimation in \cref{thm:td}, with its formal description given in \cref{algo:td}.

    \begin{algorithm*}[t]
        \caption{Quantum query algorithm for pure-state trace distance estimation.}
        \label{algo:td}
        \begin{algorithmic}[1]
        \Require Quantum oracles $U_{\varphi}$ and $U_{\psi}$ that prepare $k$-qubit pure states $\ket{\varphi}$ and $\ket{\psi}$, respectively (as well as $U_{\varphi}^\dag$, $U_{\psi}^\dag$, and their controlled versions); the desired additive error $\varepsilon \in \rbra{0, 1}$. 

        \Ensure An estimate $\tilde x$ of $\mathrm{T}\rbra{\ket{\varphi}, \ket{\psi}}$ to within additive error $\varepsilon$ with probability at least $2/3$.
        
        \State Let unitary operator $W = \rbra*{X_{\mathsf{A}} \otimes \ket{0}_{\mathsf{B}} \bra{0} + \sum\limits_{j = 1}^{2^k - 1} I_{\mathsf{A}} \otimes \ket{j}_{\mathsf{B}} \bra{j}} \cdot \rbra*{I_{\mathsf{A}} \otimes \rbra*{U_{\varphi}^\dag U_{\psi}}_{\mathsf{B}}}$.

        \State \Return $\mathsf{SqrtAmpEst}\rbra{W, \varepsilon}$ by \cref{thm:sqrt-ampl-est}.
        \end{algorithmic}
    \end{algorithm*}

    \begin{theorem} [Pure-state trace distance estimation] \label{thm:td}
        Suppose that $U_{\varphi}$ and $U_{\psi}$ are quantum unitary operators that prepare pure quantum states $\ket{\varphi} = U_{\varphi} \ket{0}$ and $\ket{\psi} = U_{\psi} \ket{0}$ (of the same dimension), respectively. 
        For $\varepsilon \in \rbra{0, 1}$, there is a quantum query algorithm that estimates $\mathrm{T}\rbra{\ket{\varphi}, \ket{\psi}}$ to within additive error $\varepsilon$ with probability at least $2/3$ using $O\rbra{1/\varepsilon}$ queries to (controlled-)$U_{\varphi}$, (controlled-)$U_{\psi}$, and their inverses. 
    \end{theorem}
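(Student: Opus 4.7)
The plan is to reduce pure-state trace distance estimation to the square root amplitude estimation of \cref{thm:sqrt-ampl-est} through the unitary $W$ specified in \cref{algo:td}. All that needs checking is that $W\ket{0}_{\mathsf{A}}\ket{0}_{\mathsf{B}}$ places amplitude magnitude exactly $\mathrm{T}\rbra{\ket{\varphi},\ket{\psi}}$ on its $\ket{0}_{\mathsf{A}}$ branch.

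First I would decompose $U_\varphi^\dag U_\psi \ket{0}_{\mathsf{B}} = U_\varphi^\dag\ket{\psi}$ in the computational basis. Since $\bra{0} U_\varphi^\dag \ket{\psi} = \braket{\varphi}{\psi}$, this reads
\begin{equation*}
U_\varphi^\dag\ket{\psi} = \braket{\varphi}{\psi}\,\ket{0} + \sqrt{1 - \abs*{\braket{\varphi}{\psi}}^2}\,\ket{\perp}
\end{equation*}
for some unit vector $\ket{\perp} \in \spanspace\set{\ket{j}}{1 \le j \le 2^k - 1}$. The controlled operator sitting to the left of $I_{\mathsf{A}} \otimes \rbra{U_\varphi^\dag U_\psi}_{\mathsf{B}}$ in $W$ flips register $\mathsf{A}$ exactly on the $\ket{0}_{\mathsf{B}}$ branch and acts as the identity elsewhere, giving
\begin{equation*}
W\ket{0}_{\mathsf{A}}\ket{0}_{\mathsf{B}} = \sqrt{1 - \abs*{\braket{\varphi}{\psi}}^2}\,\ket{0}_{\mathsf{A}}\ket{\perp}_{\mathsf{B}} + \braket{\varphi}{\psi}\,\ket{1}_{\mathsf{A}}\ket{0}_{\mathsf{B}}.
\end{equation*}
Absorbing the phase of $\braket{\varphi}{\psi}$ into the normalized state on the $\ket{1}_{\mathsf{A}}$ branch puts this in the template $\sqrt{p}\,\ket{0}\ket{\phi_0} + \sqrt{1-p}\,\ket{1}\ket{\phi_1}$ required by \cref{thm:sqrt-ampl-est} with $\sqrt{p} = \sqrt{1 - \abs{\braket{\varphi}{\psi}}^2}$, which by \cref{eq:relation-td-fi} equals $\mathrm{T}\rbra{\ket{\varphi},\ket{\psi}}$.

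Invoking \cref{thm:sqrt-ampl-est} on $W$ with precision $\varepsilon$ then yields $\tilde x$ satisfying $\abs{\tilde x - \mathrm{T}\rbra{\ket{\varphi},\ket{\psi}}} < \varepsilon$ with probability at least $2/3$, using $O\rbra{1/\varepsilon}$ queries to (controlled-)$W$ and (controlled-)$W^\dag$. Since each (controlled) invocation of $W$ costs one (controlled) use of each of $U_\varphi^\dag$ and $U_\psi$ together with a constant number of oracle-independent gates (and analogously for $W^\dag$), the overall query complexity to $U_\varphi$, $U_\psi$, their inverses, and their controlled versions is $O\rbra{1/\varepsilon}$. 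I do not foresee a genuine obstacle: the one conceptual observation is that the controlled-$X$ in $W$ diverts the $\ket{\varphi}$-component of $\ket{\psi}$ onto the $\ket{1}_{\mathsf{A}}$ branch, so the $\ket{0}_{\mathsf{A}}$-branch amplitude measures exactly the perpendicular component, which is the pure-state trace distance by \cref{eq:relation-td-fi}. The possibly complex phase of $\braket{\varphi}{\psi}$ is absorbed harmlessly, since \cref{thm:sqrt-ampl-est} only constrains the magnitude $\sqrt{p}$ on the $\ket{0}_{\mathsf{A}}$ branch.
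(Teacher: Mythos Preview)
Your proposal is correct and follows essentially the same argument as the paper: you construct the same unitary $W$, verify that the $\ket{0}_{\mathsf{A}}$-branch amplitude has magnitude $\sqrt{1-\abs{\braket{\varphi}{\psi}}^2}=\mathrm{T}\rbra{\ket{\varphi},\ket{\psi}}$, and then invoke \cref{thm:sqrt-ampl-est}. The only cosmetic difference is that the paper writes the orthogonal component explicitly as $\sum_{j=1}^{2^k-1}\alpha_j\ket{j}$ rather than abstracting it as a single unit vector $\ket{\perp}$.
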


    \begin{proof}
        We consider the unitary operator
        \begin{equation}
            U = U_{\varphi}^\dag U_{\psi}. 
        \end{equation}
        It can be seen that 
        \begin{equation} \label{eq:def-U}
            \bra{0} U \ket{0} = \bra{0} U_{\varphi}^\dag U_{\psi} \ket{0} = \braket{\varphi}{\psi}.
        \end{equation}
        We assume that $\ket{\varphi}$ and $\ket{\psi}$ are $k$-qubit pure quantum states. 
        That is, $\ket{\varphi}$ and $\ket{\psi}$ are $2^k$-dimensional and we write the computational basis as $\ket{0}, \ket{1}, \dots, \ket{2^k-1}$. 
        Then, by \cref{eq:def-U}, we can write
        \begin{equation} \label{eq:U-ket0}
            U \ket{0} = \braket{\varphi}{\psi} \ket{0} + \sum_{j = 1}^{2^k-1} \alpha_j \ket{j},
        \end{equation}
        where $\alpha_j \in \mathbb{C}$ for $1 \leq j < 2^k$ and it holds that 
        \begin{equation} \label{eq:sum-alpha-j}
            \abs{\braket{\varphi}{\psi}}^2 +  \sum_{j=1}^{2^k-1} \abs{\alpha_j}^2 = 1. 
        \end{equation}
        Let $\mathsf{B}$ denote the system of $k$ qubits that $U$ acts on (and we will use $U_{\mathsf{B}}$ to emphasize the system in the following analysis). 
        We introduce an auxiliary system $\mathsf{A}$ that consists of one qubit. 
        Let $V$ be a unitary operator defined by
        \begin{equation}
            V = X_{\mathsf{A}} \otimes \ket{0}_{\mathsf{B}} \bra{0} + \sum_{j = 1}^{2^k - 1} I_{\mathsf{A}} \otimes \ket{j}_{\mathsf{B}} \bra{j}, 
        \end{equation}
        where $X = \ket{0} \bra{1} + \ket{1} \bra{0}$ is the Pauli-X gate. 
        Intuitively, $V$ is used to mark those branches $\ket{j}_\mathsf{B}$ for $1 \leq j < 2^k$. 
        Let 
        \begin{equation} \label{eq:def-W}
            W = V \rbra{I_{\mathsf{A}} \otimes U_{\mathsf{B}}}.
        \end{equation}
        Then, by \cref{eq:U-ket0}, we have
        \begin{align}
            W \ket{0}_{\mathsf{A}} \ket{0}_{\mathsf{B}}
            & = V \rbra{I_{\mathsf{A}} \otimes U_{\mathsf{B}}} \ket{0}_{\mathsf{A}} \ket{0}_{\mathsf{B}} \\
            & = V \rbra*{ \braket{\varphi}{\psi} \ket{0}_{\mathsf{A}} \ket{0}_{\mathsf{B}} + \sum_{j = 1}^{2^k-1} \alpha_j \ket{0}_{\mathsf{A}} \ket{j}_{\mathsf{B}} } \\
            & = \sum_{j = 1}^{2^k-1} \alpha_j \ket{0}_{\mathsf{A}} \ket{j}_{\mathsf{B}} + \braket{\varphi}{\psi} \ket{1}_{\mathsf{A}} \ket{0}_{\mathsf{B}} \\
            & = \sqrt{p} \ket{0}_{\mathsf{A}} \ket{\phi_0}_{\mathsf{B}} + \sqrt{1 - p} \ket{1}_{\mathsf{A}} \ket{\phi_1}_{\mathsf{B}}, \label{eq:W00}
        \end{align}
        where 
        \begin{align}
            p & = \sum_{j=1}^{2^k-1} \abs{\alpha_j}^2, \label{eq:def-p} \\
            \ket{\phi_0} & = \frac{1}{\sqrt{p}} \sum_{j=1}^{2^k-1} \alpha_j \ket{j}, \label{eq:def-phi0} \\
            \ket{\phi_1} & = \frac{\braket{\varphi}{\psi}}{\sqrt{1 - p}} \ket{0}. \label{eq:def-phi1}
        \end{align}
        By \cref{thm:sqrt-ampl-est}, we can obtain an estimate 
        \begin{equation}
            \tilde x = \mathsf{SqrtAmpEst}\rbra{W, \varepsilon}
        \end{equation}
        of $\sqrt{p}$ to within additive error $\varepsilon$ (i.e., $\abs{\tilde x - \sqrt{p}} < \varepsilon$) with probability at least $2/3$ using $O\rbra{1/\varepsilon}$ queries to $W$.
        On the other hand, by \cref{eq:sum-alpha-j}, we have 
        \begin{equation} \label{eq:p-eq-td}
            \sqrt{p} = \sqrt{1 - \abs{\braket{\varphi}{\psi}}^2} 
            = \sqrt{1 - \mathrm{F}^2 \rbra{\ket{\varphi}, \ket{\psi}}}
            = \mathrm{T} \rbra{\ket{\varphi}, \ket{\psi}},
        \end{equation}
        which means that $\tilde x$ is an estimate of the trace distance $\mathrm{T} \rbra{\ket{\varphi}, \ket{\psi}}$ to within additive error $\varepsilon$ with probability at least $2/3$. 
        According to \cref{eq:def-W}, a query to $W$ consists of one query to each of $U_{\varphi}$ and $U_{\psi}$. 
        Therefore, the way we obtain $\tilde x$ uses $O\rbra{1/\varepsilon}$ queries to both $U_{\varphi}$ and $U_{\psi}$.
    \end{proof}

    \subsection{Square Root Fidelity} \label{sec:algo-fi}

    We present a quantum query algorithm for pure-state square root estimation in \cref{thm:fi}, with its formal description given in \cref{algo:fi}, which has the same structure as \cref{algo:td} for pure-state trace distance estimation.

    \begin{theorem} [Pure-state square root fidelity estimation] \label{thm:fi}
        Suppose that $U_{\varphi}$ and $U_{\psi}$ are quantum unitary operators that prepare pure quantum states $\ket{\varphi} = U_{\varphi} \ket{0}$ and $\ket{\psi} = U_{\psi} \ket{0}$ (of the same dimension), respectively. 
        For $\varepsilon \in \rbra{0, 1}$, there is a quantum query algorithm that estimates $\mathrm{F}\rbra{\ket{\varphi}, \ket{\psi}}$ to within additive error $\varepsilon$ with probability at least $2/3$ using $O\rbra{1/\varepsilon}$ queries to (controlled-)$U_{\varphi}$, (controlled-)$U_{\psi}$, and their inverses. 
    \end{theorem}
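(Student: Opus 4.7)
The plan is to mirror the proof of \cref{thm:td} almost verbatim, changing only the marking unitary so that the square root fidelity, rather than the trace distance, ends up as the amplitude on the $\ket{0}_{\mathsf{A}}$ branch of the final state (which is what \cref{thm:sqrt-ampl-est} estimates). Concretely, I would again take $U = U_{\varphi}^\dag U_{\psi}$ on a $k$-qubit register $\mathsf{B}$, so that $U\ket{0} = \braket{\varphi}{\psi}\ket{0} + \sum_{j=1}^{2^k-1}\alpha_j\ket{j}$, and attach a one-qubit ancilla $\mathsf{A}$.

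The single modification relative to \cref{algo:td} is to swap which branches of $\mathsf{B}$ trigger an $X$ on $\mathsf{A}$. I would define
\[
V' = I_{\mathsf{A}} \otimes \ket{0}_{\mathsf{B}}\bra{0} + X_{\mathsf{A}} \otimes \sum_{j=1}^{2^k - 1}\ket{j}_{\mathsf{B}}\bra{j},
\]
and set $W' = V'(I_{\mathsf{A}} \otimes U_{\mathsf{B}})$. A direct computation parallel to \cref{eq:W00} then yields
\[
W'\ket{0}_{\mathsf{A}}\ket{0}_{\mathsf{B}} = \braket{\varphi}{\psi}\ket{0}_{\mathsf{A}}\ket{0}_{\mathsf{B}} + \sum_{j=1}^{2^k - 1}\alpha_j\ket{1}_{\mathsf{A}}\ket{j}_{\mathsf{B}}.
\]
Absorbing the (possibly complex) phase of $\braket{\varphi}{\psi}$ into a normalized state $\ket{\phi'_0}_{\mathsf{B}}$, exactly as was done in \cref{eq:def-phi1} for the trace-distance algorithm, rewrites this in the form $\sqrt{p'}\,\ket{0}_{\mathsf{A}}\ket{\phi'_0}_{\mathsf{B}} + \sqrt{1-p'}\,\ket{1}_{\mathsf{A}}\ket{\phi'_1}_{\mathsf{B}}$ with
\[
\sqrt{p'} = \abs{\braket{\varphi}{\psi}} = \mathrm{F}(\ket{\varphi},\ket{\psi}),
\]
which matches the standard-form hypothesis of \cref{thm:sqrt-ampl-est}.

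Running $\mathsf{SqrtAmpEst}(W', \varepsilon)$ from \cref{thm:sqrt-ampl-est} then returns, with probability at least $2/3$, a number $\tilde{x}$ satisfying $\abs{\tilde{x} - \mathrm{F}(\ket{\varphi},\ket{\psi})} < \varepsilon$, using $O(1/\varepsilon)$ queries to (controlled-)$W'$ and its inverse. Since each application of $W'$ uses exactly one query to each of $U_{\varphi}$ and $U_{\psi}$ (and the controlled-$W'$ required by square root amplitude estimation is built from controlled versions of these and their inverses), the total cost is $O(1/\varepsilon)$ queries to (controlled-)$U_{\varphi}$, (controlled-)$U_{\psi}$, and their inverses, giving the theorem.

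There is no real obstacle here: the construction and its analysis are essentially the same as in the trace-distance case, with $V$ replaced by $V'$ so that the ``special'' branch is the zero branch rather than the nonzero branches. The one point worth double-checking is that the possibly complex phase of $\braket{\varphi}{\psi}$ can be harmlessly absorbed into $\ket{\phi'_0}_{\mathsf{B}}$, so that $\sqrt{p'}$ really is the nonnegative real number $\abs{\braket{\varphi}{\psi}} = \mathrm{F}(\ket{\varphi},\ket{\psi})$ required as input to \cref{thm:sqrt-ampl-est}; this is precisely the maneuver already used implicitly in \cref{eq:def-phi1}, so the rest of the argument carries over without change.
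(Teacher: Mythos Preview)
Your proposal is correct and essentially identical to the paper's proof: your $W' = V'(I_{\mathsf{A}}\otimes U_{\mathsf{B}})$ is exactly the operator in \cref{algo:fi}, and indeed $V' = (X_{\mathsf{A}}\otimes I_{\mathsf{B}})V$, so your $W'$ coincides with the paper's $W' = (X_{\mathsf{A}}\otimes I_{\mathsf{B}})W$ from \cref{eq:def-W'}. The remaining analysis (phase absorption into $\ket{\phi'_0}$, application of \cref{thm:sqrt-ampl-est}, and the query count) matches the paper's argument line for line.
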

    \begin{proof}
        We also need the unitary operator $W$ defined by \cref{eq:def-W}. 
        Let 
        \begin{equation} \label{eq:def-W'}
            W' = \rbra{X_{\mathsf{A}} \otimes I_{\mathsf{B}}} W.
        \end{equation}
        By \cref{eq:W00}, we have
        \begin{equation}
            W' \ket{0}_{\mathsf{A}} \ket{0}_{\mathsf{B}} = \sqrt{p'} \ket{0}_{\mathsf{A}} \ket{\phi_1}_{\mathsf{B}} + \sqrt{1 - p'} \ket{1}_{\mathsf{A}} \ket{\phi_0}_{\mathsf{B}},
        \end{equation}
        where $p' = 1 - p$, and $p$, $\ket{\phi_0}$, and $\ket{\phi_1}$ are defined by \cref{eq:def-p}, \cref{eq:def-phi0}, and \cref{eq:def-phi1}, respectively. 
        By \cref{thm:sqrt-ampl-est}, we can obtain an estimate 
        \begin{equation}
            \tilde x = \mathsf{SqrtAmpEst}\rbra{W', \varepsilon}
        \end{equation}
        of $\sqrt{p'}$ to within additive error $\varepsilon$ (i.e., $\abs{\tilde x - \sqrt{p'}} < \varepsilon$) with probability at least $2/3$ using $O\rbra{1/\varepsilon}$ queries to $W'$.
        On the other hand, by \cref{eq:p-eq-td}, we have 
        \begin{equation}
            \sqrt{p'} = \sqrt{1 - p} = \sqrt{1 - \rbra{\mathrm{T} \rbra{\ket{\varphi}, \ket{\psi}}}^2} = \mathrm{F}\rbra{\ket{\varphi}, \ket{\psi}},
        \end{equation}
        which means that $\tilde x$ is an estimate of the square root fidelity $\mathrm{F} \rbra{\ket{\varphi}, \ket{\psi}}$ to within additive error $\varepsilon$ with probability at least $2/3$. 
        According to \cref{eq:def-W'}, a query to $W'$ consists of one query to each of $U_{\varphi}$ and $U_{\psi}$. 
        Therefore, the way we obtain $\tilde x$ uses $O\rbra{1/\varepsilon}$ queries to both $U_{\varphi}$ and $U_{\psi}$.
    \end{proof}

    \begin{algorithm*}[t]
        \caption{Quantum query algorithm for pure-state square root fidelity estimation.}
        \label{algo:fi}
        \begin{algorithmic}[1]
        \Require Quantum oracles $U_{\varphi}$ and $U_{\psi}$ that prepare $k$-qubit pure states $\ket{\varphi}$ and $\ket{\psi}$, respectively (as well as $U_{\varphi}^\dag$, $U_{\psi}^\dag$, and their controlled versions); the desired additive error $\varepsilon \in \rbra{0, 1}$. 

        \Ensure An estimate $\tilde x$ of $\mathrm{F}\rbra{\ket{\varphi}, \ket{\psi}}$ to within additive error $\varepsilon$ with probability at least $2/3$.
        
        \State Let unitary operator $W' = \rbra*{I_{\mathsf{A}} \otimes \ket{0}_{\mathsf{B}} \bra{0} + \sum\limits_{j = 1}^{2^k - 1} X_{\mathsf{A}} \otimes \ket{j}_{\mathsf{B}} \bra{j}} \cdot \rbra*{I_{\mathsf{A}} \otimes \rbra*{U_{\varphi}^\dag U_{\psi}}_{\mathsf{B}}}$.

        \State \Return $\mathsf{SqrtAmpEst}\rbra{W', \varepsilon}$ by \cref{thm:sqrt-ampl-est}.
        \end{algorithmic}
    \end{algorithm*}

    \subsection{Squared Fidelity, Revisited} \label{sec:sqr-fi-reprod}

    It is well known that the squared fidelity between two pure quantum states can be estimated to within additive error $\varepsilon$ with quantum query complexity $O\rbra{1/\varepsilon}$ based on the SWAP test \cite{BCWdW01} and the quantum amplitude estimation \cite{BHMT02}. 
    Here, we reproduce this result by using the quantum query algorithm in \cref{thm:td} as a subroutine.

    \begin{corollary} [Pure-state squared fidelity estimation, reproduced] \label{thm:sqr-fi}
        Suppose that $U_{\varphi}$ and $U_{\psi}$ are quantum unitary operators that prepare pure quantum states $\ket{\varphi} = U_{\varphi} \ket{0}$ and $\ket{\psi} = U_{\psi} \ket{0}$ (of the same dimension), respectively. 
        For $\varepsilon \in \rbra{0, 1}$, there is a quantum query algorithm that estimates $\mathrm{F}^2\rbra{\ket{\varphi}, \ket{\psi}}$ to within additive error $\varepsilon$ with probability at least $2/3$ using $O\rbra{1/\varepsilon}$ queries to (controlled-)$U_{\varphi}$, (controlled-)$U_{\psi}$, and their inverses. 
    \end{corollary}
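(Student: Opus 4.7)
My plan is to obtain Corollary 4.3 as an immediate reduction to either Theorem 4.2 (pure-state square root fidelity estimation) or Theorem 4.1 (pure-state trace distance estimation), using the numerical-stability observation that squaring a bounded quantity is stable in absolute error. Concretely, I would invoke \cref{thm:fi} with precision $\delta = \varepsilon/2$ to obtain, with probability at least $2/3$, a value $\tilde y \in \sbra{0, 1}$ such that $\abs{\tilde y - \mathrm{F}\rbra{\ket{\varphi}, \ket{\psi}}} < \delta$ using $O\rbra{1/\delta} = O\rbra{1/\varepsilon}$ queries to the relevant oracles.

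The algorithm then outputs $\tilde y^2$ as its estimate of $\mathrm{F}^2\rbra{\ket{\varphi}, \ket{\psi}}$. To verify correctness, I would use the identity
\begin{equation}
    \abs[\big]{\tilde y^2 - \mathrm{F}^2\rbra{\ket{\varphi}, \ket{\psi}}} = \abs[\big]{\tilde y + \mathrm{F}\rbra{\ket{\varphi}, \ket{\psi}}} \cdot \abs[\big]{\tilde y - \mathrm{F}\rbra{\ket{\varphi}, \ket{\psi}}},
\end{equation}
and note that since both $\tilde y$ and $\mathrm{F}\rbra{\ket{\varphi}, \ket{\psi}}$ lie in $\sbra{0, 1}$, the first factor is at most $2$. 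Combining this with $\abs{\tilde y - \mathrm{F}\rbra{\ket{\varphi}, \ket{\psi}}} < \varepsilon/2$ yields the desired bound $\abs{\tilde y^2 - \mathrm{F}^2\rbra{\ket{\varphi}, \ket{\psi}}} < \varepsilon$ with the claimed probability.

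An equally short alternative is to run \cref{thm:td} to additive precision $\varepsilon/2$, obtain $\tilde t$ with $\abs{\tilde t - \mathrm{T}\rbra{\ket{\varphi}, \ket{\psi}}} < \varepsilon/2$, and output $1 - \tilde t^{\,2}$; by \cref{eq:relation-td-fi} this equals $\mathrm{F}^2\rbra{\ket{\varphi}, \ket{\psi}}$ up to additive error $\varepsilon$ by the same squaring argument. Either route preserves the $O\rbra{1/\varepsilon}$ query complexity because only constant-factor rescaling of the precision is needed.

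There is essentially no obstacle here: unlike the square-root direction (where \cref{prop:sqrt-stable} shows that taking a square root degrades additive error from $\varepsilon$ to $\sqrt{\varepsilon}$ and forces a quadratic blow-up), squaring is Lipschitz on $\sbra{0, 1}$ with Lipschitz constant $2$, so no asymptotic cost is incurred. The only thing worth stating explicitly in the final write-up is the $\sbra{0, 1}$ containment of the estimator (which can be ensured by clipping the output of \cref{thm:fi} to this interval if needed), so that the factor $\abs{\tilde y + \mathrm{F}\rbra{\ket{\varphi}, \ket{\psi}}} \leq 2$ is justified.
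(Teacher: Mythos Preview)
Your proposal is correct and essentially matches the paper's proof: the paper takes exactly your ``alternative'' route, invoking \cref{thm:td} with precision $\varepsilon/2$ to get $\tilde x$ and outputting $1-\tilde x^2$, then bounding $\abs{(1-\tilde x^2)-\mathrm{F}^2\rbra{\ket{\varphi},\ket{\psi}}}=\abs{\tilde x+\mathrm{T}\rbra{\ket{\varphi},\ket{\psi}}}\cdot\abs{\tilde x-\mathrm{T}\rbra{\ket{\varphi},\ket{\psi}}}<2\cdot\varepsilon/2$. Your primary route via \cref{thm:fi} is the symmetric variant of the same argument and is equally valid.
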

    \begin{proof}
        By \cref{eq:relation-td-fi}, we have 
        \begin{equation}
        \mathrm{F}^2\rbra{\ket{\varphi}, \ket{\psi}} = 1 - \rbra*{\mathrm{T}\rbra{\ket{\varphi}, \ket{\psi}}}^2.
        \end{equation}
        By the quantum query algorithm for pure-state trace distance estimation given in \cref{thm:td} using $O\rbra{1/\varepsilon}$ queries to $U_{\varphi}$ and $U_{\psi}$, we can, with probability at least $2/3$, obtain an estimate $\tilde x \in \sbra{0, 1}$ of $\mathrm{T}\rbra{\ket{\varphi}, \ket{\psi}}$ to within additive error $\varepsilon/2$, i.e.,
        \begin{equation} \label{eq:sqr-fi-pf-td}
            \abs*{\tilde x - \mathrm{T}\rbra{\ket{\varphi}, \ket{\psi}}} < \frac{\varepsilon}{2}.
        \end{equation}
        Then, we can choose $1 - \tilde x^2$ to be the estimate of $\mathrm{F}^2\rbra{\ket{\varphi}, \ket{\psi}}$ to within additive error $\varepsilon$. 
        To see this, we note that by \cref{eq:sqr-fi-pf-td}, we have
        \begin{align}
            \abs*{\rbra{1 - \tilde x^2} - \mathrm{F}^2\rbra{\ket{\varphi}, \ket{\psi}}} 
            & = \abs*{\tilde x^2 - \rbra*{\mathrm{T}\rbra{\ket{\varphi}, \ket{\psi}}}^2} \\
            & = \abs[\big]{\tilde x + \mathrm{T}\rbra{\ket{\varphi}, \ket{\psi}}} \cdot \abs[\big]{\tilde x - \mathrm{T}\rbra{\ket{\varphi}, \ket{\psi}}} \\
            & < 2 \cdot \frac \varepsilon 2 = \varepsilon.
        \end{align}
    \end{proof}

    \section{Query Lower Bounds} \label{sec:quantum-lower-bounds}

    In this section, we prove quantum query lower bounds for estimating the trace distance, squared fidelity and square root fidelity between pure quantum states in \cref{sec:lb-td}, \cref{sec:lb-sqr-fi} and \cref{sec:lb-fi}, respectively. 
    To prove the lower bounds, we need the quantum query lower bound for distinguishing probability distributions given in \cite{Bel19}, which is obtained by the quantum adversary method \cite{Amb02}. 

    \begin{theorem} [Quantum query lower bound for distinguishing probability distributions, {\cite[Theorem 4]{Bel19}}] \label{thm:qlower-prob-distri}
        Let $p, q \colon \sbra{n} \to \sbra{0, 1}$ be two probability distributions on a sample space of $n$ elements.
        Suppose that $U_p$ and $U_q$ are quantum unitary oracles such that
        \begin{align}
            U_p \ket{0} & = \sum_{j \in \sbra{n}} \sqrt{p\rbra{j}} \ket{j}, \\
            U_q \ket{0} & = \sum_{j \in \sbra{n}} \sqrt{q\rbra{j}} \ket{j}.
        \end{align}
        Then, any quantum query algorithm using queries to quantum oracle $U$ (as well as $U^\dag$ and their controlled versions) that determines whether $U = U_p$ or $U = U_q$ with probability at least $2/3$, promised that it is in either case, has query complexity $\Omega\rbra{1/d_{\textup{H}}\rbra{p, q}}$, where 
        \begin{equation}
            d_{\textup{H}}\rbra{p, q} = \sqrt{\frac 1 2 \sum_{j \in \sbra{n}} \rbra*{\sqrt{p\rbra{j}} - \sqrt{q\rbra{j}}}^2}
        \end{equation}
        is the Hellinger distance.
    \end{theorem}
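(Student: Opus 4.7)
The plan is to apply the standard hybrid argument for quantum oracle discrimination, combined with a careful choice of worst-case oracle extensions. Fix any quantum query algorithm using $T$ queries that distinguishes $U = U_p$ from $U = U_q$ with success probability at least $2/3$, and let $\ket{\Psi_p^{(T)}}$ and $\ket{\Psi_q^{(T)}}$ denote its final states under the two oracles. By the Helstrom--Holevo bound applied to the output measurement, a successful distinguisher forces $\Abs{\ket{\Psi_p^{(T)}} - \ket{\Psi_q^{(T)}}} = \Omega(1)$.

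Next, I would run through the hybrid sequence of executions, replacing one oracle call at a time. By unitarity of the intervening gates and the triangle inequality,
\begin{equation*}
\Abs*{\ket{\Psi_p^{(T)}} - \ket{\Psi_q^{(T)}}} \leq \sum_{t=1}^T \Abs*{(U_p - U_q) \ket{\chi_t}} \leq T \cdot \Abs{U_p - U_q}_{\mathrm{op}},
\end{equation*}
where $\ket{\chi_t}$ is the state on the query register just before the $t$-th query. Controlled queries and queries to $U^\dag$ obey the same operator-norm bound, so combining with the previous step yields $T = \Omega\rbra{1 / \Abs{U_p - U_q}_{\mathrm{op}}}$.

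The main obstacle is to realize, for the lower-bound adversary, a concrete pair of unitary extensions of $U_p$ and $U_q$ achieving $\Abs{U_p - U_q}_{\mathrm{op}} = O(d_{\textup{H}}(p, q))$; naively the operator-norm difference could be as large as $2$. I would construct such extensions (with an ancilla register if necessary) so that $U_p$ and $U_q$ differ only on a low-dimensional subspace containing $\ket{0}$, forcing $\Abs{U_p - U_q}_{\mathrm{op}} = O\rbra{\Abs{\ket{\psi_p} - \ket{\psi_q}}}$. A direct calculation then gives
\begin{equation*}
\Abs{\ket{\psi_p} - \ket{\psi_q}}^2 = 2 - 2 \sum_{j \in \sbra{n}} \sqrt{p(j) q(j)} = \sum_{j \in \sbra{n}} \rbra*{\sqrt{p(j)} - \sqrt{q(j)}}^2 = 2 \, d_{\textup{H}}^2(p, q),
\end{equation*}
establishing the desired $T = \Omega(1/d_{\textup{H}}(p, q))$. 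A more robust alternative, bypassing the explicit construction of worst-case extensions, is to apply the positive-weights quantum adversary method of~\cite{Amb02} directly with the $2 \times 2$ adversary matrix whose off-diagonal entry is $1$; the corresponding spectral bound likewise reduces to the Hellinger distance through the Bhattacharyya overlap $\braket{\psi_p}{\psi_q} = \sum_j \sqrt{p(j) q(j)} = 1 - d_{\textup{H}}^2(p, q)$.
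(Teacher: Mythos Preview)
The paper does not prove this statement; it is quoted from \cite[Theorem~4]{Bel19} and invoked as a black box, with the remark at the opening of \cref{sec:quantum-lower-bounds} that Belovs' proof proceeds via the quantum adversary method of \cite{Amb02}. That is precisely the second of the two routes you sketch, so your brief adversary-method alternative is the one that aligns with the cited source.

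Your primary hybrid-argument route is a different and more elementary approach, and it is sound once the step you yourself flag as ``the main obstacle'' is actually carried out rather than asserted. Concretely, you must exhibit unitaries $U_p, U_q$ with the prescribed action on $\ket{0}$ and $\Abs{U_p - U_q}_{\mathrm{op}} = O\rbra{d_{\textup{H}}(p,q)}$. One clean construction: since $\ket{\psi_p}$ and $\ket{\psi_q}$ have nonnegative real amplitudes, $\braket{\psi_p}{\psi_q}$ is real and nonnegative; fix any $U_p$ with $U_p\ket{0} = \ket{\psi_p}$ and set $U_q = R\,U_p$, where $R$ is the rotation by angle $\theta = \arccos\braket{\psi_p}{\psi_q}$ in the real plane $\spanspace\cbra{\ket{\psi_p}, \ket{\psi_q}}$ carrying $\ket{\psi_p}$ to $\ket{\psi_q}$. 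Then $\Abs{U_p - U_q}_{\mathrm{op}} = \Abs{I - R}_{\mathrm{op}} = 2\abs{\sin(\theta/2)} = \Abs{\ket{\psi_p} - \ket{\psi_q}} = \sqrt{2}\,d_{\textup{H}}(p,q)$, and your hybrid bound yields $T = \Omega\rbra{1/d_{\textup{H}}(p,q)}$. Note that this route depends on the adversary's freedom to choose the oracle's action on $\ket{0}^\perp$, whereas the adversary-method proof in \cite{Bel19} works directly with the overlap $\braket{\psi_p}{\psi_q} = 1 - d_{\textup{H}}^2(p,q)$ and is therefore more robust to the precise oracle model, at the cost of being less elementary.
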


    \subsection{Trace Distance} \label{sec:lb-td}

    We prove a matching quantum query lower bound for pure-state trace distance estimation in \cref{thm:qlower-td}. 
    This is done by a reduction from distinguishing certain probability distributions to discriminating certain pure quantum states, where the latter can be solved by pure-state trace distance estimation.

    \begin{theorem} [Query lower bounds for pure-state trace distance estimation] \label{thm:qlower-td}
        For $\varepsilon \in \rbra{0, 1/2}$, any quantum query algorithm that estimates the trace distance between two pure quantum states to within additive error $\varepsilon$ with probability at least $2/3$ has query complexity $\Omega\rbra{1/\varepsilon}$. 
    \end{theorem}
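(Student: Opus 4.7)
The plan is a reduction from distinguishing the pair of probability distributions $p^\pm$ defined in \cref{eq:def-prob-distri} to pure-state trace distance estimation, followed by invoking Belovs's lower bound \cref{thm:qlower-prob-distri}. First I would check that $p^\pm$ are valid distributions (which requires $2\varepsilon < 1$, guaranteed by $\varepsilon < 1/2$). A short calculation then bounds their Hellinger distance: since $\sqrt{p^+(j)\, p^-(j)} = \sqrt{1 - 4\varepsilon^2}/n$ for each $j \in \sbra{n}$, the Bhattacharyya coefficient equals $\sqrt{1 - 4\varepsilon^2}$, and hence
\begin{equation}
    d_{\textup{H}}(p^+, p^-) = \sqrt{1 - \sqrt{1 - 4\varepsilon^2}} \leq \sqrt{2}\,\varepsilon
\end{equation}
using $1 - \sqrt{1 - x} \leq x$ for $x \in [0, 1]$.

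Next I would introduce the pure states $\ket{\psi^\pm} = \sum_{j \in \sbra{n}} \sqrt{p^\pm(j)} \ket{j}$ prepared by the oracles $U_{p^\pm}$ from \cref{thm:qlower-prob-distri}. Their inner product equals exactly the Bhattacharyya coefficient above, so by \cref{eq:relation-td-fi} one obtains $\mathrm{T}(\ket{\psi^+}, \ket{\psi^-}) = \sqrt{1 - (1 - 4\varepsilon^2)} = 2\varepsilon$, while trivially $\mathrm{T}(\ket{\psi^+}, \ket{\psi^+}) = 0$.

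For the reduction, suppose $\mathcal{A}$ is any quantum algorithm that estimates pure-state trace distance to additive error $\varepsilon/2$ with success probability $2/3$ (rescaling the target error by a constant factor only affects the bound by a constant). Given an unknown oracle $U \in \cbra{U_{p^+}, U_{p^-}}$ together with the known oracle $U_{p^+}$ hard-coded (which costs zero queries to $U$ since its circuit is known), run $\mathcal{A}$ on the input pair $(U_{p^+}, U)$: its output lies in $[0, \varepsilon/2]$ in the first case and in $[3\varepsilon/2, 5\varepsilon/2]$ in the second, so thresholding at $\varepsilon$ distinguishes $U_{p^+}$ from $U_{p^-}$ with probability at least $2/3$ using the same number of queries to $U$ as $\mathcal{A}$ makes. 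By \cref{thm:qlower-prob-distri}, any such distinguisher needs $\Omega(1/d_{\textup{H}}(p^+, p^-)) = \Omega(1/\varepsilon)$ queries, so $\mathcal{A}$ does too. The main obstacle is conceptual rather than computational: one must select a family $p^\pm$ for which $d_{\textup{H}}(p^+, p^-) = \Theta(\varepsilon)$ and simultaneously $\mathrm{T}(\ket{\psi^+}, \ket{\psi^-}) = \Theta(\varepsilon)$, so that the Hellinger lower bound is tight enough to force $\Omega(1/\varepsilon)$ queries. Once the family in \cref{eq:def-prob-distri} is fixed, the rest is routine, and the controlled/inverse oracle access assumed in \cref{thm:qlower-prob-distri} matches the query model of \cref{sec:def-model}.
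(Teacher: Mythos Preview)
Your proposal is correct and follows essentially the same route as the paper: reduce from Belovs's distinguishing lower bound (\cref{thm:qlower-prob-distri}) using the family $p^\pm$ of \cref{eq:def-prob-distri}, compute $\mathrm{T}(\ket{\psi^+},\ket{\psi^-}) = 2\varepsilon$, and threshold the trace-distance estimate against the known state $\ket{\psi^+}$ to decide which oracle was given. Two cosmetic remarks: the inequality $1 - \sqrt{1-x} \leq x$ you cite yields $d_{\textup{H}}(p^+,p^-) \leq 2\varepsilon$ rather than $\sqrt{2}\,\varepsilon$ (still $O(\varepsilon)$, so harmless), and the paper avoids your $\varepsilon/2$ rescaling by running the estimator at precision $\varepsilon$ directly, since the two target values $0$ and $2\varepsilon$ are already $2\varepsilon$ apart.
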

    \begin{proof}
        We consider two probability distributions $p^+$ and $p^-$ on a sample space of $n$ elements where $n$ is even, and they are defined by
        \begin{equation}
            p^\pm \rbra{j} = \frac{1 \pm \rbra{-1}^j 2 \varepsilon}{n}
        \end{equation}
        for every $j \in \sbra{n}$. 
        Note that the Hellinger distance between $p^+$ and $p^-$ is
        \begin{equation}
            d_{\textup{H}}\rbra{p^+, p^-} = \sqrt{1 - \sqrt{1 - 4\varepsilon^2}} \leq 2\varepsilon. 
        \end{equation}
        By \cref{thm:qlower-prob-distri}, any quantum query algorithm using queries to quantum oracle $U$ that determines whether $U = U_{p^+}$ or $U = U_{p^-}$ with probability at least $2/3$, promised that it is in either case, has query complexity $\Omega\rbra{1/d_{\textup{H}}\rbra{p^+, p^-}} = \Omega\rbra{1/\varepsilon}$, 
        where
        \begin{equation}
            U_{p^\pm} \ket{0} = \sum_{j \in \sbra{n}} \sqrt{p^\pm\rbra{j}} \ket{j}.
        \end{equation}
        On the other hand, suppose that there is a quantum query algorithm $\mathcal{A}$ that, with probability at least $2/3$, estimates to within additive error $\varepsilon$ the trace distance between two pure quantum states, given quantum oracles that prepare them, with query complexity $Q$. 
        Now we will construct another quantum query algorithm $\mathcal{A}'$ with $\mathcal{A}$ as a subroutine, which determines whether the quantum oracle $U$ satisfies $U = U_{p^+}$ or $U = U_{p^-}$, promised that it is in either case. 
        The algorithm $\mathcal{A}'$ consists of two steps given as follows.
        \begin{enumerate}
            \item Let $d$ be the estimate of the trace distance between $U\ket{0}$ and $U_{p^+}\ket{0}$ to within additive error $\varepsilon$ by applying $\mathcal{A}$ with queries to $U$ and $U_{p^+}$. 
            Note that this step uses $Q$ queries to $U$. 
            \item If $d < \varepsilon$, then return that $U = U_{p^+}$; and return that $U = U_{p^-}$ otherwise. 
        \end{enumerate}
        It is clear that the query complexity of $\mathcal{A}'$ is $Q$. 
        To see the correctness of the algorithm $\mathcal{A}'$, we consider two cases $U = U_{p^+}$ and $U = U_{p^-}$ separately as follows. 
        \begin{itemize}
            \item $U = U_{p^+}$. In this case, the trace distance between $U\ket{0}$ and $U_{p^+}\ket{0}$ is $0$. Therefore, it holds that $d < \varepsilon$ with probability at least $2/3$. 
            According to Step 2 of $\mathcal{A}'$, we conclude that it will return $U = U_{p^+}$ correctly with probability at least $2/3$. 
            \item $U = U_{p^-}$. In this case, the trace distance between $U\ket{0}$ and $U_{p^+}\ket{0}$ is
            \begin{align}
                \mathrm{T}\rbra{U_{p^+}\ket{0}, U_{p^-}\ket{0}} 
                & = \sqrt{1 - \mathrm{F}^2\rbra{U_{p^+}\ket{0}, U_{p^-}\ket{0}}} \\
                & = \sqrt{1 - \rbra{1 - 4\varepsilon^2}} = 2\varepsilon.
            \end{align}
            Therefore, it holds that $d > \varepsilon$ with probability at least $2/3$. 
            According to Step 2 of $\mathcal{A}'$, we conclude that it will return $U = U_{p^-}$ correctly with probability at least $2/3$. 
        \end{itemize}
        Now we have shown that $\mathcal{A}'$ is a quantum query algorithm with query complexity $Q$ that determines whether $U = U_{p^+}$ or $U = U_{p^-}$ with probability at least $2/3$, promised that it is in either case. 
        This directly gives that $Q = \Omega\rbra{1/d_{\textup{H}}\rbra{p^+, p^-}} = \Omega\rbra{1/\varepsilon}$, and thus yields the proof.
    \end{proof}

    \subsection{Squared Fidelity} \label{sec:lb-sqr-fi}

    The quantum query lower bound for pure-state squared fidelity estimation is implied from the quantum query lower bound for quantum counting \cite{BBC+01,NW99}.
    Here, we give a different proof by a reduction from the quantum query lower bound for distinguishing probability distributions, which is inspired by the proof of \cref{thm:qlower-td}. 

    \begin{theorem} [Query lower bounds for pure-state squared fidelity estimation] \label{thm:qlower-sqr-fi}
        For $\varepsilon \in \rbra{0, 1/2}$, any quantum query algorithm that estimates the square root fidelity between two pure quantum states to within additive error $\varepsilon$ with probability at least $2/3$ has query complexity $\Omega\rbra{1/\varepsilon}$. 
    \end{theorem}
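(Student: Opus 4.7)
The plan is to follow the strategy of the pure-state trace distance lower bound in \cref{thm:qlower-td} very closely: invoke \cref{thm:qlower-prob-distri} on a distribution pair with Hellinger distance $O\rbra{\varepsilon}$, and reduce the resulting distinguishing problem to squared fidelity estimation. A naive attempt to feed the two encoded states $U_{p^\pm}\ket{0}$ directly into the squared fidelity estimator fails here: since $\mathrm{F}^2\rbra{U_{p^+}\ket{0}, U_{p^-}\ket{0}} = \rbra{1 - d_{\textup{H}}^2\rbra{p^+, p^-}}^2 = 1 - O\rbra{\varepsilon^2}$, the squared fidelity gap between the ``same'' and ``different'' cases is only $O\rbra{\varepsilon^2}$, which is too small for an $\varepsilon$-additive estimator to detect. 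The remedy is to compare the encoded state against a fixed, query-free reference state, which converts this gap into a linear-in-$\varepsilon$ quantity.

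Concretely, I would specialize the family $p^\pm\rbra{j} = \rbra{1 \pm \rbra{-1}^j 2\varepsilon}/n$ from \cref{thm:qlower-td} to the case $n = 2$, so that $p^\pm$ are distributions on $\cbra{0, 1}$ with $p^\pm\rbra{0} = \rbra{1 \pm 2\varepsilon}/2$. Exactly as in the trace distance proof, $d_{\textup{H}}\rbra{p^+, p^-} \leq 2\varepsilon$, so by \cref{thm:qlower-prob-distri} any algorithm distinguishing $U_{p^+}$ from $U_{p^-}$ with success probability at least $2/3$ requires $\Omega\rbra{1/\varepsilon}$ queries to $U$. Taking $\ket{0}$ as the reference (implementable by the identity oracle at zero query cost), we obtain $\mathrm{F}^2\rbra{\ket{0}, U_{p^\pm}\ket{0}} = \abs{\bra{0}U_{p^\pm}\ket{0}}^2 = p^\pm\rbra{0} = 1/2 \pm \varepsilon$, a squared fidelity gap of exactly $2\varepsilon$ between the two cases.

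Given any quantum query algorithm $\mathcal{A}$ that estimates squared fidelity between pure states to additive error $\varepsilon/2$ with probability at least $2/3$ using $Q$ queries, I would build $\mathcal{A}'$ that runs $\mathcal{A}$ on the oracle pair $\rbra{U, I}$ and outputs $U = U_{p^+}$ if the returned estimate exceeds $1/2$ and $U = U_{p^-}$ otherwise. The $2\varepsilon$ separation about the threshold $1/2$, combined with the $\varepsilon/2$-accuracy of $\mathcal{A}$, guarantees that $\mathcal{A}'$ is correct with probability at least $2/3$; and since $\mathcal{A}'$ uses the same $Q$ queries to $U$, comparing with \cref{thm:qlower-prob-distri} yields $Q = \Omega\rbra{1/\varepsilon}$. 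The main obstacle, as noted above, is precisely the quadratic relation $\mathrm{F}^2 = \rbra{1 - d_{\textup{H}}^2}^2$ between squared fidelity and Hellinger distance when both arguments are encoded probability distributions; the resolution is the switch from $U_{p^+}\ket{0}$ to $\ket{0}$ as the reference, together with the low-dimensional choice $n = 2$, which together turn the quadratic gap into a linear one while preserving $d_{\textup{H}}\rbra{p^+, p^-} = O\rbra{\varepsilon}$. The same construction, after a harmless rescaling of the constant in front of $\varepsilon$, also furnishes an $\Omega\rbra{1/\varepsilon}$ lower bound for square root fidelity estimation, since $\sqrt{p^\pm\rbra{0}}$ differ by $\Theta\rbra{\varepsilon}$ as well.
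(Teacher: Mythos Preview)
Your proof is correct and follows essentially the same approach as the paper: both reduce from distinguishing $U_{p^+}$ and $U_{p^-}$ via \cref{thm:qlower-prob-distri}, and both replace one of the encoded states by a fixed query-free reference so that the squared fidelity gap becomes $2\varepsilon$ rather than $O\rbra{\varepsilon^2}$. Your choice $n=2$ with reference $\ket{0}$ is simply the smallest instance of the paper's construction, which keeps $n$ an arbitrary even integer and uses the uniform-over-even-indices state $\ket{\phi} = \sqrt{2/n}\sum_{j\in\sbra{n/2}}\ket{2j}$; for $n=2$ this is exactly your $\ket{0}$.
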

    \begin{proof}
        The proof uses the same choice of the probability distributions $p^+$ and $p^-$ as that of \cref{thm:qlower-td}.
        In the following, we will use the same notations as in the proof of \cref{thm:qlower-td}.
        Suppose that there is a quantum query algorithm $\mathcal{A}$ that, with probability at least $2/3$, estimates to within additive error $\varepsilon$ the squared fidelity between two pure quantum states, given quantum oracles that prepare them, with query complexity $Q$. 
        Now we will construct another quantum query algorithm $\mathcal{A}'$ with $\mathcal{A}$ as a subroutine, which determines whether the quantum oracle $U$ satisfies $U = U_{p^+}$ or $U = U_{p^-}$, promised that it is in either case. 
        The algorithm $\mathcal{A}'$ consists of two steps given as follows.
        \begin{enumerate}
            \item Let $d$ be the estimate of the squared fidelity between $U\ket{0}$ and $\ket{\phi}$ to within additive error $\varepsilon$ (with success probability at least $8/9$) by applying $\mathcal{A}$ with $O\rbra{Q}$ queries to $U$, where
            \begin{equation}
                \ket{\phi} = \sqrt{\frac{2}{n}} \sum_{j \in \sbra{n/2}} \ket{2j}.
            \end{equation}
            \item If $d > 1/2$, then return that $U = U_{p^+}$; and return that $U = U_{p^-}$ otherwise. 
        \end{enumerate}
        It is clear that the query complexity of $\mathcal{A}'$ is $O\rbra{Q}$. 
        To see the correctness of the algorithm $\mathcal{A}'$, we consider two cases $U = U_{p^+}$ and $U = U_{p^-}$ separately as follows. 
        \begin{itemize}
            \item $U = U_{p^+}$. In this case, the squared fidelity between $U\ket{0}$ and $\ket{\phi}$ is
            \begin{equation}
                \mathrm{F}^2\rbra{U_{p^+}\ket{0},\ket{\phi}} = \frac 1 2 + \varepsilon.
            \end{equation}
            Therefore, it holds that $d > 1/2$ with probability at least $2/3$. 
            According to Step 2 of $\mathcal{A}'$, we conclude that it will return $U = U_{p^+}$ correctly with probability at least $2/3$. 
            \item $U = U_{p^-}$. In this case, the squared fidelity between $U\ket{0}$ and $\ket{\phi}$ is
            \begin{equation}
                \mathrm{F}^2\rbra{U_{p^-}\ket{0},\ket{\phi}} = \frac 1 2 - \varepsilon.
            \end{equation}
            Therefore, it holds that $d < 1/2$ with probability at least $2/3$. 
            According to Step 2 of $\mathcal{A}'$, we conclude that it will return $U = U_{p^-}$ correctly with probability at least $2/3$. 
        \end{itemize}
        Now we have shown that $\mathcal{A}'$ is a quantum query algorithm with query complexity $Q$ that determines whether $U = U_{p^+}$ or $U = U_{p^-}$ with probability at least $2/3$, promised that it is in either case. 
        This directly gives that $Q = \Omega\rbra{1/d_{\textup{H}}\rbra{p^+, p^-}} = \Omega\rbra{1/\varepsilon}$, and thus yields the proof.
    \end{proof}

    \subsection{Square Root Fidelity} \label{sec:lb-fi}

    As a corollary of \cref{thm:qlower-sqr-fi}, we also give a quantum query lower bound for pure-state square root fidelity estimation. 

    \begin{theorem} [Query lower bounds for pure-state square root fidelity estimation] \label{thm:qlb-sqrt-fi}
        For $\varepsilon \in \rbra{0, 1/4}$, any quantum query algorithm that estimates the square root fidelity between two pure quantum states to within additive error $\varepsilon$ with probability at least $2/3$ has query complexity $\Omega\rbra{1/\varepsilon}$. 
    \end{theorem}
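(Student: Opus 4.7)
The plan is to deduce this lower bound directly from \cref{thm:qlower-sqr-fi} by a squaring reduction: any square root fidelity estimator of sufficient precision automatically yields a squared fidelity estimator with comparable precision, so the already-established $\Omega(1/\varepsilon)$ squared fidelity lower bound transfers to square root fidelity at the cost of only a constant factor.

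Concretely, I would start from an arbitrary quantum query algorithm $\mathcal{A}$ that, with probability at least $2/3$, outputs $\tilde{x} \in [0,1]$ satisfying $|\tilde{x} - \mathrm{F}(\ket{\varphi}, \ket{\psi})| < \varepsilon$ using $Q$ queries to the state-preparation oracles (and their inverses / controlled versions). From $\mathcal{A}$, I would build a derived algorithm $\mathcal{A}'$ that runs $\mathcal{A}$ once and returns $\tilde{x}^2$. Using the identity $|\tilde{x}^2 - \mathrm{F}^2(\ket{\varphi}, \ket{\psi})| = (\tilde{x} + \mathrm{F}(\ket{\varphi}, \ket{\psi})) \cdot |\tilde{x} - \mathrm{F}(\ket{\varphi}, \ket{\psi})|$ together with $\tilde{x}, \mathrm{F}(\ket{\varphi}, \ket{\psi}) \in [0,1]$, the output $\tilde{x}^2$ is within additive error $2\varepsilon$ of $\mathrm{F}^2(\ket{\varphi}, \ket{\psi})$ with the same success probability $2/3$ and the same query count $Q$ (in the spirit of the argument already used in the reproduction of squared fidelity estimation in \cref{thm:sqr-fi}).

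Finally, I would invoke \cref{thm:qlower-sqr-fi} at precision $2\varepsilon$, which is legal precisely because the hypothesis $\varepsilon \in (0,1/4)$ forces $2\varepsilon \in (0,1/2)$; this yields $Q = \Omega(1/(2\varepsilon)) = \Omega(1/\varepsilon)$ and completes the argument. There is no genuine obstacle here — the whole reduction is a one-liner — the only delicate points are tracking the factor-of-two loss caused by squaring and verifying that the admissible range of $\varepsilon$ in \cref{thm:qlower-sqr-fi} accommodates $2\varepsilon$, which is exactly why the statement is restricted to $\varepsilon \in (0,1/4)$ rather than $(0,1/2)$.
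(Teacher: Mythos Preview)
Your proposal is correct and follows essentially the same squaring reduction as the paper's proof: both argue that an $\varepsilon$-precise square root fidelity estimator yields a $2\varepsilon$-precise squared fidelity estimator via $\abs{\tilde x^2 - \mathrm{F}^2} = \abs{\tilde x + \mathrm{F}}\cdot\abs{\tilde x - \mathrm{F}} \leq 2\varepsilon$, and then invoke \cref{thm:qlower-sqr-fi}. The paper phrases it as applying the squared-fidelity lower bound at precision $\varepsilon$ to an $\varepsilon/2$-precise square-root estimator, whereas you apply it at precision $2\varepsilon$ to an $\varepsilon$-precise estimator, but this is only a cosmetic difference in where the factor of $2$ is absorbed.
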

    \begin{proof}
        Suppose that there is a quantum query algorithm for pure-state square root fidelity estimation with query complexity $Q\rbra{\varepsilon}$ to within additive error $\varepsilon$ with success probability at least $2/3$. 
        Let $\tilde x$ be the estimate (produced by this algorithm with query complexity $Q\rbra{\varepsilon/2}$) of the square root fidelity between two pure quantum states $\ket{\varphi}$ and $\ket{\psi}$ to within additive error $\varepsilon/2$.
        Then, $\tilde x^2$ is an estimate of $\mathrm{F}^2\rbra{\ket{\varphi},\ket{\psi}}$ to within additive error $\varepsilon$. 
        To see this, we note that
        \begin{align}
            \abs*{\tilde x^2 - \mathrm{F}^2\rbra{\ket{\varphi},\ket{\psi}}}
            & = \abs[\big]{\tilde x + \mathrm{F}\rbra{\ket{\varphi},\ket{\psi}}} \cdot \abs[\big]{\tilde x - \mathrm{F}\rbra{\ket{\varphi},\ket{\psi}}} \\
            & \leq 2 \cdot \frac{\varepsilon}{2} = \varepsilon.
        \end{align}
        By \cref{thm:qlower-sqr-fi}, it holds that 
        \begin{equation}
            Q\rbra{\varepsilon/2} = \Omega\rbra{1/\varepsilon},
        \end{equation}
        which implies that $Q\rbra{\varepsilon} = \Omega\rbra{1/\varepsilon}$. 
    \end{proof}

    \section*{Acknowledgment}

    The author would like to thank Fran\c{c}ois Le Gall for pointing out the related work \cite{KLLP19}, Minbo Gao for pointing out the related lecture notes \cite{dW19}, and the anonymous reviewers for their constructive suggestions.

    This work was supported by the Ministry of Education, Culture, Sports, Science and Technology (MEXT) Quantum Leap Flagship Program (MEXT Q-LEAP) under Grant JPMXS0120319794.

    \addcontentsline{toc}{section}{References}

    \bibliographystyle{unsrturl}
    \bibliography{main}

    \appendix

    \section{Folklore Approaches} \label[appendix]{sec:folklore}

    The folklore approaches for estimating the trace distance and square root fidelity are based on the SWAP test \cite{BCWdW01}. 
    The idea is simple: first estimate the squared fidelity between two pure quantum states, and then compute their trace distance and square root fidelity by the estimate of their squared fidelity. 
    We include these folklore approaches here for reference.
    First, the sample complexity and query complexity for pure-state squared fidelity estimation is given in \cref{sec:folklore-sqr-fi}.
    Then, using these algorithms, we derive the sample complexity and query complexity for pure-state square root fidelity and trace distance estimations in \cref{sec:folklore-sqrt-fi} and \cref{sec:folklore-td}, respectively. 

    \subsection{Squared Fidelity} \label[appendix]{sec:folklore-sqr-fi}

    We present two quantum algorithms for pure-state squared fidelity estimation in \cref{lemma:folklore-sqr-fi}, where one is for query complexity and the other is for sample complexity. 

    \begin{lemma} [Folklore approaches for pure-state squared fidelity estimation] \label{lemma:folklore-sqr-fi} \
    \begin{itemize}
        \item We can estimate the squared fidelity between two pure quantum states to within additive error $\varepsilon$ with probability at least $2/3$ using $O\rbra{1/\varepsilon^2}$ independent samples of them. 
        \item There is a quantum query algorithm that estimates the squared fidelity between two pure quantum states to within additive error $\varepsilon$ with probability at least $2/3$ with query complexity $O\rbra{1/\varepsilon}$. 
    \end{itemize}
    \end{lemma}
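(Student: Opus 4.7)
The plan is to prove both parts using the SWAP test of \cite{BCWdW01} as the common primitive, noting that when the SWAP test is applied to $\ket{\varphi} \otimes \ket{\psi}$ the acceptance probability (i.e., the probability of measuring $0$ on the control qubit) is exactly $(1 + \mathrm{F}^2(\ket{\varphi},\ket{\psi}))/2$. Hence estimating the squared fidelity to additive error $\varepsilon$ reduces to estimating this acceptance probability to additive error $\varepsilon/2$.

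For the sample-complexity statement, I would take $N = O(1/\varepsilon^2)$ independent copies of each state, run the SWAP test on each pair, and let $\hat p$ be the fraction of accepting outcomes. By a Hoeffding (or Chernoff) bound, $|\hat p - (1+\mathrm{F}^2)/2| < \varepsilon/2$ with probability at least $2/3$ when $N$ is chosen as a sufficiently large constant times $1/\varepsilon^2$; returning $2\hat p - 1$ then yields the desired estimate of $\mathrm{F}^2(\ket{\varphi},\ket{\psi})$.

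For the query-complexity statement, I would instead implement the SWAP test coherently as a unitary circuit $S$ that uses one query each to $U_\varphi$ and $U_\psi$, and whose acceptance indicator on a designated qubit has squared amplitude $(1+\mathrm{F}^2)/2$. This places the circuit exactly in the form required by \cref{thm:amplitude-estimation}: $S\ket{0}_{\mathsf{A}}\ket{0}_{\mathsf{B}} = \sqrt{q}\ket{0}_{\mathsf{A}}\ket{\phi_0}_{\mathsf{B}} + \sqrt{1-q}\ket{1}_{\mathsf{A}}\ket{\phi_1}_{\mathsf{B}}$ with $q = (1+\mathrm{F}^2)/2$. Invoking quantum amplitude estimation with precision $\delta = \varepsilon/2$ yields an estimate $\tilde q$ with $|\tilde q - q| < \varepsilon/2$, using $O(1/\varepsilon)$ queries to $S$ and hence $O(1/\varepsilon)$ queries to each of $U_\varphi, U_\psi$ (and their controlled and inverse versions); returning $2\tilde q - 1$ gives an $\varepsilon$-estimate of $\mathrm{F}^2(\ket{\varphi},\ket{\psi})$ with success probability at least $2/3$.

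Both arguments are essentially bookkeeping on top of well-known primitives, so I do not expect a real obstacle; the only thing requiring a moment of care is making explicit that the SWAP test can be cast as a unitary with the acceptance amplitude in the prescribed form (so that \cref{thm:amplitude-estimation} applies verbatim) and that the linear rescaling $\mathrm{F}^2 = 2q - 1$ preserves the additive error up to the factor of two that is absorbed by the choice $\delta = \varepsilon/2$.
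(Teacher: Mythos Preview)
Your proposal is correct and matches the paper's own proof essentially line for line: both parts use the SWAP test with acceptance probability $(1+\mathrm{F}^2)/2$, estimate that probability either by $O(1/\varepsilon^2)$ repetitions (sample model) or by invoking \cref{thm:amplitude-estimation} with $\delta=\varepsilon/2$ on the coherent SWAP-test unitary (query model), and return $2\tilde q-1$. The paper is slightly terser in the sample-complexity part (it omits the explicit Hoeffding/Chernoff justification you give), but otherwise the arguments are identical.
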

    \begin{proof}
        The quantum circuit for estimating the squared fidelity between pure quantum states $\ket{\varphi}$ and $\ket{\psi}$ is given in \cref{fig:squared-fi}, which is directly based on the SWAP test \cite{BCWdW01}.
        \begin{figure} [!htp]
        \centering
        \begin{quantikz} [row sep = {20pt, between origins}]
            \lstick{$\ket{0}$} & \gate{H} & \ctrl{2} & \gate{H} & \meter{} & \setwiretype{c} & \rstick{$x \in \cbra{0, 1}$} \\
            \lstick{$\ket{\varphi}$} & \qw & \swap{1} & \qw & \qw \\
            \lstick{$\ket{\psi}$} & \qw & \targX{} & \qw & \qw \\
        \end{quantikz}
        \caption{Quantum circuit for estimating the squared fidelity.}
        \label{fig:squared-fi}
        \end{figure}
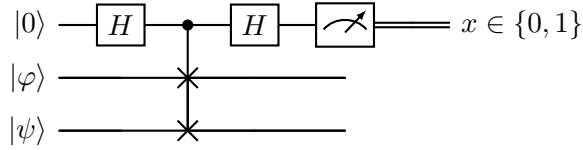
        
        It can be shown that 
        \begin{equation}
        \Pr \sbra{ x = 0 } = \frac{1 + \mathrm{F}^2\rbra{\ket{\varphi}, \ket{\psi}}}{2}.
        \end{equation}
        By repeating this process $O\rbra{1/\varepsilon^2}$ times, we can estimate $\mathrm{F}^2\rbra{\ket{\varphi}, \ket{\psi}}$ to within additive error $\varepsilon$ with probability at least $2/3$. 

        Now suppose that $U_\varphi$ and $U_\psi$ prepare $\ket{\varphi}$ and $\ket{\psi}$, respectively. 
        Similar to \cref{fig:squared-fi}, we construct the quantum circuit $U$ in \cref{fig:squared-fi-query}, using $1$ query to each of $U_\varphi$ and $U_\psi$.
        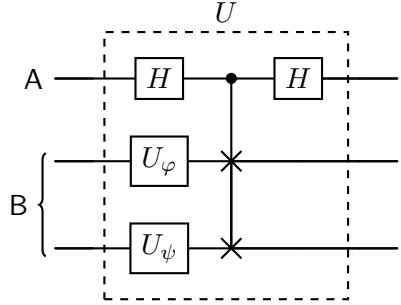
\begin{figure} [!htp]
        \centering
        \begin{quantikz}
            \lstick[1]{$\mathsf{A}$} & \qw & \gate{H} \gategroup[3,steps=3,style={dashed,inner sep=6pt}]{$U$} & \ctrl{2} & \gate{H} & \qw & \qw \\
            \lstick[2]{$\mathsf{B}$} & \qw & \gate{U_{\varphi}} & \swap{1} & \qw & \qw & \qw \\
            & \qw & \gate{U_{\psi}} & \targX{} & \qw & \qw & \qw
        \end{quantikz}
        \caption{Quantum circuit for estimating the squared fidelity with state-preparation oracles.}
        \label{fig:squared-fi-query}
        \end{figure}
        
        It can be shown that 
        \begin{equation}
            U\ket{0}_{\mathsf{A}}\ket{0}_{\mathsf{B}} = \sqrt{p} \ket{0}_{\mathsf{A}} \ket{\phi_0}_{\mathsf{B}} + \sqrt{1-p} \ket{1}_{\mathsf{A}} \ket{\phi_1}_{\mathsf{B}},
        \end{equation}
        where $\ket{\phi_0}$ and $\ket{\phi_1}$ are some normalized pure quantum states, and 
        \begin{equation}
            p = \frac{1 + \mathrm{F}^2\rbra{\ket{\varphi}, \ket{\psi}}}{2}.
        \end{equation}
        By \cref{thm:amplitude-estimation} (quantum amplitude estimation), we can obtain an estimate $\tilde x$ of $p$ to within additive error $\varepsilon/2$ using $O\rbra{1/\varepsilon}$ queries to $U$. 
        Then, $2\tilde x - 1$ is an $\varepsilon$-estimate of $\mathrm{F}^2\rbra{\ket{\varphi}, \ket{\psi}}$ to within additive error $\varepsilon$. 
        The proof completes by noting that one query to $U$ consists of one query to each of $U_{\varphi}$ and $U_{\psi}$.
    \end{proof}

    \subsection{Square Root Fidelity} \label[appendix]{sec:folklore-sqrt-fi}

    We present two quantum algorithms for pure-state square root fidelity estimation in \cref{lemma:folklore-sqrt-fi}, which are directly obtained by the quantum algorithms given in \cref{lemma:folklore-sqr-fi}. 

    \begin{lemma} [Folklore approaches for pure-state square root fidelity estimation] \label{lemma:folklore-sqrt-fi} \
    \begin{itemize}
        \item We can estimate the square root fidelity between two pure quantum states to within additive error $\varepsilon$ with probability at least $2/3$ using $O\rbra{1/\varepsilon^4}$ independent samples of them. 
        \item There is a quantum query algorithm that estimates the square root fidelity between two pure quantum states to within additive error $\varepsilon$ with probability at least $2/3$ with query complexity $O\rbra{1/\varepsilon^2}$. 
    \end{itemize}
    \end{lemma}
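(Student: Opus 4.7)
The plan is to reduce pure-state square root fidelity estimation to pure-state squared fidelity estimation and then absorb the cost of taking a square root using the numerical-stability bound of \cref{prop:sqrt-stable}. Concretely, since $\mathrm{F}\rbra{\ket{\varphi},\ket{\psi}} = \sqrt{\mathrm{F}^2\rbra{\ket{\varphi},\ket{\psi}}}$, if we have any estimator $\tilde y$ of $\mathrm{F}^2\rbra{\ket{\varphi},\ket{\psi}}$ with $\abs{\tilde y - \mathrm{F}^2\rbra{\ket{\varphi},\ket{\psi}}} < \varepsilon^2$, then \cref{prop:sqrt-stable} immediately yields $\bigl|\sqrt{\tilde y} - \mathrm{F}\rbra{\ket{\varphi},\ket{\psi}}\bigr| < \varepsilon$, provided we first clip $\tilde y$ to the interval $\sbra{0,1}$ so that the square root is well-defined (this clipping never increases the error).

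For the sample-complexity bound, I would invoke the first item of \cref{lemma:folklore-sqr-fi} with target additive error $\varepsilon^2$ in place of $\varepsilon$. This gives an estimate of $\mathrm{F}^2\rbra{\ket{\varphi},\ket{\psi}}$ to within $\varepsilon^2$ with success probability at least $2/3$, using $O\rbra{1/(\varepsilon^2)^2} = O\rbra{1/\varepsilon^4}$ independent copies of $\ket{\varphi}$ and $\ket{\psi}$. Returning the (clipped) square root of this estimate yields the claimed $\varepsilon$-estimate of $\mathrm{F}\rbra{\ket{\varphi},\ket{\psi}}$ with the same success probability. For the query-complexity bound, I would analogously invoke the second item of \cref{lemma:folklore-sqr-fi} with target error $\varepsilon^2$, giving an estimate of the squared fidelity with query complexity $O\rbra{1/\varepsilon^2}$; again taking the square root produces an $\varepsilon$-estimate of the square root fidelity with the same query cost.

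There is no real obstacle here: the only slightly delicate point is that \cref{prop:sqrt-stable} shows the $\sqrt{\cdot}$ map is $\varepsilon\mapsto\sqrt{\varepsilon}$-stable rather than Lipschitz near zero, which is precisely why we must request quadratically smaller error $\varepsilon^2$ from the subroutine of \cref{lemma:folklore-sqr-fi}, producing the quadratic blow-ups $1/\varepsilon^2 \to 1/\varepsilon^4$ in samples and $1/\varepsilon \to 1/\varepsilon^2$ in queries. The success probability requires no amplification since \cref{lemma:folklore-sqr-fi} already delivers $2/3$ success and the post-processing (clipping and taking $\sqrt{\cdot}$) is deterministic.
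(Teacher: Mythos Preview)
Your proposal is correct and is essentially identical to the paper's own proof: invoke \cref{lemma:folklore-sqr-fi} with target error $\varepsilon^2$, take the square root, and control the numerical error via \cref{prop:sqrt-stable}. The clipping step you mention is a harmless extra precaution that the paper omits, but otherwise the argument matches exactly.
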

    \begin{proof}
        This can be done by estimating the squared fidelity between the two pure quantum states to within additive error $\varepsilon^2$ by \cref{lemma:folklore-sqr-fi} and then taking the square root of the estimated value. 
        The numerical error is guaranteed by \cref{prop:sqrt-stable}.
    \end{proof}

    \subsection{Trace Distance} \label[appendix]{sec:folklore-td}

    The folklore approaches (for both query complexity and sample complexity) for pure-state trace distance estimation were already formally presented in \cite[Theorem A.2]{WZ23}. 
    Here, we include them for completeness and with a more systematic and simpler proof. 

    \begin{lemma} [Folklore approaches for pure-state trace distance estimation] \label{lemma:folklore-td} \
    \begin{itemize}
        \item We can estimate the square root fidelity between two pure quantum states to within additive error $\varepsilon$ with probability at least $2/3$ using $O\rbra{1/\varepsilon^4}$ independent samples of them. 
        \item There is a quantum query algorithm that estimates the square root fidelity between two pure quantum states to within additive error $\varepsilon$ with probability at least $2/3$ with query complexity $O\rbra{1/\varepsilon^2}$. 
    \end{itemize}
    \end{lemma}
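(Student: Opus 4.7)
The plan is to reduce the task of estimating the square root fidelity to the pure-state squared fidelity estimation already established in \cref{lemma:folklore-sqr-fi}, using the numerical stability of the square root function from \cref{prop:sqrt-stable}. In fact the argument is essentially identical to the one already given for \cref{lemma:folklore-sqrt-fi}: estimate $\mathrm{F}^2$ at a sufficiently fine precision and then take a square root.

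For the sample-complexity bullet, I would first invoke the SWAP-test-based squared fidelity estimator from \cref{lemma:folklore-sqr-fi} at precision $\varepsilon^2$; this consumes $O(1/(\varepsilon^2)^2) = O(1/\varepsilon^4)$ independent samples of $\ket{\varphi}$ and $\ket{\psi}$ and produces, with probability at least $2/3$, an estimate $\tilde{y}$ satisfying $\abs{\tilde{y} - \mathrm{F}^2(\ket{\varphi},\ket{\psi})} < \varepsilon^2$. I would then clip $\tilde y$ into $\sbra{0,1}$ (to ensure the argument of the square root is nonnegative) and return $\sqrt{\tilde y}$ as the estimate of the square root fidelity. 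By \cref{prop:sqrt-stable}, the resulting additive error is at most $\sqrt{\varepsilon^2} = \varepsilon$, as claimed.

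For the query-complexity bullet, the recipe is identical except that the SWAP-test plus amplitude-estimation subroutine from the second part of \cref{lemma:folklore-sqr-fi} is invoked at precision $\varepsilon^2$. This costs $O(1/\varepsilon^2)$ queries to each of $U_{\varphi}$, $U_{\psi}$, and their controlled inverses, and outputs an $\varepsilon^2$-estimate of $\mathrm{F}^2(\ket{\varphi},\ket{\psi})$ with probability at least $2/3$. Taking the square root (after clipping to $\sbra{0,1}$) once more delivers an $\varepsilon$-estimate of $\mathrm{F}(\ket{\varphi},\ket{\psi})$ via \cref{prop:sqrt-stable}.

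The only step requiring any real justification is the conversion of the error on $\mathrm{F}^2$ into an error on $\mathrm{F}$, and \cref{prop:sqrt-stable} handles this in one line. It is worth noting that the tightness clause of that proposition (equality when $x = 0$, $\tilde{x} = \varepsilon$) is exactly why this naive reduction cannot be pushed below the stated bounds without an essentially new idea, which is precisely what the square root amplitude estimation of \cref{thm:sqrt-ampl-est} and the improved algorithm of \cref{thm:fi} provide.
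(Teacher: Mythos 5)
Your argument is internally sound, but it proves the wrong statement: the two bullets of \cref{lemma:folklore-td} as transcribed are an evident copy-paste of \cref{lemma:folklore-sqrt-fi} (they speak of the square root fidelity), whereas the lemma---as its title and its role in \cref{tab:cmp} make clear---is the folklore bound for \emph{trace distance} estimation. Your proposal estimates $\mathrm{F}^2\rbra{\ket{\varphi},\ket{\psi}}$ to within $\varepsilon^2$ and returns $\sqrt{\tilde y}$, which merely re-derives \cref{lemma:folklore-sqrt-fi} and never produces an estimate of $\mathrm{T}\rbra{\ket{\varphi},\ket{\psi}}$.

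The repair is one line and uses exactly the ingredients you already have, which is also what the paper's own proof does. By \cref{eq:relation-td-fi}, $\mathrm{T}\rbra{\ket{\varphi},\ket{\psi}} = \sqrt{1 - \mathrm{F}^2\rbra{\ket{\varphi},\ket{\psi}}}$, so from an $\varepsilon^2$-accurate estimate $\tilde y$ of the squared fidelity (clipped to $\sbra{0,1}$) you should return $\sqrt{1-\tilde y}$ rather than $\sqrt{\tilde y}$. Since $\abs{\rbra{1-\tilde y} - \rbra{1-\mathrm{F}^2\rbra{\ket{\varphi},\ket{\psi}}}} = \abs{\tilde y - \mathrm{F}^2\rbra{\ket{\varphi},\ket{\psi}}} < \varepsilon^2$, applying \cref{prop:sqrt-stable} to $x = 1-\mathrm{F}^2\rbra{\ket{\varphi},\ket{\psi}}$ and $\tilde x = 1-\tilde y$ gives additive error at most $\varepsilon$ on the trace distance. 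With that single substitution, your sample count $O\rbra{1/\varepsilon^4}$, your query count $O\rbra{1/\varepsilon^2}$, the clipping remark, and your closing observation about why this reduction cannot beat these bounds all match the paper's proof.
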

    \begin{proof}
        This can be done by obtaining an estimate $\tilde x$ of the squared fidelity between the two pure quantum states to within additive error $\varepsilon^2$ by \cref{lemma:folklore-sqr-fi} and then returning $\sqrt{1 - \tilde x}$ as the estimate of trace distance. 
        The correctness is due to \cref{eq:relation-td-fi} and the numerical error is guaranteed by \cref{prop:sqrt-stable}.
    \end{proof}

    \section{Sample Lower Bounds} \label[appendix]{sec:qslb}

    In this appendix, we prove sample lower bounds for estimating the trace distance and square root fidelity between pure quantum states in \cref{sec:slb-td} and \cref{sec:slb-fi}, respectively. 

    To prove the lower bounds, we need the Holevo-Helstrom bound for quantum state discrimination \cite{Hel67,Hol73}.
    Here, we use the version given in \cite{Wil13}.

    \begin{theorem} [Quantum state discrimination, cf.\ {\cite[Section 9.1.4]{Wil13}}] \label{thm:holevo}
        Suppose that $\rho_0$ and $\rho_1$ are two quantum states. 
        Let $\varrho$ be a random quantum state such that $\varrho = \rho_0$ or $\varrho = \rho_1$ with equal probability. 
        By measuring $\varrho$ according to a positive operator-valued measure (POVM) $\Lambda = \cbra{\Lambda_0, \Lambda_1}$,
        the success probability of distinguishing the two cases is bounded by
        \begin{equation}
             p_{\textup{succ}} = \frac 1 2 \tr\rbra*{\Lambda_0\rho_0} + \frac 1 2 \tr\rbra*{\Lambda_1\rho_1} \leq \frac{1}{2}\rbra*{1+\mathrm{T}\rbra{\rho_0, \rho_1}}. 
        \end{equation}
    \end{theorem}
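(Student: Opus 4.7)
The plan is to combine two standard ingredients: the POVM completeness relation $\Lambda_0 + \Lambda_1 = I$, which allows collapsing the two trace terms into a single one, and the Jordan decomposition of the Hermitian difference $\rho_1 - \rho_0$, whose positive part is precisely the trace distance. The operator inequality $0 \leq \Lambda_1 \leq I$ inherent to POVM elements then supplies the required bound.

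First, I would substitute $\Lambda_0 = I - \Lambda_1$ into the definition of $p_{\textup{succ}}$ and use the normalization $\tr(\rho_0) = 1$ to rewrite the success probability in the form $\frac{1}{2} + \frac{1}{2}\tr(\Lambda_1(\rho_1 - \rho_0))$. This reduces the problem to upper-bounding a single expression $\tr(\Lambda_1 \Delta)$ in terms of the trace norm of the Hermitian operator $\Delta := \rho_1 - \rho_0$.

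Next, I would introduce the Jordan decomposition $\Delta = \Delta_+ - \Delta_-$ with $\Delta_\pm \geq 0$ supported on orthogonal subspaces, so that $\abs{\Delta} = \Delta_+ + \Delta_-$. Because both $\rho_0$ and $\rho_1$ have unit trace, we get $\tr(\Delta_+) = \tr(\Delta_-)$, whence $\mathrm{T}(\rho_0, \rho_1) = \frac{1}{2}\tr(\abs{\Delta}) = \tr(\Delta_+)$. The constraints $0 \leq \Lambda_1$ and $\Lambda_1 \leq I$ then give $\tr(\Lambda_1 \Delta_-) \geq 0$ and $\tr(\Lambda_1 \Delta_+) \leq \tr(\Delta_+)$ respectively, so $\tr(\Lambda_1 \Delta) \leq \tr(\Delta_+) = \mathrm{T}(\rho_0, \rho_1)$. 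Substituting back yields the claimed inequality.

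There is no serious obstacle: the only conceptual subtlety is the trace-equality $\tr(\Delta_+) = \tr(\Delta_-)$, which crucially uses that both $\rho_0$ and $\rho_1$ are unit-trace density operators (and hence $\tr(\Delta) = 0$); everything else is routine linear algebra for Hermitian matrices. The argument would break without either the POVM completeness condition or the positivity of the measurement operators, so these hypotheses should be invoked explicitly at the corresponding steps.
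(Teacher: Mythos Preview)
Your argument is correct and is precisely the standard textbook derivation of the Holevo--Helstrom bound. Note, however, that the paper does not supply its own proof of this theorem: it is quoted as a known result with a reference to \cite[Section 9.1.4]{Wil13}, so there is nothing in the paper to compare against beyond observing that your proof is the same one found in that reference.
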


    \subsection{Trace Distance} \label[appendix]{sec:slb-td}

    We give a quantum sample lower bound for pure-state trace distance estimation, which is inspired by the proof of \cref{thm:qlower-td} for the quantum query lower bound for pure-state trace distance estimation. 

    \begin{theorem} [Sample lower bounds for pure-state trace distance estimation] \label{thm:slb-td}
        For $\varepsilon \in \rbra{0, 1/2}$, estimating the trace distance between two pure quantum states to within additive error $\varepsilon$ with probability at least $2/3$ requires $\Omega\rbra{1/\varepsilon^2}$ independent samples of them.
    \end{theorem}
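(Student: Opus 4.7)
The plan is to adapt the reduction in Theorem~\ref{thm:qlower-td} by replacing the quantum adversary lower bound of Theorem~\ref{thm:qlower-prob-distri} with the Holevo--Helstrom bound for quantum state discrimination (Theorem~\ref{thm:holevo}).

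First, I would reuse the two probability distributions $p^{\pm}\rbra{j} = \rbra{1 \pm \rbra{-1}^j 2\varepsilon}/n$ (for even $n$) from the proof of Theorem~\ref{thm:qlower-td}, together with the associated pure quantum states
\begin{equation}
    \ket{\psi^\pm} = \sum_{j \in \sbra{n}} \sqrt{p^\pm\rbra{j}}\,\ket{j}.
\end{equation}
A short calculation shows $\braket{\psi^+}{\psi^-} = \sum_j \sqrt{p^+\rbra{j}\,p^-\rbra{j}} = \sqrt{1-4\varepsilon^2}$, and hence $\mathrm{T}\rbra{\ket{\psi^+}, \ket{\psi^-}} = 2\varepsilon$.

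Next, I would set up the reduction. Suppose that $\mathcal{A}$ is a quantum algorithm that, given $N$ independent copies of each of two unknown pure states, outputs an $\varepsilon$-additive estimate of their trace distance with probability at least $2/3$. Feed $\mathcal{A}$ with $N$ copies of an unknown state $\ket{\phi}$ (promised to equal either $\ket{\psi^+}$ or $\ket{\psi^-}$) together with $N$ copies of the reference state $\ket{\psi^+}$. Thresholding the output of $\mathcal{A}$ at $\varepsilon$ correctly distinguishes the two cases with probability at least $2/3$, since the true trace distance is either $0$ or $2\varepsilon$.

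Finally, I would invoke Theorem~\ref{thm:holevo}. The discrimination procedure above acts on one of the two pure states $\ket{\psi^+}^{\otimes N} \otimes \ket{\psi^+}^{\otimes N}$ and $\ket{\psi^-}^{\otimes N} \otimes \ket{\psi^+}^{\otimes N}$, whose inner product factorizes as $\braket{\psi^+}{\psi^-}^N = \rbra{1-4\varepsilon^2}^{N/2}$ and whose trace distance is therefore $\sqrt{1 - \rbra{1-4\varepsilon^2}^N}$. Theorem~\ref{thm:holevo} caps the success probability of any POVM at $\frac{1}{2} + \frac{1}{2}\sqrt{1 - \rbra{1-4\varepsilon^2}^N}$, so demanding success probability at least $2/3$ forces $\rbra{1-4\varepsilon^2}^N \leq 8/9$, i.e.\ $N = \Omega\rbra{1/\varepsilon^2}$. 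The calculations are all elementary once the reduction is set up; the only subtle point is to recognize that in the sample-access model the ``reference'' state $\ket{\psi^+}$ must itself be supplied through $N$ independent copies, so that both hypotheses share the corresponding tensor factor and the inner product of the joint states reduces cleanly to $\braket{\psi^+}{\psi^-}^N$.
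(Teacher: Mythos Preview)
Your proposal is correct and follows essentially the same approach as the paper: the same pair of states $\ket{\psi^\pm}$, the same thresholding reduction from trace-distance estimation to discrimination, and the same application of the Holevo--Helstrom bound to the $N$-fold tensor powers, yielding $\rbra{1-4\varepsilon^2}^N \leq 8/9$ and hence $N = \Omega\rbra{1/\varepsilon^2}$. The only cosmetic difference is that you explicitly carry the reference copies $\ket{\psi^+}^{\otimes N}$ as a shared tensor factor in both hypotheses, whereas the paper drops this common factor before invoking Theorem~\ref{thm:holevo}; since the shared factor does not affect the trace distance, the two presentations are equivalent.
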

    \begin{proof}
        We consider two $n$-dimensional pure quantum states $\ket{\psi^\pm}$ where $n$ is even and 
        \begin{equation}
        \ket{\psi^\pm} = \sum_{j \in \sbra{n}} \sqrt{\frac{1 \pm \rbra{-1}^j 2 \varepsilon}{n}} \ket{j}.
        \end{equation}
        The trace distance between $\ket{\psi^+}$ and $\ket{\psi^-}$ is 
        \begin{align}
        \mathrm{T}\rbra{\ket{\psi^+}, \ket{\psi^-}} & = \sqrt{1 - \mathrm{F}\rbra{\ket{\psi^+}, \ket{\psi^-}}^2} \\
        & = \sqrt{1 - \rbra{1 - 4\varepsilon^2}} = 2 \varepsilon. \label{eq:psi-pm-td}
        \end{align}
        Suppose that we can estimate the trace distance between any two pure quantum states to within additive error $\varepsilon$ with probability at least $2/3$ using $S$ independent samples of each of them. 
        Then, the way we estimate the trace distance to within additive error $\varepsilon$ actually implies a quantum hypothesis testing experiment for distinguishing $\ket{\psi^+}^{\otimes S}$ and $\ket{\psi^-}^{\otimes S}$. 
        To see this, let $\ket{\Psi} = \ket{\psi}^{\otimes S}$ be a pure quantum state such that $\ket{\Psi} = \ket{\psi^+}^{\otimes S}$ or $\ket{\Psi} = \ket{\psi^-}^{\otimes S}$ with equal probability. Then, we consider the following process. 
        \begin{enumerate}
            \item Let $d$ be the estimate of the trace distance between $\ket{\psi}$ and $\ket{\psi^+}$ to within additive error $\varepsilon$, which is obtained using $S$ independent samples of $\ket{\psi}$ and $\ket{\psi^+}$, i.e., $\ket{\Psi}$ and $\ket{\psi^+}^{\otimes S}$.
            \item If $d < \varepsilon$, then return that $\ket{\Psi} = \ket{\psi^+}^{\otimes S}$; and return that $\ket{\Psi} = \ket{\psi^-}^{\otimes S}$ otherwise.
        \end{enumerate}
        To see the correctness of the above process, we consider two cases $\ket{\Psi} = \ket{\psi^+}^{\otimes S}$ and $\ket{\Psi} = \ket{\psi^-}^{\otimes S}$ separately as follows. 
        \begin{itemize}
            \item $\ket{\Psi} = \ket{\psi^+}^{\otimes S}$. In this case, the trace distance between $\ket{\psi}$ and $\ket{\psi^+}$ is $0$. Therefore, it holds that $d < \varepsilon$ with probability at least $2/3$. 
            According to Step 2 of the process, it will return $\ket{\Psi} = \ket{\psi^+}^{\otimes S}$ with probability at least $2/3$. 
            \item $\ket{\Psi} = \ket{\psi^-}^{\otimes S}$. In this case, by \cref{eq:psi-pm-td}, the trace distance between $\ket{\psi}$ and $\ket{\psi^+}$ is
            \begin{equation}
                \mathrm{T}\rbra{\ket{\psi}, \ket{\psi^+}} = \mathrm{T}\rbra{\ket{\psi^+}, \ket{\psi^-}} = 2\varepsilon.
            \end{equation}
            Therefore, it holds that $d > \varepsilon$ with probability at least $2/3$. 
            According to Step 2 of the process, it will return $\ket{\Psi} = \ket{\psi^-}^{\otimes S}$ with probability at least $2/3$. 
        \end{itemize}
        As shown above, the process can distinguish $\ket{\psi^+}^{\otimes S}$ and $\ket{\psi^-}^{\otimes S}$ with probability
        \begin{equation} \label{eq:p-succ-lower-bound}
            p_{\textup{succ}} \geq \frac 2 3.
        \end{equation}
        By \cref{thm:holevo}, it holds that
        \begin{align}
        p_{\textup{succ}} 
        & \leq \frac 1 2 \rbra*{1 + \mathrm{T}\rbra*{\ket{\psi^+}^{\otimes S}, \ket{\psi^-}^{\otimes S}}} \\
        & = \frac 1 2 \rbra*{ 1 + \sqrt{1 - \mathrm{F}\rbra*{\ket{\psi^+}^{\otimes S}, \ket{\psi^-}^{\otimes S}}^2} } \\
        & = \frac 1 2 \rbra*{ 1 + \sqrt{1 - \mathrm{F}\rbra*{\ket{\psi^+}, \ket{\psi^-}}^{2S}} } \\
        & = \frac 1 2 \rbra*{ 1 + \sqrt{1 - \rbra*{1 - \mathrm{T}\rbra*{\ket{\psi^+}, \ket{\psi^-}}^2}^{S}} } \\
        & = \frac 1 2 \rbra*{ 1 + \sqrt{1 - \rbra*{1 - 4\varepsilon^2}^{S}} }. \label{eq:p-succ-upper-bound}
        \end{align}
        By \cref{eq:p-succ-lower-bound} and \cref{eq:p-succ-upper-bound}, we have
        \begin{equation}
            \frac{8}{9} \geq \rbra*{1 - 4\varepsilon^2}^S \geq 1 - 4S\varepsilon^2,
        \end{equation}
        which implies that $S \geq 1/\rbra{36\varepsilon^2} = \Omega\rbra{1/\varepsilon^2}$. 
    \end{proof}

    \subsection{Squared Fidelity}

    An $\Omega\rbra{1/\varepsilon^2}$ sample lower bound for pure-state squared fidelity estimation was already given in Lemma 13 of the full version of \cite{ALL22}. 
    Here, we include it for completeness and our proof is inspired by the proof for pure-state trace distance estimation (see \cref{thm:slb-td}), which is analogous to \cref{thm:qlower-sqr-fi} as for \cref{thm:qlower-td}.

    \begin{theorem} [Sample lower bounds for pure-state squared fidelity estimation] \label{thm:slb-sqr-fi}
        For $\varepsilon \in \rbra{0, 1/2}$, estimating the squared fidelity between two pure quantum states to within additive error $\varepsilon$ with probability at least $2/3$ requires $\Omega\rbra{1/\varepsilon^2}$ independent samples of them.
    \end{theorem}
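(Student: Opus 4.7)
The plan is to adapt the strategy used for \cref{thm:slb-td} but replace the trace-distance comparison with a squared-fidelity comparison against an auxiliary ``witness'' state, mirroring how \cref{thm:qlower-sqr-fi} adapts the proof of \cref{thm:qlower-td}. I take the same pair of $n$-dimensional pure states $\ket{\psi^\pm} = \sum_{j \in \sbra{n}} \sqrt{(1 \pm (-1)^j 2\varepsilon)/n}\ket{j}$ (with $n$ even) as in \cref{thm:slb-td}, but now compare them against the equal superposition over even basis vectors $\ket{\phi} = \sqrt{2/n}\sum_{j \in \sbra{n/2}} \ket{2j}$. A direct amplitude computation gives $\mathrm{F}^2(\ket{\psi^+},\ket{\phi}) = 1/2 + \varepsilon$ and $\mathrm{F}^2(\ket{\psi^-},\ket{\phi}) = 1/2 - \varepsilon$, so these two squared fidelities are $2\varepsilon$ apart and straddle the threshold $1/2$. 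This is exactly the gap that an $\varepsilon$-accurate estimator can resolve.

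Next, I would turn any sample-efficient pure-state squared fidelity estimator into a distinguisher for the two hypotheses $\ket{\Psi} = \ket{\psi^+}^{\otimes S}$ versus $\ket{\Psi} = \ket{\psi^-}^{\otimes S}$ (chosen with equal prior probability). Since $\ket{\phi}$ has a known classical description, samples of it are free, so given $S$ copies of the unknown $\ket{\psi}$ the estimator produces an estimate $d$ of $\mathrm{F}^2(\ket{\psi},\ket{\phi})$ to additive error $\varepsilon$ with success probability at least $2/3$; the distinguisher outputs $\ket{\psi^+}^{\otimes S}$ iff $d > 1/2$. In both cases the correct hypothesis is reported with probability at least $2/3$, so $p_{\textup{succ}} \geq 2/3$.

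Finally, I would invoke the Holevo-Helstrom bound (\cref{thm:holevo}) against this distinguisher, reusing the chain of identities from \cref{thm:slb-td} to evaluate $\mathrm{T}(\ket{\psi^+}^{\otimes S},\ket{\psi^-}^{\otimes S}) = \sqrt{1 - (1-4\varepsilon^2)^{S}}$, so that
\begin{equation}
    \frac{2}{3} \leq \frac{1}{2}\rbra*{1 + \sqrt{1 - \rbra*{1-4\varepsilon^2}^{S}}}.
\end{equation}
Squaring and applying Bernoulli's inequality $(1-4\varepsilon^2)^S \geq 1 - 4S\varepsilon^2$ then yields $S = \Omega(1/\varepsilon^2)$. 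The only subtle point, and the step I would double-check most carefully, is the computation $\mathrm{F}^2(\ket{\psi^\pm},\ket{\phi}) = 1/2 \pm \varepsilon$, since getting the right sign pattern requires the witness state to be supported exactly on even-indexed basis vectors so that $(-1)^{2j} = +1$ across the support; everything else is a template copy of \cref{thm:slb-td}.
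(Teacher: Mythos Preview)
Your proposal is correct and is essentially identical to the paper's proof: the same states $\ket{\psi^\pm}$ and the same witness $\ket{\phi}$ supported on even basis vectors, the same threshold test at $1/2$, and the same concluding appeal to \cref{thm:holevo} together with the Bernoulli bound from \cref{thm:slb-td}. The only cosmetic difference is that you explicitly remark that samples of the known state $\ket{\phi}$ are free, which the paper leaves implicit.
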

    \begin{proof}
        The proof uses the same choice of the pure quantum states $\ket{\psi^+}$ and $\ket{\psi^-}$ as that of \cref{thm:slb-td}.
        In the following, we will use the same notations as in the proof of \cref{thm:slb-td}.
        Suppose that we can estimate the squared fidelity between any two pure quantum states to within additive error $\varepsilon$ with probability at least $2/3$ using $S$ independent samples of each of them. 
        Then, the way we estimate the squared fidelity to within additive error $\varepsilon$ actually implies a quantum hypothesis testing experiment for distinguishing $\ket{\psi^+}^{\otimes S}$ and $\ket{\psi^-}^{\otimes S}$. 
        To see this, let $\ket{\Psi} = \ket{\psi}^{\otimes S}$ be a pure quantum state such that $\ket{\Psi} = \ket{\psi^+}^{\otimes S}$ or $\ket{\Psi} = \ket{\psi^-}^{\otimes S}$ with equal probability. Then, we consider the following process. 
        \begin{enumerate}
            \item Let $d$ be the estimate of the squared fidelity between $\ket{\psi}$ and $\ket{\phi}$ to within additive error $\varepsilon$, which is obtained using $S$ independent samples of $\ket{\psi}$ and $\ket{\phi}$, where
            \begin{equation}
                \ket{\phi} = \sqrt{\frac{2}{n}} \sum_{j \in \sbra{n/2}} \ket{2j}.
            \end{equation}
            \item If $d > 1/2$, then return that $\ket{\Psi} = \ket{\psi^+}^{\otimes S}$; and return that $\ket{\Psi} = \ket{\psi^-}^{\otimes S}$ otherwise.
        \end{enumerate}
        To see the correctness of the above process, we consider two cases $\ket{\Psi} = \ket{\psi^+}^{\otimes S}$ and $\ket{\Psi} = \ket{\psi^-}^{\otimes S}$ separately as follows. 
        \begin{itemize}
            \item $\ket{\Psi} = \ket{\psi^+}^{\otimes S}$. In this case, the squared fidelity between $\ket{\psi}$ and $\ket{\phi}$
            \begin{equation}
                \mathrm{F}^2\rbra{\ket{\psi^+}, \ket{\phi}} = \frac 1 2 + \varepsilon.
            \end{equation}
            Therefore, it holds that $d > 1/2$ with probability at least $2/3$. 
            According to Step 2 of the process, it will return $\ket{\Psi} = \ket{\psi^+}^{\otimes S}$ with probability at least $2/3$. 
            \item $\ket{\Psi} = \ket{\psi^-}^{\otimes S}$. In this case, by \cref{eq:psi-pm-td}, the squared fidelity between $\ket{\psi}$ and $\ket{\phi}$ is
            \begin{equation}
                \mathrm{F}^2\rbra{\ket{\psi^-}, \ket{\phi}} = \frac 1 2 - \varepsilon.
            \end{equation}
            Therefore, it holds that $d < 1/2$ with probability at least $2/3$. 
            According to Step 2 of the process, it will return $\ket{\Psi} = \ket{\psi^-}^{\otimes S}$ with probability at least $2/3$. 
        \end{itemize}
        As shown above, the process can distinguish $\ket{\psi^+}^{\otimes S}$ and $\ket{\psi^-}^{\otimes S}$ with probability $p_{\textup{succ}} \geq 2/3$.
        Therefore, following the proof of \cref{thm:slb-td}, we will obtain the same sample lower bound that $S = \Omega\rbra{1/\varepsilon^2}$. 
    \end{proof}

    \subsection{Square Root Fidelity} \label[appendix]{sec:slb-fi}

    As a corollary of \cref{thm:slb-sqr-fi}, we prove a quantum sample lower bound for pure-state square root fidelity estimation in \cref{thm:slb-sqrt-fi}. 

    \begin{theorem} [Sample lower bounds for pure-state square root fidelity estimation] \label{thm:slb-sqrt-fi}
        For $\varepsilon \in \rbra{0, 1/4}$, estimating the square root fidelity between two pure quantum states to within additive error $\varepsilon$ with probability at least $2/3$ requires $\Omega\rbra{1/\varepsilon^2}$ independent samples of them.
    \end{theorem}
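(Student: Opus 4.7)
The plan is to obtain this sample lower bound as a direct corollary of \cref{thm:slb-sqr-fi}, following the same reduction strategy already used in \cref{thm:qlb-sqrt-fi} for the query-complexity version. The idea is that any sample-efficient estimator for the square root fidelity $\mathrm{F}\rbra{\ket{\varphi},\ket{\psi}}$ can be converted, by a single squaring step, into an estimator for the squared fidelity $\mathrm{F}^2\rbra{\ket{\varphi},\ket{\psi}}$ at essentially no cost in samples.

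Concretely, first I would suppose for contradiction that there is an algorithm that uses $S$ independent samples of each of the two pure states and outputs $\tilde x \in \sbra{0,1}$ such that $\abs{\tilde x - \mathrm{F}\rbra{\ket{\varphi},\ket{\psi}}} < \varepsilon/2$ with probability at least $2/3$. Then I would consider the derived estimator that simply returns $\tilde x^{2}$ and argue, exactly as in the proof of \cref{thm:qlb-sqrt-fi}, that
\begin{equation}
\abs{\tilde x^{2} - \mathrm{F}^2\rbra{\ket{\varphi},\ket{\psi}}} = \abs{\tilde x + \mathrm{F}\rbra{\ket{\varphi},\ket{\psi}}} \cdot \abs{\tilde x - \mathrm{F}\rbra{\ket{\varphi},\ket{\psi}}} \leq 2 \cdot \frac{\varepsilon}{2} = \varepsilon,
\end{equation}
using that both $\tilde x$ and $\mathrm{F}\rbra{\ket{\varphi},\ket{\psi}}$ lie in $\sbra{0,1}$. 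So this gives an $S$-sample algorithm that estimates the squared fidelity to within additive error $\varepsilon$ with success probability at least $2/3$.

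Finally, I would invoke \cref{thm:slb-sqr-fi}, which is valid for $\varepsilon \in \rbra{0, 1/2}$, to conclude that $S = \Omega\rbra{1/\varepsilon^{2}}$. The restriction $\varepsilon \in \rbra{0, 1/4}$ in the statement is exactly what is needed to ensure that the error $\varepsilon/2$ used as input to the square root fidelity estimator stays within a regime where the base squared-fidelity lower bound applies (i.e., $\varepsilon/2 < 1/8 < 1/2$, in particular well below $1/2$), which closes the argument.

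I do not expect a genuine obstacle here: the reduction is tight because $\tilde x + \mathrm{F} \leq 2$ cannot be improved in the worst case (e.g.\ near $\mathrm{F} \approx 1$), so the same constants propagate and no additional quantum or statistical machinery is required beyond the previously established \cref{thm:slb-sqr-fi} and the elementary identity $a^{2} - b^{2} = (a+b)(a-b)$.
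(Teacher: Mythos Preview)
Your proposal is correct and follows exactly the approach the paper takes: the paper's proof simply notes that the argument is the same as in \cref{thm:qlb-sqrt-fi}, i.e., squaring an $\varepsilon/2$-estimate of $\mathrm{F}$ yields an $\varepsilon$-estimate of $\mathrm{F}^2$, and then invokes \cref{thm:slb-sqr-fi}. One minor remark: the range restriction $\varepsilon\in(0,1/4)$ is really there so that the squared-fidelity lower bound (applied at precision $\varepsilon$, not $\varepsilon/2$) is valid, but this does not affect your argument.
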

    \begin{proof}
        The proof is similar to that of \cref{thm:qlb-sqrt-fi}. 
        Here, we just note that for an estimate $\tilde x$ of the square root fidelity to within additive error $\varepsilon/2$, $\tilde x^2$ can be used as an estimate of the squared fidelity to within additive error $\varepsilon$. 
    \end{proof}
    
\end{document}